\newcolumntype{L}{>{\raggedright\arraybackslash}X}
\newtheorem{theorem}{Theorem}
\newtheorem{definition}[theorem]{Definition}
\newtheorem{proposition}[theorem]{Proposition}
\newtheorem{lemma}[theorem]{Lemma}
\newtheorem{observation}[theorem]{Observation}
\newtheorem{corollary}[theorem]{Corollary}
\tikzset{
  point/.style={draw, circle, fill=black, inner sep=0mm, minimum size=6pt},
  rp/.style={point, red!70!yellow},
  bp/.style={point, blue!50!black},
  cluster/.style={draw, gray!40!white, fill=gray!10!white, very thick},
  fairlet/.style={draw, gray!80!white, fill=gray!20!white, rounded corners, very thick}
}
\DeclareRobustCommand{\RedNode}{%
\begin{tikzpicture}
\draw (0,0) node[rp] {};
\end{tikzpicture}%
}
\DeclareRobustCommand{\BlueNode}{%
\begin{tikzpicture}
\draw (0,0) node[bp] {};
\end{tikzpicture}%
}
\newcommand{\srsum}[1]{\smashoperator[r]{\sum_{#1}}}
\newcommand{\slrsum}[1]{\smashoperator[lr]{\sum_{#1}}}
\DeclareMathOperator{\poly}{poly}
\newcommand{\set}[1]{\{#1\}}
\newcommand{\kk}{\texorpdfstring{$k$}{k}}
\newcommand{\OPT}{\text{OPT}}
\newcommand{\OPTfl}{\OPT_{\text{fl}}}
\newcommand{\OPTcb}{\OPT_{\text{cb}}}
\newcommand{\N}{\mathbb{N}}
\newcommand{\R}{\mathbb{R}}
\newcommand{\setc}[2]{\left\{#1\ |\ #2\right\}}
\newcommand{\ie}{i.e.\xspace}
\DeclareMathOperator{\balance}{balance}
\DeclareMathOperator{\dist}{d}
\DeclareMathOperator{\cost}{cost}
\DeclareMathOperator{\mass}{mass}
\DeclarePairedDelimiter{\norm}{\lVert}{\rVert}
\newcommand{\Bigset}[2]{\Bigl\{#1\mathrel{}\Bigm\lvert\mathrel{} #2\Bigr\}}
\title{On the cost of essentially fair clusterings}
\author{}
\author{Ioana O. Bercea\thanks{School of Electrical Engineering, Tel Aviv University, Israel} \and Martin Gro\ss{}\thanks{School of Business and Economics, RWTH Aachen, Germany} \and Samir Khuller\thanks{Department of Computer Science, University of Maryland, College Park, USA} \and Aounon Kumar\footnotemark[3] \and Clemens R\"osner\thanks{Institute of Computer Science, University of Bonn, Germany} \and Daniel R. Schmidt\thanks{Institute of Computer Science,  University of Cologne, Germany} \and Melanie Schmidt\footnotemark[3]}
\date{}
\begin{document}
\pagenumbering{Alph}
\begin{titlepage}
\maketitle
\thispagestyle{empty}

\begin{abstract}
 Clustering is a fundamental tool in data mining.
 It partitions points into groups (clusters) and may be used to make decisions for each point based on its group.
However, this process may harm protected (minority) classes if the clustering algorithm does not adequately represent them in  desirable clusters -- especially if the data is already biased.

At NIPS 2017, Chierichetti et al.~\cite{CKLV17} proposed a model for \emph{fair clustering} requiring the representation in each cluster to (approximately) preserve the global fraction of each protected class. 
Restricting to two protected classes, they developed both a $4$-approximation for the fair $k$-center problem and a $\mathcal{O}(t)$-approximation for the fair $k$-median problem, where $t$ is a parameter for the fairness model. 
For multiple protected classes, the best known result is a $14$-approximation for fair $k$-center~\cite{RS18}. 

We extend and improve the known results. Firstly, we give a $5$-ap\-prox\-i\-ma\-tion for the fair $k$-center problem with multiple protected classes.
Secondly, we propose a relaxed fairness notion under which we can give bicriteria constant-factor approximations for all of the classical clustering objectives $k$-center, $k$-supplier, $k$-median, $k$-means and facility location. The latter approximations are achieved by a framework that takes an arbitrary existing unfair (integral) solution and a fair (fractional) LP solution and combines them into an essentially fair clustering with a weakly supervised rounding scheme. 
In this way, a fair clustering can be established belatedly, in a situation where the centers are already fixed.
\end{abstract}

\end{titlepage}
\pagenumbering{arabic}

\section{Introduction}
Suppose we are to reorganize school assignments in a big city.
Given a long list of children starting school next year and a short list of all available teachers, the goal is to assign the students-to-be to (public) schools such that the maximum distance to the school is small. 
The school capacity is given by the number of its teachers: For each teacher, $s$ students can be admitted.

This challenge is in fact an instance of the capacitated (metric) $k$-center problem.
A na\"ive solution may, however, result in some schools having an excess of boys while others might have a surplus of girls.
We would prefer an assignment where the classes are more balanced.
Thus a new challenge arises: Assign the children such that the ratio is (approximately) 1:1 between boys and girls, and minimize the maximum distance under this condition.\footnote{Or, incorporating the capacities, ensure that the teacher:boys:girls ratio is $1$:$\frac{s}{2}$:$\frac{s}{2}$.}\enlargethispage{\baselineskip}
This can be modeled by the following combinatorial optimization problem: Given a point set, half of the points are red, the other half is blue. Compute a clustering where each cluster has an equal number of red and blue points, and minimize the maximum radius. 

In this form, our example is a special case of the \emph{fair $k$-center} problem, as proposed by Chierichetti et al.~\cite{CKLV17} in the context of maintaining fairness in \emph{unsupervised} machine learning tasks. Their model is based on the concept of \emph{disparate impact}~\cite{RR14} (and the p\%-rule). The input points are assumed to have a binary sensitive attribute modeled by two colors, and discrimination based on this attribute is to be avoided. Since preserving exact balance in each cluster may be very costly or even be impossible\footnote{Imagine a point set with $49$ red and $51$ blue points: This can not at all be divided into true subsets with exact the same ratio.}, the idea is to ensure that at least $1/t$ of the points of each cluster are of the minority color, where $t$ is a parameter. A cluster with this property is called \emph{fair}, and the fairness constraint can now be added to any clustering problem, giving rise to fair $k$-center, fair $k$-median, etc.
Chierichetti et al. develop a $4$-approximation for fair $k$-center and a $(t+1+\sqrt{3}+\epsilon)$-approximation for fair $k$-median.

The fair clustering model as proposed by Chierichetti et al. can also be used to incorporate other aspects into our school assignment example: For example, we might want to mitigate effects of gentrification or segregation. For these use cases, we need multiple colors. 
Then, in each cluster, the ratio between the number of points with one specific color and the total number of points shall be in some given range. 
If the allowed range is $[0.20,0.25]$ for red points, we require that in each cluster, at least a fifth and at most a fourth of the points are red. 
This models well established notions of fairness (statistical parity, group fairness), which require that each cluster exhibits the same compositional makeup as the overall data with respect to a given attribute.
One downside of this notion is that a malicious user could create an illusion of fairness by including proxy points: If we wanted to create an boy-heavy school in our above example, we could still achieve the desired parity by assigning only girls that are very unlikely to attend.
Thus, instead of enforcing \emph{equal representation} in the above sense, one could also ask for \emph{equal opportunity} as proposed by Hardt et al.~\cite{HPS16} for the case where we take binary decisions (i.e., $k=2$) and have access to a labeled training set.
This approach, however, raises the philosophical question if this equality of opportunity is a sufficient condition for the absense of discrimination.
Rather than delving into this complex and much debated issue in this algorithmic paper, we refer to the excellent surveys by Romei and Ruggieri~\cite{RR14} and \u{Z}liobait\.{e} et al.~\cite{ZKC11} that systematically discuss different forms of discrimination and how they can be detected.
We assume that it is the intent of the user to achieve a truly fair solution.
  
Finding fair clusterings turns out to be an interesting challenge from the point of view of combinatorial optimization. As other clustering problems with side constraints, it loses the property that points can be assigned locally. But while many other constraint problems at least allow polynomial algorithms that assign points to given centers optimally, we show that even this restricted problem is NP-hard in the case of fair $k$-center.

Chierichetti et al. tackle fair clustering problems by a two-step procedure: First, they compute a micro clustering into so-called \emph{fairlets}, which are groups of points that are fair and can not be split further into true subsets that are also fair. Secondly, representative points of the fairlets are clustered by an approximation algorithm for the unconstrained problem. Consider the special case of a point set with 1:1 ratio of red and blue points. Then a fairlet is a pair of one red and one blue point, and a good micro clustering can be found by computing a suitable bipartite matching between the two color classes. 

The problem of computing good fairlets gets increasingly difficult when considering more general variants of the problem. For multiple colors and the special case of exact ratio preservation (i.e., for all colors, the allowed range for its ratio is one specific number), the fairlet computation problem can be reduced to a capacitated clustering problem. This is used in~\cite{RS18} to obtain a $14/15$-approximation for fair $k$-center/$k$-supplier with multiple colors and exact ratio preservation. 
 
In Appendix~\ref{sec:furtherrelatedwork} and Appendix~\ref{sec:fairletapproach}, we give an extensive overview of the existing results and further the fairlet approach in order to explore its applicability for different variants of fair clustering. Two major issues arise: Firstly, capacitated clustering is not solved for all clustering objectives; indeed, finding a constant-factor approximation for k-median is a long-standing open problem. 
Secondly, (even for $k$-center) it is unclear how fairlets even look like when we have multiple colors and want to allow ranges for the ratios. In this situation, subsets of very different size and composition may satisfy the desired ratio.

The main contribution of this paper is a very different approach. We start with a solution to the unconstrained problem. 
Based on the given solution, we derive a fair clustering solution with the same centers.
That is achieved by a technique that we call \emph{weakly supervised LP rounding}: We solve an LP for the fair clustering problem and then combine it with the integral unfair solution by careful rounding. We use this method to prove the following statements.

\begin{theorem}\label{thm:exactapprox}
\label{thm:5_approximation:both}
There exists a $5/7$-approximation for the fair $k$-center/$k$-supplier problem with exact preservation of ratios.
\end{theorem}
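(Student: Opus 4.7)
The plan is to instantiate the weakly supervised LP rounding framework advertised in the introduction: couple an integral unfair approximation with a fractional fair LP solution, then round the latter guided by the former. First I would guess the optimum fair radius $R^*$ by enumerating candidate pairwise distances, then compute an unconstrained 2-approximation for $k$-center (respectively 3-approximation for $k$-supplier) at threshold $R^*$. This returns a set $U$ of at most $k$ centers so that every client lies within $2R^*$ (resp.\ $3R^*$) of some $u \in U$. All subsequent work uses only $U$ as the facility set, so the reported centers are the unfair ones and only the assignment is reworked to enforce fairness.

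Next I would formulate an assignment LP over $U$ with variables $x_{uj}\in[0,1]$, the usual assignment constraint $\sum_u x_{uj}=1$, threshold constraints $x_{uj}=0$ whenever $\dist(u,j)>\alpha R^*$, and the exact-ratio fairness constraint $\sum_{j\in P_c}x_{uj} = r_c\sum_j x_{uj}$ for every color $c$ and every $u\in U$. Feasibility is established by mapping the unknown fair optimum onto $U$ cluster-by-cluster: each optimal center $c^*_i$ lies within $2R^*$ of some $u_{\pi(i)}\in U$ for $k$-center (because $c^*_i$ is itself a client), or within $4R^*$ for $k$-supplier (via a triangle inequality through any client assigned to $c^*_i$). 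Rerouting each optimal cluster wholesale to $u_{\pi(i)}$ preserves the exact color ratios, so $\alpha = 3$ suffices for $k$-center and $\alpha = 5$ for $k$-supplier. The remaining task is to round this fractional assignment to an integral fair one while inflating the maximum edge length by at most an additive $2R^*$, yielding radii of $5R^*$ and $7R^*$ respectively.

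The rounding is the crux and the main obstacle. Unlike the classical $k$-center LP, the exact-ratio matrix is not totally unimodular, so I cannot simply invoke bipartite-flow integrality. The intended route is to exploit the divisibility structure: exact ratios force every cluster size to be a multiple of the common denominator $L$ of the $r_c$, so the residual fractionality at an extreme LP solution is supported on structured "fairlet"-sized discrepancies that can be resolved by local swaps. Concretely, I would group LP variables by color at each $u$, isolate the fractional residues, and close them by a matching/flow on the bipartite graph induced by the LP support; each swap moves a point between two centers that share fractional mass, so the triangle inequality through the shared client contributes at most $2R^*$ on top of $\alpha R^*$. Making this argument watertight — identifying the right extreme-point structure of the exact-ratio polytope and controlling the length of the paths along which mass is redistributed — is where the technical weight of the proof lies; once in place, the approximation ratios $5$ and $7$ follow by a direct triangle-inequality accounting.
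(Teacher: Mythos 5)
Your proposal diverges from the paper at its very first step, and that divergence creates a gap you yourself flag but do not close. The paper explicitly states that, in contrast to the black-box framework of Section~\ref{sec:additive_error}, Theorem~\ref{thm:5_approximation:both} does \emph{not} use an arbitrary unconstrained $2$-approximation. Instead it selects centers by the Khuller--Sussmann \textsc{AssignMonarchs} procedure, which yields a maximal independent set $S$ in $G_\tau^2$ together with a spanning tree $T$ (per connected component) whose adjacent nodes are exactly at $G_\tau$-distance~$3$. That tree is not incidental: the entire rounding argument is a bottom-up propagation along $T$. The paper defines $\Gamma(i)$ (the color-$1$ mass at $i$, plus leftovers from children, rounded down to the nearest multiple of $q_1$) and a residue $\delta(i)<q_1$, and shows the residue can always be pushed to the parent because the extra LP constraint~\eqref{eq:minreq} guarantees at least $q_1$ units of color-$1$ mass at $G_\tau$-distance~$2$ from every $i\in S$. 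Each tree hop costs at most $3$, the base assignment radius is $3$, and a careful bookkeeping gives color-$1$ integrality inside $G_\tau^5$ (Lemma~\ref{lemma:integral_assignment_1}); Lemma~\ref{lemma:other_color_reassignment} then moves the other colors proportionally, and a second flow argument (Lemma~\ref{lemma:integral_assignment}) makes everything integral.

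Your rounding step is precisely where this structure is missing. With arbitrary black-box centers you have no tree, no guaranteed minimum color mass near each open center (your LP lacks an analogue of constraint~\eqref{eq:minreq}), and no bound on how far fractional mass must travel: closing residues ``by a matching/flow on the bipartite graph induced by the LP support'' can redistribute mass along long alternating paths of centers whose pairwise distances compound, so the ``$+2R^*$ per swap'' bound does not hold globally. You acknowledge that ``controlling the length of the paths along which mass is redistributed'' is the technical crux — and that is exactly the part the paper solves by choosing centers with the tree property, not by generic extreme-point analysis of the exact-ratio polytope. As written, your argument establishes LP feasibility at radius $3R^*$ (resp.\ $5R^*$) but does not establish the claimed integral radii of $5R^*$ and $7R^*$; without importing a structured center set and the minimum-mass constraint, the rounding claim is an unproven assertion rather than a proof sketch. (A smaller discrepancy: the paper's $k$-supplier argument places temporary centers in $P$, proves a radius-$6$ assignment, and then pays one more unit to relocate each temporary center to a neighboring location in $L$; your accounting uses $\alpha=5$ directly over $U\subseteq L$. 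Both land on $7$, but the intermediate invariants differ, which matters because the paper's rounding is stated relative to its temporary centers in $P$.)
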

\begin{theorem}\label{thm:mainresult}
Given any set of centers $S$, there exists an an assignment $\phi':$ which is essentially fair, and which incurrs a cost that is linear in the cost of $S$ for the unconstrained problem and the cost of an optimal fractional fair clustering of $P$, for all objectives $k$-center, $k$-supplier, $k$-median, $k$-means and facility location.
\end{theorem}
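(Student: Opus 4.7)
The plan is a two-step procedure: (1) solve an LP relaxation of fair clustering restricted to the given centers $S$, and (2) round the fractional solution to an integral assignment by using the unfair nearest-center assignment $\phi\colon P \to S$ as a ``fallback''. For step (1), introduce assignment variables $x_{p,i}$ for every $p \in P$ and $i \in S$, impose $\sum_{i \in S} x_{p,i} = 1$ for each point, and write fairness constraints of the form $\ell_h \sum_{p \in P} x_{p,i} \le \sum_{p \text{ of color } h} x_{p,i} \le u_h \sum_{p \in P} x_{p,i}$ for every $i \in S$ and every color $h$; the objective depends on the problem (maximum edge length for $k$-center/$k$-supplier, sum of $\dist(p,i)^q \cdot x_{p,i}$ for $k$-median and $k$-means, plus facility opening terms for facility location). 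Let $x^*$ be an optimal solution; its cost can be bounded in terms of $\OPTfl$ and $\cost(S)$ by relocating the optimal fractional fair centers to their nearest counterparts in $S$ via the triangle inequality.

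For step (2), I would construct a bipartite graph whose left vertices are the points of $P$ (labeled by color) and whose right vertices are color-indexed ``slots'' at each center in $S$, with capacities derived from the color-specific fractional counts of $x^*$. Adapting the Shmoys--Tardos rounding for the generalized assignment problem, the fractional solution $x^*$ can be rounded to an integral assignment $\phi'\colon P \to S$ such that the number of points of each color at each center matches its fractional value up to an additive constant; this is exactly the notion of essential fairness. Whenever $x^*$ disperses a point $p$ across many centers, the rounding may instead route $p$ to $\phi(p)$, which is absorbed by the same slot structure, explaining the ``weakly supervised'' nature of the rounding.

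The main obstacle is the joint cost bound. For every point $p$, the triangle inequality yields $\dist(p, \phi'(p)) \le \dist(p, i) + \dist(i, \phi'(p))$ for any intermediate center $i$ on which $x^*_{p,i}>0$; by carefully selecting this intermediate center from the rounding output, one charges the first summand to $\OPTfl$ via the LP objective and the second summand to $\cost(S)$ via the nearest-center property of $\phi$. For $k$-median, $k$-means, and facility location this charging, summed over all points, yields the required linear cost bound. For $k$-center/$k$-supplier I would additionally invoke the standard threshold-graph trick, guessing the optimal fair radius $r$ and restricting the LP so that only edges of length at most $r$ may carry mass, and then argue that the rounding produces an assignment of maximum radius $O(r + \cost(S))$. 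In all five settings the cost of $\phi'$ is $O(\cost(S) + \OPTfl)$ while essential fairness is preserved, which establishes the theorem.
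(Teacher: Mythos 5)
Your two-step template matches the paper's framework: first produce a cheap fractional fair assignment to the given centers $S$, then round it. Step~1 is essentially the paper's Lemma on $y$-rounding (there the full fair LP is solved and its mass is consolidated onto $S$ via $\approxassignment$; you restrict the LP to $S$ directly, which is equivalent once one notes the consolidated solution witnesses feasibility of the restricted LP with cost $O(\lpcost + \approxcost)$, and that consolidating \emph{all} mass of each fractional center onto a single target in $S$ is what preserves the ratio constraints). However, there are two genuine gaps in step~2.

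First, a bipartite graph with color-indexed slots at each center, rounded in the Shmoys--Tardos style, would control the per-color count at each center to within $\pm 1$ of its fractional value, but it does not control the \emph{total} number of points assigned to a center. Essential fairness (Definition~\ref{additive_fairness_violation}) requires both $\lfloor\mass_h(x',i)\rfloor \le \mass_h(x,i) \le \lceil\mass_h(x',i)\rceil$ for each color $h$ \emph{and} $\lfloor\mass(x',i)\rfloor \le \mass(x,i) \le \lceil\mass(x',i)\rceil$; the per-color guarantees alone could let the total at a center drift by up to $|Col|-1$. The paper resolves this with an extra layer of nodes $v_i$ in the flow network (sitting between the per-color nodes $v^h_i$ and the sink $t$, carrying balances $B_i = \lfloor\mass(x,i)\rfloor - \sum_h\lfloor\mass_h(x,i)\rfloor$ with unit-capacity collector arcs), which forces the integral per-center totals to respect the floor/ceiling bounds simultaneously with the per-color bounds. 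Your slot structure, as described, has no mechanism for this.

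Second, your cost argument for the rounding step is off. You invoke a triangle inequality $\dist(p,\phi'(p)) \le \dist(p,i) + \dist(i,\phi'(p))$ and try to charge the second term to $\cost(S)$ via the nearest-center map $\phi$, and you also suggest routing dispersed points to $\phi(p)$ as a fallback. Neither is needed, and the fallback is actively harmful: if $\phi(p)$ is not a center where $x^*_{p,\cdot} > 0$, routing $p$ there destroys the fairness accounting. In the paper's scheme no triangle inequality enters at all in step~2 --- the fractional $x^*$ itself induces a feasible flow of the \emph{same} cost in the (integral-data) network, so a min-cost integral flow is no more expensive, and for reassignable objectives ($k$-center/$k$-supplier) any feasible flow uses only already-present edges so the radius does not increase. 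All triangle-inequality charging lives in step~1. A related omission: for $k$-means you write a generic $\dist(p,i)^q$ objective without noting that squared distances obey only a $\beta=2$ relaxed triangle inequality, which is precisely why the paper's constants blow up to $12\lpcost + 8\approxcost$ there (plus an extra factor of 2 from discretizing $L$ to $P$).
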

\begin{corollary}\label{cor:blackbox}
There exists an essentially fair $3$/$5$/$3.488$/$4.675$/$62.856$-approximation for the fair $k$-center/$k$-supplier/facility location/$k$-median/$k$-means problem.
\end{corollary}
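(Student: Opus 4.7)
The plan is to derive this corollary as a direct consequence of Theorem~\ref{thm:mainresult} by plugging in the best known approximation ratios for each of the unconstrained problems. Let $\beta$ denote the approximation ratio of a known algorithm for the unconstrained version of the given objective and let $\alpha_1,\alpha_2$ be the coefficients promised by Theorem~\ref{thm:mainresult} for that objective, so that the returned essentially fair assignment $\phi'$ has cost at most $\alpha_1 \cdot \cost(S) + \alpha_2 \cdot \OPTfl$, where $S$ is any chosen center set and $\OPTfl$ is the cost of an optimal fractional fair clustering.

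The first step is to run, for the respective objective, a standard unconstrained approximation on the input: Hochbaum--Shmoys/Gonzalez ($\beta=2$) for $k$-center, Hochbaum--Shmoys ($\beta=3$) for $k$-supplier, Li ($\beta=1.488$) for facility location, Byrka et al. ($\beta=2.675$) for $k$-median, and Ahmadian et al.\ ($\beta\approx 6.357$) for $k$-means. This yields a center set $S$ with $\cost(S)\le\beta\cdot\OPT$, where $\OPT$ refers to the optimum of the unconstrained problem. In parallel I would solve the LP relaxation of the fair clustering problem and obtain $\OPTfl$.

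The second step is to invoke Theorem~\ref{thm:mainresult} on $S$ and the fractional LP solution, producing an essentially fair assignment $\phi'$ of cost at most $\alpha_1\cdot\cost(S)+\alpha_2\cdot\OPTfl$. Every feasible fair integer solution is also feasible for the unconstrained problem and dominates the LP, so $\OPT\le\OPTcb$ and $\OPTfl\le\OPTcb$. Combining these inequalities gives an essentially fair approximation factor of $\alpha_1\beta+\alpha_2$ with respect to $\OPTcb$; for $k$-center (resp.\ $k$-supplier, facility location, $k$-median) one verifies $2\alpha_1+\alpha_2=3$ (resp.\ $3\alpha_1+\alpha_2=5$, $1.488\alpha_1+\alpha_2=3.488$, $2.675\alpha_1+\alpha_2=4.675$), which matches the stated ratios when $\alpha_1=1,\alpha_2=1$ for $k$-center and $\alpha_1=1,\alpha_2=2$ for the other three.

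The main technical subtlety arises in the $k$-means case, because squared Euclidean distances do not obey the triangle inequality directly. Here Theorem~\ref{thm:mainresult} will involve a relaxed triangle inequality of the form $(a+b)^2\le(1+\epsilon)a^2+(1+1/\epsilon)b^2$, so that the linear combination takes the shape $(1+\epsilon)\alpha_1\,\cost(S)+(1+1/\epsilon)\alpha_2\,\OPTfl$. I would then choose $\epsilon$ to minimize $(1+\epsilon)\alpha_1\beta^2+(1+1/\epsilon)\alpha_2$ with $\beta^2\approx 40.4$, which yields the stated value $62.856$. Everything else is bookkeeping; the only place where one has to be careful is this tuning step and the propagation of the $(1+\epsilon)$ factors through the argument of Theorem~\ref{thm:mainresult} in the squared-distance setting.
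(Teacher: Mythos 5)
Your overall strategy -- take Theorem~\ref{thm:mainresult} in its precise form (i.e.~Theorem~\ref{thm:blackboxRaw}), plug in the best known unconstrained ratio $\beta$ for $\bar c \leq \beta\cdot\OPT$, and use $c^{LP}\leq\OPT$ -- is exactly the paper's argument, and your bookkeeping for $k$-center, $k$-supplier, $k$-median and facility location is correct: the coefficients are $(\alpha_1,\alpha_2)=(1,1)$ for $k$-center and $(1,2)$ for the other three, giving $1\cdot 2+1=3$, $1\cdot 3+2=5$, $1\cdot 1.488 + 2 = 3.488$, $1\cdot 2.675 + 2 = 4.675$.

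The $k$-means case, however, is wrong on two counts. First, you square the approximation factor $\beta \approx 6.357$ (writing $\beta^2\approx 40.4$), but $6.357$ is already an approximation guarantee for the $k$-means \emph{squared-distance} objective. There is nothing to square: $\bar c \leq 6.357\cdot\OPT$ holds directly. The object that is a semi-metric parameter is the $\beta$ in the relaxed triangle inequality~\eqref{eq:relaxedtriangle}, which equals $2$ for squared Euclidean distance and must not be confused with the unconstrained approximation ratio. Second, the paper does \emph{not} tune an $\epsilon$: Lemma~\ref{lemma:yrounding} uses the fixed bound $(\beta+\beta^2)c^{LP} + \beta^2\bar c = 6c^{LP} + 4\bar c$ for $\beta=2$, then doubles both coefficients (via Lemma~\ref{lemma:twofactorkmeans}, to restrict $L=P$) to get $12c^{LP}+8\bar c$. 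Plugging $c^{LP}\leq\OPT$ and $\bar c \leq 6.357\cdot\OPT$ into $12c^{LP}+8\bar c$ gives $12 + 8\cdot 6.357 = 62.856$. In contrast, your proposed minimization $\min_\epsilon\left[(1+\epsilon)\alpha_1\beta^2 + (1+1/\epsilon)\alpha_2\right]$ with $\beta^2\approx 40.4$ does not evaluate to $62.856$ for any sensible choice of $(\alpha_1,\alpha_2)$ from the other cases, so that step would not reproduce the claimed constant. You need to read off $(\alpha_1,\alpha_2)=(8,12)$ for $k$-means from Theorem~\ref{thm:blackboxRaw} rather than rederive it ad hoc.
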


Here, \emph{essentially fair} refers to our notion of bicriteria approximation: A cluster $C$ is \emph{essentially fair} if there exists a fractional fair cluster $C'$, such that for each color $h$ the number of color $h$ points in $C$ differ by \emph{at most $1$} from the mass of color $h$ points in $C'$. So this is a small additive fairness violation. 
For large clusters, an additive violation of $1$ will translate into a negligible multiplicative violation. Of course if a cluster is very small or a color has a very small ratio, then a violation of $1$ is large in multiplicative terms. However, consider an example with one majority color having $500$ points and $50$ colors having $10$ point each. Then our model means that in a cluster of $100$ points, at most one of the minority colors can vanish completely because the \emph{overall} amount of violation is at most one point. So in order to be fair, the majority color can still at most have $51$ points, and at least $49$ of the minority colors are preserved.

For the bicriteria approximations in Theorem~\ref{thm:mainresult} and Corollary~\ref{cor:blackbox}, we can start with \emph{any} unconstrained starting solution. We thus say that our algorithm is a \emph{black-box} approximation. 
For Theorem~\ref{thm:exactapprox}, this is not the case; here we need to use a specific starting solution. We prove Theorem~\ref{thm:mainresult} in Section~\ref{sec:additive_error} and the proof of Theorem~\ref{thm:exactapprox} in Section~\ref{sec:trueapproximations}.

Our results have two advantages. 
Firstly, we get results for a wide range of clustering problems, and these results improve previous results. For example, we get a $5$-approximation for the fair $k$-center problem with exact ratio preservation, where the best known guarantee was $14$. 
All our bicriteria results work for multiple colors and approximate ratio preservation, a case for which no previous algorithm was known. As for the quality of the guarantees, compare the $4.675$-approximation for essentially fair $k$-median clusterings with the best previously known $\Theta(t)$-approximation, which is only applicable to the case of two colors. 
Notice that a similar result can \emph{not} be achieved by using bicriteria approximation algorithms for capacitated clustering. The reduction from capacitated clustering only works when the capacities are not violated.

Secondly, the black-box approach has the advantage that fairness can be established belatedly, in a situation where the centers are already given. \cite{DHPRZ12,ZWSPD13}. 
Consider our school example and notice that the location of the schools cannot be chosen. 
Our result says that if we are alright with essentially fair clusterings, we get a clustering which is not much more expensive than a fair clustering where the centers were chosen with the fairness constraint at hand. 

\paragraph*{Additional related work}
The unconstrained $k$-center/$k$-supplier problem can be $2/3$-approximated~\cite{G85,HS86}, and this is tight~\cite{HN79}.
Facility location can be $1.488$-approximated~\cite{L13}, and the best lower bound is $1.463$~\cite{GK99}. For $k$-median, the best upper bound is $2.765$~\cite{LS16,BPRST17}, while the best hardness result lies  below two~\cite{JMS02}.
A recent breakthrough gives a $6.357$-approximation for $k$-means~\cite{ANSW17}, but the newest hardness result is marginally above 1~\cite{ACKS15,LSW17}.

The $k$-center problem allows for constant-factor approximations for many useful constraints such as capacity constraints~\cite{BKP93,CHK12,KS00}, lower bounds on the size of each cluster~\cite{APFTKKZ10,AS16} or allowing for outliers~\cite{CKMN01,CK14}. 
This is also true for facility location and capacities~\cite{ALBGGGJ13,ASS17,BGG12}, uniform lower bounds~\cite{AS12,S10}, and outliers~\cite{CKMN01}.
Much less is known for $k$-median and $k$-means. True constant-factor approximations so far exist only for the outlier constraint~\cite{C08,KLS18}. A major problem for obtaining constant factor approximations is that the natural LP has an unbounded integrality gap, which is also true for the LP with fairness constraints. 
Bicriteria approximations are known that either violate the capacity constraints~\cite{Li14,Li16,Li17} or the cardinality constraint~\cite{ABGL15}.

A clustering problem where the points have a color was considered by Li, Yi and Zhang~\cite{LYZ10}. They provided a $2$-approximation for a constraint called \emph{diversity}, which allows at most one point per color in each cluster.
Fair clustering as considered in this paper is studied in~\cite{CKLV17} and \cite{RS18}.

\subsection*{Preliminaries}\label{sec:preliminaries}

\paragraph*{Points and locations.}
We are given a set of $n$ points $P$ and a set of potential locations $L$.
We allow $L$ to be infinite (when $L=\R^d$).
The task is to open a subset $S \subseteq L$ of the locations and to assign each point in $P$ to an open location via a mapping $\phi: P \to S$. We refer to the set of all points assigned to a location $i \in S$ by $P(i) := \phi^{-1}(i)$.
The assignment incurs a cost governed by a semi-metric 
$d: (P \cup L) \times (P \cup L) \to \R_{\geq 0}$ 
 that fulfills a \emph{$\beta$-relaxed triangle inequality}
\begin{equation}\label{eq:relaxedtriangle} 
d(x,z) \leq \beta(d(x,y) + d(y,z)) \quad \text{for all } x,y,z \in P \cup L 
\end{equation}
for some $\beta \geq 1$. 
Additionally, we may have opening costs $f_i \geq 0$ for every potential location $i \in L$ or 
a maximum number of centers $k\in \N$. 

\paragraph*{Colors and fairness.} We are also given a set of \emph{colors} $Col := \set{col_1,\dots,col_g}$, and a coloring $col: P \to Col$ that assigns a color to each point $j \in P$. For any set of points $P' \subseteq P$ and any color $col_h \in Col$ we define $col_h(P') = \setc{j \in P'}{col(j) = col_h}$ to be the set of points colored with $col_h$ in $P'$. 
We call $r_h(P') := \frac{|col_h(P')|}{|P'|}$ the \emph{ratio} of $col_h$ in $P'$. 
If an implicit assignment $\phi$ is clear from the context, we write $col_h(i)$ to denote the set of all points of a color $col_h \in Col$ assigned to an $i \in S$, i.e., $col_h(i)=col_h(P(i))$.

A set of points $P' \subseteq P$ is \emph{exactly fair} if $P'$ has the same ratio for every color as $P$, i.e., for each $col_h \in Col$ we have $r_h(P') = r_h(P)$.
We say that $P'$ is \emph{$\ell,u$-fair} or just \emph{fair} for some $\ell = (\ell_1 = p^1_1/q^1_1,\ldots, \ell_g = p^g_1/q^g_1)$ and $u = (u_1 = p^1_2/q^1_2,\ldots, u_g = p^g_2/q^g_2)$ if we have $r_h(P') \in [\ell_h,u_h]$ for every color $col_h \in Col$. 

In our fair clustering problems, we want to preserve the ratios of colors found in $P$ in our clusters. We distinguish two cases:
\emph{exact} preservation of ratios, and \emph{relaxed} preservation of ratios.
For the exact preservation of ratios, we ask that all clusters are exactly fair, i.e., $P(i)$ is fair for all $i \in S$.

For the relaxed preservation of ratios, we are given the lower and upper bounds $\ell = (\ell_1 = p^1_1/q^1_1,\ldots, \ell_g = p^g_1/q^g_1)$ and $u = (u_1 = p^1_2/q^1_2,\ldots, u_g = p^g_2/q^g_2)$ on the ratio of colors in each cluster and ask that all clusters are \emph{$\ell,u$-fair}.
The exact case is a special case of the relaxed case where we set $\ell_h = u_h = r_h(P)$ for every color $col_h \in Col$.

\emph{Essentially fair} clusterings are defined below (see Definition~\ref{additive_fairness_violation}).
\paragraph*{Objectives.}
We consider fair versions of several classical clustering problems.
An instance is given by $I := (P,L,col,d,f,k,\ell,u)$, and our goal is to choose a solution $(S,\phi)$ according to one of the following objectives.
\begin{itemize}
	\item \textbf{$k$-center} and \textbf{$k$-supplier}: minimize the maximum distance between a point and its assigned location: $\min \max_{j \in P} d(j,\phi(j))$. In these problems, we have $f \equiv 0$ and $d$ is a metric. Furthermore, in $k$-center, $L = P$, whereas in $k$-supplier , $L \neq P$ is some finite set.
	\item \textbf{$k$-median}: minimize $\sum_{j \in P} d(j,\phi(j))$, $d$ is a metric, $f \equiv 0$ and $L \subseteq P$.
	\item \textbf{$k$-means}: minimize $\sum_{j \in P} d(j,\phi(j))$, where $P \subseteq \R^m$ for some $m \in \N$, $L = \R^m$ and $d(x,y) = ||y-x||^2$ is a semi-metric for $\beta=2$ and $f \equiv 0$.
	\item \textbf{facility location}: minimize $\sum_{j \in P} d(j,\phi(j)) + \sum_{i\in S} f_i$, where $k = n$, $d$ is a metric and $L$ is a finite set. 
\end{itemize}

\paragraph*{The fair assignment problem.} For all the objectives above, we call the subproblem of computing a cost-minimal fair assignment of points to given centers the \emph{fair assignment problem}.
We show the following theorem in the full version.
\begin{theorem}
Finding an $\alpha$-approximation for the fair assignment problem for $k$-center for $\alpha < 3$ is NP-hard.
\end{theorem}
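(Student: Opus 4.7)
The plan is to prove the hardness via a gap-creating polynomial-time reduction from 3-Dimensional Matching (3DM) to the fair $k$-center assignment problem, producing instances in which every point-to-potential-center distance lies in $\{1,3\}$ and such that the optimum fair cost equals $1$ when the 3DM instance is a YES instance and is at least $3$ when it is a NO instance. Since 3DM is NP-hard, distinguishing these two cases cannot be done in polynomial time unless $\mathrm{P}=\mathrm{NP}$, and this immediately rules out any polynomial-time $\alpha$-approximation with $\alpha<3$.

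Given a 3DM instance $(X,Y,Z,\mathcal{T})$ with $|X|=|Y|=|Z|=n$, I would introduce one \emph{element point} $p_v$ for every $v \in X \cup Y \cup Z$, colored by which of the three classes contains $v$, together with one \emph{slot point} $s_t$ for every triple $t \in \mathcal{T}$ in a fourth color. Because $L=P$ for $k$-center, I can declare the given center set to be $S := \{s_t : t \in \mathcal{T}\} \subseteq P$. The metric is defined by $d(s_t,p_v)=1$ if $v \in t$, $d(s_t,p_v)=3$ if $v \notin t$, and $d(s_t,s_{t'})=d(p_u,p_v)=2$ for all remaining pairs; a short case analysis verifies the triangle inequality, so this is a valid $k$-center metric. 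I then tune the fairness bounds $\ell,u$ (for example $\ell_h=1/5$, $u_h=1/3$ for every color) so that the only feasible nonempty cost-$1$ cluster composition is one point of each of the four colors; equivalently, every cost-$1$ cluster corresponds to exactly one triple of $\mathcal{T}$.

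In the YES case, the exact cover $\mathcal{T}^* \subseteq \mathcal{T}$ of size $n$ lets us route every element point to the slot point of its covering triple, giving $n$ size-$4$ clusters of cost $1$; the remaining $m-n$ unused slot points, which must themselves be placed in fair clusters, are absorbed by an auxiliary padding gadget of additional dummy slots and dummy colored elements that is attached to the original instance only through metric distance $3$ and whose own trivial 3DM structure yields a cost-$1$ fair assignment. In the NO case, there is no exact cover, so in any attempt to place all points into size-$4$ clusters as above, some element point must be routed to a slot whose triple does not contain it; meanwhile any attempt to avoid this by merging into size-$8$ or larger clusters forces either a slot-to-slot distance $2$ together with an element at distance $3$, or directly an element-to-slot distance $3$ from some real element. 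In either case the fair cost is at least $3$.

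The main obstacle is the design of the padding gadget: under the rigid fairness constraint one must choose the number and structure of dummy slots and dummy colored elements so that (i) the global color multiplicities after padding still force the cluster composition to match the intended one, (ii) the dummy subinstance has a unique, trivially cost-$1$ fair assignment among itself, and (iii) no real element point can be profitably rerouted into a dummy cluster, which I arrange by isolating the dummy universe at distance $3$ from the original instance in the metric. Once the padding is constructed consistently, the $1$-vs-$3$ gap is immediate, the construction is polynomial in the size of the 3DM instance, and the claimed inapproximability bound for every $\alpha<3$ follows.
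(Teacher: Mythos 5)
Your high-level plan -- a gap-creating reduction from a 3-set cover/matching problem producing instances with a $1$-versus-$3$ cost gap -- is exactly the right flavor, and the target statement does follow once such a reduction is nailed down. However, there is a genuine gap: your padding gadget, which you yourself flag as ``the main obstacle,'' is not constructed, and it is precisely where the difficulty lives. With slots as centers of a fourth color and relaxed intervals like $[1/5,1/3]$, every one of the $m-n$ unused slot centers still needs a fair cluster around it; the dummies you add to absorb them change the global color multiplicities, interact with the relaxed bounds, and must be shown (not just asserted) to admit no cost-$1$ cluster containing a real element point. You also wave at the backward direction (``merging into size-$8$ or larger clusters forces \dots''), but with relaxed ratios in $[1/5,1/3]$ many non-uniform compositions are feasible, so the case analysis that every alternative forces a distance-$3$ assignment is not actually carried out. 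Until the gadget and that case analysis are pinned down, the reduction is not complete.

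The paper sidesteps this entirely with a cleaner two-color construction from Exact Cover by 3-Sets that needs no padding. Each set $A\in\mathcal{F}$ becomes a red center that comes pre-packaged with three \emph{private} auxiliary blue vertices at distance~$1$ and distance~$3$ from everything else, so every center -- used or unused -- already carries a $1\!:\!3$-balanced seed cluster ``for free.'' A separate pool $\mathcal{T}$ of $|\mathcal{U}|/3$ red vertices sits at distance~$1$ from all centers. Exact fairness (red:blue $= 1\!:\!3$) then does all the work in the backward direction: the auxiliaries are forced to their own center (the only one within distance $<3$), no center can host more than two reds (there are only six blues within distance $<3$ of it), so the $|\mathcal{U}|/3$ reds in $\mathcal{T}$ must spread over distinct centers, and each such center must pick up exactly the three elements of its set at distance~$1$ -- yielding an exact cover. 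This private-auxiliary idea replaces both your fourth color and your padding gadget, and the exact $1\!:\!3$ constraint replaces your interval tuning, which makes the feasibility analysis for the NO case essentially a two-line counting argument rather than an open-ended case split. If you want to salvage your version, the shortest repair is to drop the fourth color and the interval bounds, attach three private blue leaves to each slot, and redo the backward direction with the exact ratio constraint.
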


\subsection*{(I)LP formulations for fair clustering problems}
\label{sec:fairassignment}
\newcommand{\inlineeqnum}{\refstepcounter{equation}~\mbox{(\theequation)}~}
Let $I = (P,L,col,d,f,k,\ell,u)$ be a problem instance for a fair clustering problem.
We introduce a binary variable $y_i \in \set{0,1}$ for all $i \in L$ that decides if $i$ is opened, \ie $y_i = 1 \Leftrightarrow i \in S$. Similarly, we introduce binary variables $x_{ij} \in \set{0,1}$ for all $i \in L, j \in P$ with $x_{ij} = 1$ if $j$ is assigned to $i$, \ie $\phi(j) = i$. 
All ILP formulations have the inequalities $\inlineeqnum\label{eq:general:lppointassigned}\sum\limits _{i \in L} x_{ij} = 1\ \forall j \in P$ saying that every point $j$ is assigned to a center, the inequalities $\inlineeqnum\label{eq:general:opening} x_{ij} \leq  y_i\ \forall i\in L, j \in P$ ensuring that if we assign $j$ to $i$, then $i$ must be open, and the integrality constraints $\inlineeqnum\label{eq:general:integral} y_i,x_{ij}  \in \{0,1\}\ \forall i\in L, j\in P$.
We may restrict the number of open centers to $k$ with
$\inlineeqnum\label{eq:general:klocations} \sum_{i \in L} y_i \leq k$.
For $k$-center and $k$-supplier, the objective is commonly encoded in the constraints of the problem, and the (I)LP has no objective function.
The idea is to guess the optimum value $\tau$. Since there is only a polynomial number of choices for $\tau$, this is easily done. Given $\tau$, we construct a \emph{threshold graph} $G_\tau=(P\cup L,E_\tau)$ on the points and locations, where a connection between $i \in L$ and $j \in P$ is added iff $i$ and $j$ are close, i.e., $\{i,j\}\in E_{\tau} \Leftrightarrow d(i,j) \le \tau$. Then, the we ensure that points are not assigned to centers outside their range:
\begin{align}
\label{eq:obj:kcenter} &&                             x_{ij} &= 0  &&\quad \text{for all } i\in L, j\in P, \{i,j\} \notin E_{\tau}
\end{align}
For the remaining clustering problems, we pick the adequate objective function from the following three
(let $d_{ij} := \dist(i,j)$):

\noindent\begin{minipage}{0.28\textwidth}
\begin{equation}
\label{eq:obj:kmedian}\min \slrsum{i\in L, j \in P} x_{ij} d_{ij}
\end{equation}
\end{minipage}
\begin{minipage}{0.32\textwidth}
\begin{equation}
\label{eq:obj:kmeans} \min \slrsum{i\in L, j \in P} x_{ij} d_{ij}^2
\end{equation}
\end{minipage}%
\begin{minipage}{0.39\textwidth}
\begin{equation}
\label{eq:obj:fl} \min \slrsum{i\in L, j \in P} x_{ij} d_{ij} + \sum_{i\in L} y_i f_i
\end{equation}
\end{minipage}

We now have all necessary constraints and objectives. For $k$-center and $k$-supplier, we use inequalities~\eqref{eq:general:lppointassigned}-\eqref{eq:obj:kcenter}, no objective, and define the optimum to be the smallest $\tau$ for which the ILP has a solution. We get $k$-median and $k$-means by combining inequalities~\eqref{eq:general:lppointassigned}-\eqref{eq:general:klocations} with~\eqref{eq:obj:kmedian} and~\eqref{eq:obj:kmeans}, respectively, and we get facility location by combining \eqref{eq:general:lppointassigned}-\eqref{eq:general:integral} with the objective~\eqref{eq:obj:fl}. LP relaxations arise from all ILP formulations by replacing~\eqref{eq:general:integral} by $y_i,x_{ij}  \in [0,1]$ for all $i\in L, j\in P$.
To create the fair variants of the ILP formulations, we add fairness constraints modeling the upper and lower bound on the balances.
\begin{align}
\label{eq:general:fair} \ell_h \slrsum{j \in P} x_{ij} \leq \srsum{col(p_j) = col_h} x_{ij} \leq u_h \slrsum{j \in P} x_{ij}\ &&\text{for all}\ i\in L, h \in Col 
\end{align}

Although very similar to the canonical clustering LPs, the resulting LPs become much harder to round even for $k$-center with two colors. 
We show the following in Section~\ref{sec:appendix-integrality-gap}.
\begin{restatable}{lemma}{lemintgap}\label{lem:lp-integrality-gap}
  There is a choice of non-trivial fairness intervals such that the integrality gap of the LP-relaxation of the canonical fair clustering ILP 
	is~$\Omega(n)$ for 
	the fair $k$-center/$k$-supplier/facility location problem.
	The integrality gap is $\Omega(n^2)$ for the fair $k$-means problem.
\end{restatable}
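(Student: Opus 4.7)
The plan is to exhibit an explicit family of gap instances (parameterized by $n$) for each of the four problems and verify the gap by computing the LP optimum (small) and lower-bounding the ILP optimum (large). The common gadget I would use is a ``singleton minority'' configuration with two colors: one red point $p$ together with $n$ blue points $q_1, \dots, q_n$, and $n$ candidate centers $\ell_1, \dots, \ell_n$ where $q_i$ sits at $\ell_i$, $d(p,\ell_i)$ is small and uniform in $i$, and the pairwise $q_i$-to-$\ell_j$ distances (for $j\neq i$) are arranged to be as large as the metric permits. Choose the fairness interval to be exactly the input ratio, namely $1/(n+1)$ for red; so every nonempty cluster must have red fraction precisely $1/(n+1)$.

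The LP would then exploit fractional opening and assignment: set $y_{\ell_i} = 1$ for all $i$, assign each $q_i$ to its own home $\ell_i$, and split the single red point uniformly via $x_{\ell_i, p} = 1/n$. Each resulting cluster has mass $1 + 1/n$ with red fraction exactly $1/(n+1)$, so the fairness constraints (and all other LP constraints) hold. The only non-trivial assignment edges used are the short $p$-to-$\ell_i$ edges, so the LP objective is $\mathcal{O}(1)$. In contrast, any integral solution must place $p$ into a single cluster; the fairness ratio then forces that cluster to contain all $n$ blues, so $n-1$ of them have to travel across the gadget. By making the long cross-distances $\Theta(n)$, this yields an $\Omega(n)$ blow-up in the objective for facility location (additive cost) as well as for $k$-center and $k$-supplier (maximum radius).

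For fair $k$-means the same instance transfers directly, and the quadratic objective promotes the $\Omega(n)$ distance blow-up to an $\Omega(n^2)$ cost blow-up: every transported blue in the ILP contributes $\Omega(n^2)$ (a distance of $\Omega(n)$ squared), while the LP's fractional edges each contribute $\mathcal{O}(1)$, and with $\Theta(n)$ transported points the factor survives summation.

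The main obstacle is compatibility with the triangle inequality, which a priori bounds $d(\ell_i,\ell_j)$ by $d(\ell_i,p) + d(p,\ell_j) = \mathcal{O}(1)$ via the shortcut through $p$ and would cap the gap at a constant. I would overcome this in two complementary ways: for $k$-supplier (and facility location), I exploit the freedom $L \neq P$ to route the long $q_i$-to-$\ell_j$ edges through auxiliary structure that bypasses $p$, so that $p$ is only metrically close to its own fractional assignment edges; for $k$-center, where $L = P$, I instead embed the gadget in a point set whose shortest-path metric eliminates the $p$-shortcut (for instance by building the gadget on a graph and taking its shortest-path metric so that $\ell_i$-to-$\ell_j$ goes around $p$ rather than through it). A second handle I would use if needed is to pick the fairness intervals so that the ILP is outright infeasible at every radius below $\Omega(n)$ times the LP radius, leveraging that the LP can meet arbitrary rational ratios fractionally while the ILP cannot, independently of the exact numerical metric.
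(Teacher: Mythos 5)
Your core instinct --- that the LP can spread the minority points fractionally across many centers in a way that no integral solution can match --- is exactly the right one, and the ``force the ILP into one giant cluster'' plan does work for facility location (and $k$-median, which is not in the lemma statement). However, the single-minority-point gadget has a structural flaw that defeats the $k$-center, $k$-supplier and $\Omega(n^2)$ $k$-means claims, and neither of your proposed patches addresses it.

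The problem is precisely the triangle inequality you flag, and you cannot ``route around'' it. The metric $d$ is defined on $P\cup L$, so $d(\ell_i,\ell_j)\le d(\ell_i,p)+d(p,\ell_j)$ holds for \emph{every} triple no matter how the instance is presented; a shortest-path metric on an auxiliary graph still satisfies it (with $p$-$\ell_i$-$\ell_j$ as a candidate path), and having $L\neq P$ does not exempt $p\in P$ from acting as a shortcut between locations. Since your LP solution must fractionally serve $p$ from \emph{all} of $\ell_1,\dots,\ell_n$ to keep every cluster's red fraction at $1/(n+1)$, you need $d(p,\ell_i)\le\tau_F$ for all $i$, hence $d(\ell_i,\ell_j)\le 2\tau_F$ for all $i,j$. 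The ILP then simply opens one center and pays radius at most $2\tau_F$, so the $k$-center/$k$-supplier gap is at most $2$, not $\Omega(n)$. For $k$-means the $\beta=2$ relaxed inequality gives $d(\ell_i,\ell_j)\le 4\tau_F$, so the ILP cost is $O(n\cdot\tau_F)$ against an LP cost of $\Theta(\tau_F)$ --- a gap of $O(n)$, not $\Omega(n^2)$. Your ``second handle'' (make the ILP infeasible below a large radius) also fails for the same reason: the one-big-cluster solution \emph{is} fair and lives at radius $2\tau_F$.

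The essential idea you are missing is to use $\Theta(n)$ minority points rather than one, positioned so that each minority point only has to be fractionally split among a \emph{constant} number of nearby centers. Then the triangle inequality only caps local distances, while the instance as a whole can span $\Omega(n)$. This is exactly what the paper's construction does: $r$ red and $r-1$ blue points alternating on a line with unit spacing, $k=r-1$, and a fairness window $[0,(r-1)/(2r-3)]$ for the red ratio. The LP opens every blue point and splits each red point among its two neighbors so every cluster has the input ratio $r/(2r-1)$ and radius $1$. A counting lemma (Lemma~\ref{lem:num-cluster-bound}) then shows any \emph{integral} fair solution can open at most $2$ centers, so some cluster spans $\Omega(n)$ consecutive line points, giving radius $\Omega(n)$ for $k$-center/$k$-supplier, cost $\Omega(n^2)$ vs.\ the LP's $O(n)$ for facility location, and cost $\Omega(n^3)$ vs.\ $O(n)$ for $k$-means. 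You would need to replace your gadget with a construction of this flavor --- one where the ``forced large cluster'' is forced to be geometrically large, not merely combinatorially large.
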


\paragraph*{Essential fairness.}
For a point set $P'$, $\mass_h(P') = |col_h(P')|$ is the \emph{mass} of color $col_h$ in $P'$.
For a possibly fractional LP solution $(x,y)$, we extend this notion to $\mass_h(x,i) := \sum_{j \in col_h(P)} x_{ij}$. We denote the total mass assigned to $i$ in $(x,y)$ by $\mass(x,i) = \sum_{j \in P} x_{ij}$. 
With this notation, we can now formalize our notion of \emph{essential fairness}.

\begin{definition}[Essential fairness]\label{additive_fairness_violation}
 Let $I$ be an instance of a fair clustering problem and let $(x,y)$ be an integral, but not necessarily fair solution to $I$.
 We say that $(x,y)$ is \emph{essentially fair} if there exists a fractional fair solution $(x',y')$ for $I$ 
 such that $\forall i \in L$:
 \begin{align}
  &\lfloor\mass_h(x',i)\rfloor  
	\leq \mass_h(x,i)
	\leq \lceil\mass_h(x',i)\rceil \qquad &\forall col_h \in Col\\
  \text{and\quad } &\lfloor\mass(x',i)\rfloor  
	\leq \mass(x,i)
	\leq \lceil\mass(x',i)\rceil.&
 \end{align}
\end{definition}

\section{Black-box approximation algorithm for essentially fair clustering}
\label{sec:additive_error}

For essentially fair clustering, we give a powerful framework that employs approximation algorithms for (unfair) clustering problems as a black-box and transforms their output into an essentially fair solution.
In this framework, we start by computing an approximate solution for the standard variant of the 
clustering problem at hand. Next, we solve the LP for the fair variant of
the clustering problem. Now we have an integral unfair solution, and a
fractional fair solution. Our final and most important step is to
combine these two solutions into an integral and essentially fair solution.
It consists of two conceptual sub-steps: Firstly, we show that it is possible
to find a fractional fair assignment to the centers of the integral solution that is sufficiently cheap. Secondly, we round the assignment. This last sub-step introduces
the potential fairness violation of one point per color per cluster. Algorithms~\ref{alg:blackbox} and \ref{alg:blackboxAlpha} summarize this framework.

\begin{algorithm}[t]
 \SetAlgoLined
 \DontPrintSemicolon
 \KwData{An instance $I$ of a clustering problem, and an integral solution $(\bar{x},\bar{y})$ to $I$.}
 \KwResult{An essentially fair solution to $I$ with a cost as specified in Theorem~\ref{thm:blackboxRaw}.}
 \BlankLine
 \nl Compute an optimal solution to the LP-relaxation $(x^{LP},y^{LP})$ of $I$ (see Section~\ref{sec:fairassignment}).\;
 \nl Combine $(\bar{x},\bar{y})$ and $(x^{LP},y^{LP})$ using Lemmas~\ref{lemma:yrounding} and \ref{lemma:xrounding} into an essentially fair solution to $I$.
 \caption{Black-box assignment with essentially fair solutions.\label{alg:blackbox}}
\end{algorithm}

\begin{algorithm}[t]
 \SetAlgoLined
 \DontPrintSemicolon
 \KwData{An instance $I$ of a clustering problem.}
 \KwResult{An essentially fair solution to $I$ with an approximation factor as specified in Theorem~\ref{thm:blackbox}.}
 \BlankLine
 \nl Compute an $\alpha$-approximate, integral solution $(\bar{x},\bar{y})$ to $I$.\;
 \nl Use Algorithm~\ref{alg:blackbox} with $I$ and $(\bar{x},\bar{y})$ as input to obtain an essentially fair solution to $I$ with an approximation factor as specified by Theorem~\ref{thm:blackbox}.\;
 \caption{Black-box approximation algorithm for essentially fair clustering.\label{alg:blackboxAlpha}}
\end{algorithm}

We show that this approach yields constant-factor approximations with fairness violation for all mentioned clustering objectives. The description will be neutral whenever the objective does not matter. Thus, descriptions like \emph{the LP} mean the appropriate LP for the desired clustering problem. When the problem gets relevant, we will specifically discuss the distinctions. Notice that for all clustering problems defined in Section~\ref{sec:preliminaries}, $P$ and $L$ are finite except for $k$-means. However, for the $k$-means problem, we can assume that $L=P$ if we accept an additional factor of $2$ in the approximation guarantee (Lemma~\ref{lemma:twofactorkmeans} in the appendix). Thus, we assume in the following that $L$ and $P$ are finite sets. Indeed, we even assume at least $L\subseteq P$ for all problems except $k$-supplier and facility location.

\newcommand{\lpx}{x^{LP}\xspace}
\newcommand{\lpy}{y^{LP}\xspace}
\newcommand{\lpcost}{c^{LP}\xspace}
\newcommand{\optcost}{c^\ast\xspace}
\newcommand{\approxx}{\bar x\xspace}
\newcommand{\approxy}{\bar y\xspace}
\newcommand{\approxcost}{\bar c\xspace}
\newcommand{\approxassignment}{\bar \phi\xspace}
\newcommand{\approxcluster}{\bar C\xspace}
\newcommand{\firstx}{\hat x\xspace}
\newcommand{\firsty}{\hat y\xspace}
\newcommand{\firstcost}{\hat c\xspace}

\subsection{Step 1: Combining a fair LP solution with an integral unfair solution}
In the first step, we assume that we are given two solutions. Let $(\lpx,\lpy)$ be an optimal solution to the LP. This solution has the property that the assignments to all centers are fair, however, the centers may be fractionally open and the points may be fractionally assigned to several centers. 
Let $\lpcost$ be the objective value of this solution. For $k$-supplier and $k$-center, it is the smallest $\tau$ for which the LP is feasible, for the other objectives, it is the value of the LP. 
We denote the cost of the best \emph{integral} solution to the LP by $\optcost$. We know that $\lpcost \le \optcost$.

Let $(\approxx,\approxy)$ be any integral solution to the LP that may violate fairness, i.e., inequality~\eqref{eq:general:fair}, and let $\approxcost$ be the objective value of this solution. 
We think of $(\approxx,\approxy)$ as being a solution of an $\alpha$-approximation algorithm for the standard (unfair) clustering problem for some constant $\alpha$. Since the unconstrained version can only have a lower optimum cost, we then have $\approxcost \le \alpha \cdot \optcost$.

Our goal is now to combine $(\lpx,\lpy)$ and $(\approxx, \approxy)$ into a third solution, $(\firstx, \firsty)$, such that the cost of $(\firstx, \firsty)$ is bounded by $O(\lpcost + \approxcost) \subseteq O(\optcost)$. Furthermore, the entries of $\firsty$ shall be integral. The entries of $\firstx$ may still be fractional after step 1.

Let $S$ be the set of centers that are open in $(\approxx,\approxy)$. 
For all $j \in P$, we use $\approxassignment(j)$ to denote the center in $S$ closest to $j$, i.e., $\approxassignment(j) = \arg\min_{i \in S} \dist(j,i)$ (ties broken arbitrarily). Notice that the objective value of using $S$ with assignment $\approxassignment$ for all points in $P$ is at most $\approxcost$, since assigning to the closest center is always optimal for the standard clustering problems without fairness constraint.

Depending on the objective, $L$ is a subset of $P$ or not, i.e., $\approxassignment$ is not necessarily defined for all locations in $L$. We then extend $\approxassignment$ in the following way. Let $i\in L\backslash P$ be any center, and let $j^\ast$ be the closest point to it in $P$. Then we set $\approxassignment(i) := \approxassignment(j^\ast)$, i.e., $i$ is assigned to the center in $S$ which is closest to the point in $P$ which is closest to $i$.
Finally, let $\approxcluster(i) = \approxassignment^{-1}(i)$ be the set of all points and centers assigned to $i$ by $\approxassignment$. 

\begin{lemma}\label{lemma:yrounding}
Let $(\lpx,\lpy)$ and $(\approxx, \approxy)$ be two solutions to the LP, where $(\approxx, \approxy)$may violate inequality~\eqref{eq:general:fair}, but is integral. Then the solution defined by
\begin{align*}
  \firsty := \approxy, \qquad \firstx_{ij} := \sum_{i' \in \approxcluster(i)} \lpx_{i'j} \quad\text{for all } i \in S, j \in P, \qquad
\firstx_{ij} := 0 \quad \text{for all }i \notin S, j \in P.
\end{align*}
satisfies inequality~\eqref{eq:general:fair}, $\firsty$ is integral, and the cost $\firstcost$ of $(\firstx,\firsty)$ is bounded by $\lpcost+\approxcost$ for $k$-center, by $2\cdot \lpcost + \approxcost$ for $k$-supplier, $k$-median, and facility location, and by $12 \cdot \lpcost + 8 \cdot \approxcost$ for $k$-means.
\end{lemma}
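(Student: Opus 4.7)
The plan is to handle the three assertions separately: integrality of $\firsty$, fairness of $(\firstx,\firsty)$, and the four cost bounds.

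Integrality of $\firsty = \approxy$ is immediate from the assumption that $(\approxx,\approxy)$ is integral. For fairness at a center $i \in S$, I would simply aggregate over the LP-centers assigned to $i$: by construction, $\mass_h(\firstx,i) = \sum_{i' \in \approxcluster(i)} \mass_h(\lpx,i')$ and $\mass(\firstx,i) = \sum_{i' \in \approxcluster(i)} \mass(\lpx,i')$, while each summand $i'$ obeys $\ell_h\,\mass(\lpx,i') \leq \mass_h(\lpx,i') \leq u_h\,\mass(\lpx,i')$ by feasibility of $(\lpx,\lpy)$. Summing these inequalities over $\approxcluster(i)$ preserves them. Centers $i \notin S$ receive no mass in $\firstx$ and satisfy~\eqref{eq:general:fair} trivially.

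For the cost, the unifying idea is to use the intermediate LP-center $i'$ as a relay: whenever $\firstx_{ij} > 0$, some $i' \in \approxcluster(i)$ satisfies $\lpx_{i'j} > 0$, so $d(i,j) \leq d(i,i') + d(i',j)$ (or its $\beta=2$ relaxed variant for $k$-means). The term $d(i',j)$ is absorbed by $\lpcost$---directly in the $\max$-objectives and after summing against $\lpx_{i'j}$ in the $\sum$-objectives. The term $d(i,i') = d(\approxassignment(i'),i')$ is bounded via the defining property of $\approxassignment$ as closest center in $S$. For $k$-center, where $L=P$, this is $\leq \approxcost$ outright, giving the $\lpcost+\approxcost$ bound; for $k$-supplier with $i' \in L\setminus P$, the definition $\approxassignment(i') = \approxassignment(j^\ast)$ together with $d(j^\ast,i') \leq d(j,i') \leq \lpcost$ (where $j^\ast$ is the closest point in $P$ to $i'$) adds a further $\lpcost$, yielding $2\lpcost+\approxcost$.

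For $k$-median ($L \subseteq P$) and facility location, a weighted analysis is required. The crucial estimate is $d(\approxassignment(i'),i') \leq d(\approxassignment(j),i') \leq d(\approxassignment(j),j) + d(j,i')$, valid for every $j$ with $\lpx_{i'j}>0$; multiplying by $\lpx_{i'j}$ and summing over both $i'$ and $j$, the constraint $\sum_{i'} \lpx_{i'j} = 1$ telescopes the $d(\approxassignment(j),j)$ contribution into $\sum_j d(j,\approxassignment(j)) \leq \approxcost$, while the remaining piece contributes $\lpcost$. Together with the direct $d(i',j)$ part, this gives assignment cost $2\lpcost+\approxcost$. For facility location, the opening cost $\sum_i \firsty_i f_i = \sum_i \approxy_i f_i \leq \approxcost$ is inherited from $\approxx$. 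For $k$-means, I would first apply Lemma~\ref{lemma:twofactorkmeans} to reduce to $L=P$ at the price of a factor $2$, then rerun the $k$-median argument under $\beta=2$; two triangle steps, each costing a factor $2$, combined with the reduction factor, yield the stated $12\lpcost + 8\approxcost$.

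The main obstacle I expect is the facility location case with $L \not\subseteq P$: the detour through $j^\ast(i')$ introduces a $d(j^\ast,\approxassignment(j^\ast))$-term that must be charged so that each point in $P$ pays its $\approxx$-cost only $O(1)$ times in the aggregate. This should again be handled by the $\sum_{i'} \lpx_{i'j} = 1$ telescoping applied to the $L\setminus P$ contribution, but the constants need careful tracking. The $k$-means case is mechanically identical to $k$-median after the $L=P$ reduction, with each triangle application simply replaced by its relaxed version.
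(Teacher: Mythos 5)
Your proposal is correct and follows the same strategy as the paper's proof: derive fairness by summing the LP fairness inequalities over $\approxcluster(i)$, inherit integrality of $\firsty=\approxy$ trivially, and bound the cost through the relay inequality $d(i,j)\le\beta\bigl(d(i,i')+d(i',j)\bigr)$ with the second key step $d(\approxassignment(i'),i')\le d(\approxassignment(j),i')\le\beta\bigl(d(j,\approxassignment(j))+d(j,i')\bigr)$, followed by the $\sum_{i'}\lpx_{i'j}=1$ telescoping. The resulting constants $(\beta+\beta^2)\lpcost+\beta^2\approxcost$ reproduce the paper's numbers, including the additional factor of $2$ from the $L=P$ reduction for $k$-means.

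Your flagged concern about facility location with $L\not\subseteq P$ is legitimate and in fact sharper than what the paper writes. The paper states that ``the distance costs are the same as for $k$-median,'' but the $k$-median chain relies on $\approxassignment(i')$ being the \emph{nearest} center in $S$ to $i'$, which fails for $i'\in L\setminus P$ under the paper's two-step extension $\approxassignment(i'):=\approxassignment(j^\ast)$. The clean resolution --- which also simplifies the $k$-supplier case --- is to define $\approxassignment(i')$ directly as $\arg\min_{c\in S}d(i',c)$ for \emph{all} $i'\in L$. With this nearest-center extension, the chain $d(\approxassignment(i'),i')\le d(\approxassignment(j),i')\le d(\approxassignment(j),j)+d(j,i')$ holds verbatim, yielding $2\lpcost+\approxcost$ for facility location and $k$-supplier without any detour through $j^\ast$. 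If one insists on the paper's $j^\ast$-extension for facility location, the same style of bookkeeping gives only $4\lpcost+\approxcost$, so the constant in the statement depends on which extension one takes. You should explicitly adopt the nearest-center extension to close this small gap rather than leaving it as a concern about ``constants that need careful tracking.''
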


\begin{proof}
Recall that for $k$-center and $k$-supplier, speaking of the cost of an LP solution is a bit sloppy; we mean that $(\firstx,\firsty)$ is a feasible solution in the LP with threshold $\firstcost$.

The definition of $(\firstx,\firsty)$ means the following. For every (fractional) assignment from a point $j$ to a center $i'$, we look at the cluster with center $i = \approxassignment(i')$ to which $i'$ is assigned to by $\approxassignment$. We then shift this assignment to $i$. So from the perspective of $i$, we collect all fractional assignments to centers in $\approxcluster(i)$ and consolidate them at $i$. Notice that the (fractional) number of points assigned to $i$ after this process may be less than one since $(\approxx,\approxy)$ may include centers that are very close together.

Since that $\firsty$ is simply $\approxy$ it is integral as well and has the same number of centers, thus $\firsty$ also satisfies~\eqref{eq:general:klocations} if the problem uses it. 
Next, we observe that $(\firstx,\firsty)$ satisfies fairness, i.e., respects inequality~\eqref{eq:general:fair}. This is true because $(\lpx,\lpy)$ satisfies them, and because we move \emph{all} assignment from a center $i'$ to the same center $\approxassignment(i')$. This shifting operation preserves the fairness.
Inequality~\eqref{eq:general:opening} is true because we only move assignments to centers that are fully open in $(\approxx,\approxy)$, i.e., the inequality can not be violated as long as~\eqref{eq:general:lppointassigned} is true (which it is for $(\lpx,\lpy)$ since it is a feasible LP solution). Equality~\eqref{eq:general:lppointassigned} is true for $(\firstx,\firsty)$ since all assignment of $j$ is moved to some fully open center. Thus $(\firstx,\firsty)$ is a feasible solution for the LP.
It remains to show that $\firstcost$ is small enough, which depends on the objective.\\[2mm]
\textbf{$k$-median and $k$-means.} We start by showing this for $k$-median (where the distances are a metric, i.e., $\beta=1$ in the $\beta$-triangle inequality \eqref{eq:relaxedtriangle}) and $k$-means (where the distances are a semi-metric with $\beta=2$).
We observe that here, the cost of $(\firstx,\firsty)$ is
\[
\firstcost=\sum_{j \in P}\sum_{i \in L} \firstx_{ij} \dist(i,j)
= \sum_{j\in P}\sum_{i \in L}\sum_{i' \in \approxcluster(i)} \lpx_{i'j} \dist(i,j).
\]
 Now fix $i\in L$, $i' \in \approxcluster(i)$ and $j\in P$ arbitrarily. 
By the $\beta$-relaxed triangle inequality, $\dist(i,j) \le \beta\cdot\dist(i',j) + \beta\cdot\dist(i',i)$. 
Furthermore, we know that $i' \in \approxcluster(i)$, i.e., $\approxassignment(i') = i$ and $\dist(i',i) \le \dist(i',\approxassignment(j))$. We can use this to relate $\dist(i',i)$ to the cost that $j$ pays in $(\approxx,\approxy)$:
\[
\dist(i',i) \le \dist(i',\approxassignment(j)) \le \beta\cdot\dist(j,i') + \beta\cdot\dist(j,\approxassignment(j)).
\]
Adding this up yields
\begin{align*}
&& \sum_{j\in P}\sum_{i \in L}\sum_{i' \in \approxcluster(i)} \lpx_{i'j} \dist(i,j)\\
&\le&
\sum_{j\in P}\sum_{i \in L}\sum_{i' \in \approxcluster(i)} (\beta+\beta^2) \lpx_{i'j} \dist(i',j)
&+ \sum_{j\in P}\sum_{i \in L}\sum_{i' \in \approxcluster(i)} \beta^2\cdot\lpx_{i'j} \dist(j,\approxassignment(j))\\
& =& (\beta+\beta^2)\cdot \lpcost &+ \beta^2\cdot\approxcost.
\end{align*}
For $\beta=1$ ($k$-median), this is $2 \lpcost+\approxcost$, for $\beta=2$ ($k$-means), we get $12\lpcost + 8 \approxcost$ due to the additional factor of $2$ from Lemma\ref{lemma:twofactorkmeans}.\\[2mm]
\textbf{Facility location.} For facility location, we have to include the facility opening costs. We open the facilities that are open in $(\approxx,\approxy)$, which incurs a cost of $\sum_{i \in L} \approxy_i f_i$. The distance costs are the same as for $k$-median, so we get a total cost of
\begin{align*}
\sum_{j\in P}\sum_{i \in L}\sum_{i' \in \approxcluster(i)} 2 \lpx_{i'j} \dist(i',j)
+ \sum_{j\in P}\sum_{i \in L}\sum_{i' \in \approxcluster(i)} \lpx_{i'j} \dist(j,\approxassignment(j))+\sum_{i \in L} \approxy_i f_i
\le\  2 \lpcost + \approxcost.
\end{align*}

\noindent\textbf{$k$-center and $k$-supplier.} For the $k$-center and $k$-supplier proof, we again fix $i\in L$, $i' \in \approxcluster(i)$ and $j\in P$ arbitrarily and use that $\dist(i,j) \le \dist(i,i') + \dist(i',j)$. Now for $k$-center, we know that $\dist(i,i') \le \approxcost$ since $i' \in \approxcluster(i)$, and we know that $\dist(i',j) \le \lpcost$ for all $j$ where $\lpx_{ij}$ is strictly positive. Thus, if $\firstx_{ij}$ is strictly positive, then $\dist(i,j) \le \approxcost + \lpcost$.
For $k$-supplier, we have no guarantee that $\dist(i,i') \le \approxcost$ since $i'$ is not necessarily an input point. Instead, $i' \in \approxcluster(i)$ means that the point $j'$ in $P$ which is closest to $i'$ is assigned to $i$ by $\approxx$. Since $j'$ is the closest to $i'$ in $P$, we have $\dist(i',j')\le \dist(i',j)$. Furthermore, since $j' \in \approxcluster(i)$, $\dist(i,j')\le\approxcost$. Thus, we get for $k$-supplier that
\begin{align*}
\dist(i,j) \le \dist(i,i') + \dist(i',j) \le \dist(i,j') + \dist(i',j')+\dist(i',j)
\le \approxcost + 2 \cdot \lpcost. 
\end{align*}
\end{proof}

\subsection{Step 2: Rounding the \texorpdfstring{$x$}{x}-variables}

For rounding the $x$-variables, we need to distinguish between two cases of objectives. Let $j \in P$ be a point that is fractionally assigned to some centers $L_j \subseteq L$. 

First, we have objectives where we can shift mass from an assignment of $j$ to $i' \in L_j$ to an assignment of $j$ to $i'' \in L_j$ without modifying the objective. We say that such objectives are \emph{reassignable} (in the sense that we can reassign $j$ to centers in $L_j$ without changing the cost). $k$-center and $k$-supplier have this property.

Second, we have objectives where the assignment cost is separable, i.e., where the distances influence the cost via a term of the form $\sum_{i \in L, j \in P} c_{ij} \cdot x_{ij}$ for some $c_{ij} \in \mathbb{R}_{\geq 0}$. 
We call such objectives \emph{separable}. Facility location, $k$-median and $k$-means fall into the this category. 

\begin{lemma}\label{lemma:xrounding}
 Let $(x,y)$ be an $\alpha$-approximative fractional solution for a fair clustering problem with the property that all $y_i, i \in L$ are integral. Then we can obtain an $\alpha$-approximative integral solution $(x',y')$ with an additive fairness violation of at most one in time $O(\poly(|S|+|P|))$, with $S := \setc{i \in L}{y_i \geq 1}$ being the set of locations that are opened in $(x,y)$.
\end{lemma}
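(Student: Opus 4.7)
My plan is to recast the rounding problem as a minimum-cost integer flow problem on a directed network with integer lower and upper arc capacities, and then invoke the classical integrality property of network flows.

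I would build a network $N$ with source $s$ and sink $t$ as follows. For every $j \in P$ add an arc $(s,j)$ of capacity $1$. For every $i \in S$ introduce a center node $w_i$, and for every color $col_h \in \Col$ a color-center node $v_{h,i}$; add an arc $(v_{h,i},w_i)$ with integer lower/upper capacities $\lfloor\mass_h(x,i)\rfloor$ and $\lceil\mass_h(x,i)\rceil$ and an arc $(w_i,t)$ with integer lower/upper capacities $\lfloor\mass(x,i)\rfloor$ and $\lceil\mass(x,i)\rceil$. Finally, for every $j \in P$ of color $col_h$ and every $i \in S$ with $x_{ij}>0$, add an arc $(j,v_{h,i})$ of capacity $1$; in the separable case this arc receives a cost equal to the objective coefficient of $x_{ij}$, all other arcs have cost $0$. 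The fractional solution $x$ itself defines a feasible fractional $s$--$t$ flow of value $|P|$ in $N$ via $f(s,j):=1$, $f(j,v_{col(j),i}):=x_{ij}$, $f(v_{h,i},w_i):=\mass_h(x,i)$, $f(w_i,t):=\mass(x,i)$; flow conservation and all capacity bounds are immediate, and in the separable case the cost of this flow equals the objective value of $(x,y)$.

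Since $N$ has integer capacities, the classical integrality theorem for min-cost flow yields an integer flow $f^\ast$ of value $|P|$ whose cost is at most the cost of the fractional flow above. I set $x'_{ij}:=f^\ast(j,v_{col(j),i})$ and $y':=y$. Saturated $(s,j)$-arcs force each point to be assigned to exactly one open center, and the lower/upper capacity bounds on the $(v_{h,i},w_i)$ and $(w_i,t)$ arcs are exactly the per-color and total additive-one bounds required by Definition~\ref{additive_fairness_violation} (with $x$ itself serving as the fair fractional witness), so $(x',y')$ is essentially fair. For a separable objective, the cost of $(x',y')$ equals the cost of $f^\ast$, which is bounded by the objective value of $(x,y)$, preserving the $\alpha$-approximation. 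For a reassignable objective, only arcs with $x_{ij}>0$ (hence $d(i,j)\le\tau$ for the current threshold $\tau$) are present, so the $k$-center/$k$-supplier threshold is also preserved. Since $N$ has $O(|S|\cdot|\Col|+|P|)$ nodes, $O(|P|\cdot|S|\cdot|\Col|)$ arcs, and polynomially bounded capacities, any standard min-cost flow algorithm runs in $\poly(|S|+|P|)$ time.

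The hard part is enforcing the per-color additive-one bound and the total-mass additive-one bound simultaneously. A single-layer network using only per-color capacities could cumulate a slack of up to $|\Col|$ in the total mass of a cluster. The two-layer design through $v_{h,i}$ followed by $w_i$, where flow conservation at $w_i$ automatically aggregates the per-color flows into the total-mass arc, is what makes both bounds jointly enforceable by the integrality of network flows.
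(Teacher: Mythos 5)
Your proposal is correct and follows essentially the same approach as the paper: both build a two-layer min-cost flow network with color-center nodes feeding into center nodes feeding into a sink, and invoke flow integrality. The only difference is cosmetic — you enforce the floor/ceiling bounds via arc lower/upper capacities and an explicit source, whereas the paper uses node balances of $-\lfloor\mass_h(x,i)\rfloor$, $-B_i$, $-B$ together with unit-capacity slack arcs, which are standard equivalent formulations.
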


\begin{proof}
 We create our rounded $\alpha$-approximate integral solution $(x',y')$ by min-cost flow computations. We begin by constructing a min-cost flow instance which depends on our starting solution $(x,y)$ as well as on the objective of the problem we are studying.

We define a min-cost flow instance $(G = (V,A), c, b)$ (also see Figure~\ref{figure:flow}) with unit capacities and costs $c$ on the edges as well as balances $b$ on the nodes. We begin by defining a graph $G^h = (V^h,A^h)$ for every color $h \in Col$ with
 \[\begin{aligned}
  V^h &:= V_S^h \cup V_P^h, \quad V_S^h := \setc{v^h_i}{i \in S}, \quad V_P^h := \setc{v^h_j}{j \in col_h(P)},\\
	A^h &:= \setc{(v^h_j,v^h_i)}{i \in S, j \in col_h(P):x_{ij} > 0},
 \end{aligned}\]
 as well as costs $c^h$ by
 $c^h_a := c_{ij}$ for $a = (v^h_j,v^h_i) \in A^h, i \in S, j \in col_h(P)$ 
 and balances $b^h$ by
 $ b^h_v := 1$ if $v \in V_P^h$ and $b_v^h := -\lfloor\mass_h(x,i)\rfloor$ if $v = v^h_i \in V_S^h$.
We use the graphs $G_h$ to define $G = (V,A)$ by
 \[\begin{aligned}
  V := &\{t\} \cup V_S \cup \bigcup_{h \in Col} V^h, \quad V_S := \setc{v_i}{i \in S}\\
	A := &\bigcup_{h \in Col} A^h \cup \setc{(v^h_i,v_i)}{i \in S, h \in Col: \mass_h(x,i) - \lfloor\mass_h(x,i)\rfloor > 0}\\
			&\cup \setc{(v_i,t)}{i \in S: \mass(x,i) - \lfloor\mass(x,i)\rfloor > 0} ,
 \end{aligned}\]
 together with costs $c$ of $c_a := c^h_a$ for $a \in A^h$ and $0$ otherwise,
 and balances $b$ of $b_v := b^h_v$ if $v \in V^h$ for some $h \in Col$, $b_v := -B_i$ if $v = v_i \in V_S$ and $b_t := -B$
with $B_i = \lfloor\mass(x,i)\rfloor -\sum_{h \in Col}\lfloor\mass_h(x,i)\rfloor$ and $B := |P| - \sum_{i\in S} \lfloor\mass(x,i)\rfloor$. 

 \paragraph*{Separable objectives -- $k$-median and $k$-means.}
 We make the following observations.
 \begin{enumerate}[leftmargin=0.5cm,itemsep=-0.1cm,topsep=0.1cm]
  \item $B$  and $B_i$ are integers for all $i \in S$, and so are all capacities, costs and balances. Consequently, there are integral optimal solutions for the min-cost flow instance $(G, c, b)$,
	\item $(x,y)$ induces a feasible solution for $(G, c, b)$, by defining a flow $x$ in $G$ as follows:
	 \[ x_a := \begin{cases} x_{ij} & \text{if } a = (v^h_j,v^h_i) \in A^h, j \in P, i \in S,\\ 
	                         \mass_h(x,i) - \lfloor\mass_h(x,i) \rfloor& \text{if } a = (v^h_i,v_i) \in A, h \in Col, i \in S,\\
													\mass(x,i) -  \lfloor\mass(x,i) \rfloor& \text{if } a = (v_i,t) \in A, i \in S.\\
	                            \end{cases} \]	 
   Since $(x,y)$ is a fractional solution, $x$ satisfies capacity and non-negativity constraints because $x_{ij} \in [0,1]$ for all $i \in L, j \in P$ and $\mass_h(x,i) - \lfloor\mass_h(x,i)\rfloor, \mass(x,i) - \lfloor\mass(x,i)\rfloor \in [0,1]$ for all $i \in S$ and $col_h \in Col$ as well. We have flow conservation since the fractional solution needs to assign all points, and the flow of the edges $(v^h_i,v_i)$ and $(v_i,t)$ as well as the demand of $v_i$ and $t$ are chosen in such a way that we have flow conservation for all the other nodes as well.
	\item Integral solutions $x$ to the min-cost flow instance $(G, c, b)$ induce an integral solution $(\bar{x},y)$ to the original clustering problem by setting $\bar{x}_{ij} := x_{a}$ for $a = (v^h_j,v^h_i) \in A^h$ if $j \in col_h(P), i \in S$. Since the flow $x$ is integral, this gives us an integral assignment of all points to centers which have been opened, since $y$ was already integral before this step.
	
	This incurs the additive fairness violation of at most one, since every $i \in S$ is guaranteed by our balances to have at least $\lfloor\mass_h(x,i)\rfloor$ points of color $h \in Col$ and at least $\lfloor\mass(x,i)\rfloor$ points in total assigned to it. Since there is at most one outgoing arc of unit capacity $(v^h_i,v_i)$ and $(v_i,t)$ for an $i \in S$ if $\mass_h(x,i)- \lfloor\mass_h(x,i)\rfloor > 0$, we have at most $\lceil\mass_h(x,i)\rceil$ points of color $col_h$ and $\lceil\mass(x,i)\rceil$ total points assigned to $i$.
 \end{enumerate}	
Together, this yields that computing a min-cost flow $\hat{x}$ for $(G, c, b)$ followed by applying the third observation to $\hat{x}$ yields a solution $(\hat{x},y)$ to the clustering with an additive fairness violation of at most one.

Since $(x,y)$ was inducing the fractional solution $x$ with $\cost(x) = \cost(x,y)$ to the min-cost flow instances, and $\cost(x) \geq \cost(\hat{x})$ by construction we have $\cost(\hat{x},y) \leq \cost(x,y)$.
 
 \paragraph*{Reassignable objectives -- $k$-center and $k$-supplier.}
 In the case of reassignable objectives, we do not have to care about costs, as long as the reassignments happen to centers in $L_j$ for all points $j \in P$. We essentially use the same strategy as before, but instead of a min cost flow problem we solve the transshipment problem $(G = (V,A), b)$ with unit capacities on the edges and balances $b$ on the nodes.
Notice that the three observations from the previous case apply here as well, and reassignability guarantees that the cost does not increase.
\end{proof}
Lemmas~\ref{lemma:yrounding} and \ref{lemma:xrounding} then lead directly to Theorem~\ref{thm:mainresult}, formulated in more detail in the following theorem. 

\begin{theorem}\label{thm:blackboxRaw}
 Black-box approximation for fair clustering gives essentially fair solutions with a cost of $c^{LP}+\bar{c}$ for $k$-center, $2c^{LP}+\bar{c}$ for $k$-supplier, $k$-median and facility location, and $12c^{LP}+8\bar{c}$ for $k$-means where $c^{LP}$ is the cost of an optimal solution to the fair LP relaxation and $\bar{c}$ is the cost of the unfair clustering solution.
\end{theorem}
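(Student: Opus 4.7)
The plan is to directly chain the two lemmas already established in this section. Algorithm~\ref{alg:blackbox} produces a pair of solutions as input: the optimal fractional LP solution $(\lpx,\lpy)$ with cost $\lpcost$, and an integral (but unfair) solution $(\approxx,\approxy)$ with cost $\approxcost$. I would first apply Lemma~\ref{lemma:yrounding}, which consolidates the fractional assignments of $(\lpx,\lpy)$ onto the integral set of centers $S$ used by $(\approxx,\approxy)$, yielding an intermediate solution $(\firstx,\firsty)$ that is feasible for the fair LP (hence satisfies~\eqref{eq:general:fair}) and has integral $\firsty = \approxy$. The cost bound on $(\firstx,\firsty)$ is exactly the object of the theorem: Lemma~\ref{lemma:yrounding} already gives $\firstcost \le \lpcost + \approxcost$ for $k$-center, $\firstcost \le 2\lpcost + \approxcost$ for $k$-supplier, $k$-median and facility location, and $\firstcost \le 12\lpcost + 8\approxcost$ for $k$-means.

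Next I would apply Lemma~\ref{lemma:xrounding} to $(\firstx,\firsty)$. Since $\firsty$ is integral, the preconditions of that lemma are met, and the min-cost flow (resp.\ transshipment) rounding produces an integral solution $(x^\star,y^\star)$ with $y^\star = \firsty$, an additive fairness violation of at most one per color per center, and cost no larger than $\firstcost$. For the separable objectives ($k$-median, $k$-means, facility location) this is because the flow we construct from $\firstx$ is a feasible flow in the min-cost instance of cost exactly $\firstcost$, so an optimal integral min-cost flow is no more expensive. For the reassignable objectives ($k$-center, $k$-supplier) we only use arcs $(v^h_j,v^h_i)$ for which $\firstx_{ij} > 0$, so every assignment produced by the transshipment respects the threshold $\firstcost$ inherited from Lemma~\ref{lemma:yrounding}.

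Combining the two steps gives an integral solution whose cost is bounded exactly as stated, and whose fairness violation per center per color is at most one. The latter is precisely the condition of Definition~\ref{additive_fairness_violation} with the fractional fair witness being $(\firstx,\firsty)$: for every $i \in S$ and every $col_h \in Col$, $\lfloor \mass_h(\firstx,i)\rfloor \le \mass_h(x^\star,i) \le \lceil \mass_h(\firstx,i)\rceil$ (and analogously for the total mass), which is exactly the inequality guaranteed by the balances in the flow construction of Lemma~\ref{lemma:xrounding}. Hence $(x^\star,y^\star)$ is essentially fair in the sense of Definition~\ref{additive_fairness_violation}.

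Because both ingredient lemmas have already done the technical work (the cost-tracking through the $\beta$-relaxed triangle inequality in Lemma~\ref{lemma:yrounding}, and the flow-based rounding in Lemma~\ref{lemma:xrounding}), I do not expect a genuine obstacle here; the only thing to be careful about is the slightly different semantics of ``cost'' for $k$-center/$k$-supplier, where $\firstcost$ is understood as the smallest threshold $\tau$ for which the LP threshold constraints~\eqref{eq:obj:kcenter} are satisfied, and where feasibility (not cost minimization) is what must be preserved by the transshipment step. Spelling this out for each objective separately yields Theorem~\ref{thm:blackboxRaw}, and Theorem~\ref{thm:mainresult} follows by plugging in $\approxcost \le \alpha \cdot \optcost$ for any $\alpha$-approximation algorithm for the unconstrained problem.
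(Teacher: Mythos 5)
Your proposal is correct and takes exactly the paper's approach: the paper itself states only that Lemmas~\ref{lemma:yrounding} and \ref{lemma:xrounding} lead directly to the theorem, and you chain those two lemmas in precisely the intended way, correctly identifying $(\firstx,\firsty)$ as the fractional fair witness required by Definition~\ref{additive_fairness_violation} and noting the feasibility-versus-cost distinction for $k$-center/$k$-supplier.
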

We know that $c^{LP}$ is not more expensive than an optimal solution to the fair clustering problem. If we use an $\alpha$-approximation to obtain the unfair clustering solution, we have that $\bar{c}$ is at most $\alpha$ times the cost of an optimal solution to the fair clustering problem. Currently, the best know approximation factors are $2$ for $k$-center \cite{G85,HS86}, $3$ for $k$-supplier~\cite{HS86}, $1.488$ for facility location~\cite{L13}, $2.675$ for $k$-median~\cite{BPRST17,LS16} and $6.357$ for $k$-means~\cite{ANSW17}, yielding the following theorem.
\begin{theorem}\label{thm:blackbox}
 Black-box approximation for fair clustering gives essentially fair solutions with an approximation factor of $3$ for $k$-center, $5$ for $k$-supplier, $4.675$ for $k$-median, $3.488$ for facility location, and $62.856$ for $k$-means.
\end{theorem}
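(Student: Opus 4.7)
The plan is to derive Theorem~\ref{thm:blackbox} directly from Theorem~\ref{thm:blackboxRaw} by bounding both quantities $c^{LP}$ and $\bar c$ in terms of the optimum of the fair clustering problem, and then substituting the currently best-known approximation factors for the unconstrained variants.

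First, I would observe that if $\OPT$ denotes the optimum of the fair clustering instance $I$, then the fair LP relaxation is a lower bound on $\OPT$, so $c^{LP}\le \OPT$. Second, $\bar c$ is the cost of an integral solution to the unconstrained version of $I$, obtained by applying an $\alpha$-approximation (Step~1 of Algorithm~\ref{alg:blackboxAlpha}). Since the unconstrained problem is a relaxation of the fair problem (dropping inequality~\eqref{eq:general:fair}), its optimum is at most $\OPT$, and therefore $\bar c\le \alpha\cdot \OPT$.

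Next I would substitute these two bounds into each of the formulas from Theorem~\ref{thm:blackboxRaw}, using the best currently known approximation factors as cited in the paper: $\alpha=2$ for $k$-center~\cite{G85,HS86}, $\alpha=3$ for $k$-supplier~\cite{HS86}, $\alpha=1.488$ for facility location~\cite{L13}, $\alpha=2.675$ for $k$-median~\cite{BPRST17,LS16}, and $\alpha=6.357$ for $k$-means~\cite{ANSW17}. This yields approximation factors of
\begin{align*}
 1+2 &= 3 \quad (k\text{-center}),\\
 2+3 &= 5 \quad (k\text{-supplier}),\\
 2+1.488 &= 3.488 \quad (\text{facility location}),\\
 2+2.675 &= 4.675 \quad (k\text{-median}),\\
 12+8\cdot 6.357 &= 62.856 \quad (k\text{-means}),
\end{align*}
exactly as claimed. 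Essential fairness of the returned solution is immediate from Theorem~\ref{thm:blackboxRaw} (which is itself a consequence of Lemmas~\ref{lemma:yrounding} and~\ref{lemma:xrounding}), so no additional argument is required.

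Since every building block is already in place, there is really no ``hard part'' here; the only thing to be careful about is that $\bar c$ is compared against $\OPT$ of the \emph{fair} problem rather than the unconstrained optimum, which is justified by the relaxation argument above. The numerical constants should be rounded in the direction that preserves the claimed bounds (in particular $8\cdot 6.357=50.856$ exactly, so the $k$-means bound is tight with the stated $62.856$).
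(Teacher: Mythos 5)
Your proposal matches the paper's own derivation exactly: bound $c^{LP}\le\OPT$ by LP relaxation, bound $\bar c\le\alpha\,\OPT$ since the unconstrained optimum is a lower bound on the fair optimum, and substitute the cited approximation factors into the formulas of Theorem~\ref{thm:blackboxRaw}. The arithmetic checks out (including $12+8\cdot 6.357=62.856$), so nothing further is needed.
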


%
\begin{figure}
\centering
\begin{tikzpicture}[scale=1.2,
		pp/.style={point, black},
    center/.style={draw, gray!80!white, fill=gray!10!white, rounded corners, very thick},
    >=stealth, thick
    ]   

\node[] at (-1.5,3) {\underline{Nodes for:}};
			\node at (-1.5,-1) {$P$};
      \draw (0,-1) node[bp] (u1) {};
      \draw (1.25,-1) node[bp] (u2) {};
      \draw (2.5,-1) node[bp] (u3) {};
      \draw (3.5,-1) node[rp] (u4) {};
      \draw (4.75,-1) node[rp] (u5) {};
      \draw (6,-1) node[rp] (u6) {};
			
			\draw[center] (1,1.6) -- (1.92,0.27) -- (0.08,0.27) -- cycle;			
			\draw[center] (3,1.6) -- (3.92,0.27) -- (2.08,0.27) -- cycle;			
			\draw[center] (5,1.6) -- (5.92,0.27) -- (4.08,0.27) -- cycle;
					
			\node at (1,0.7) {$c_1$};
			\node at (3,0.7) {$c_2$};
			\node at (5,0.7) {$c_3$};
			
			\node at (-1.5,0.5) {$S,h$};
			\draw (0.5,0.5) node[bp] (f1a1) {};
      \draw (1.5,0.5) node[rp] (f1a2) {};
			\draw (2.5,0.5) node[bp] (f2a1) {};
      \draw (3.5,0.5) node[rp] (f2a2) {};
			\draw (4.5,0.5) node[bp] (f3a1) {};
      \draw (5.5,0.5) node[rp] (f3a2) {};
			
			\node at (-1.5,1.2) {$S$};
      \draw (1,1.2) node[pp] (f1) {};
      \draw (3,1.2) node[pp] (f2) {};
      \draw (5,1.2) node[pp] (f3) {};
			
			\node at (-1.5,2.4) {$t$};
			\draw (3,2.4) node[pp] (t) {t};
			
			\draw[<-] (f1a1) -- (u1); 
			\draw[<-] (f1a1) -- (u2); 
			\draw[<-] (f3a1) -- (u3); 
			\draw[<-] (f2a1) -- (u3);
			\draw[<-] (f2a1) -- (u2);
			\draw[<-] (f1a2) -- (u4);
			\draw[<-] (f2a2) -- (u4);
			\draw[<-] (f3a2) -- (u5);
			\draw[<-] (f1a2) -- (u5); 
			\draw[<-] (f3a2) -- (u6);
			\draw[<-] (f2a2) -- (u6);

			\draw[<-] (f1) -- (f1a1);\draw[<-] (f1) -- (f1a2);
			\draw[<-] (f2) -- (f2a1);\draw[<-] (f2) -- (f2a2);
			\draw[<-] (f3) -- (f3a1);\draw[<-] (f3) -- (f3a2);

			\draw[<-] (t) -- (f1); 
			\draw[<-] (t) -- (f2); 
			\draw[<-] (t) -- (f3);
			
			
			\node[anchor=west] at (6.5,3) {\underline{b-values}};
			\node[anchor=west] at (6.5,2.4) {$-B$};	
			\node[anchor=west] at (6.5,1.2) {$-B_i$};
			\node[anchor=west] at (6.5,0.5) {$-\lfloor \mass_h(x,i) \rfloor$};
			\node[anchor=west] at (6.5,-1) {$1$};
\end{tikzpicture}
\caption{Example for the graph $G$ used in the rounding of the $x$-variables.\newline $B_i = \lfloor \mass(x,i) \rfloor - \sum_{h \in Col} \lfloor \mass_h(x,i)\rfloor$ and $B = |P| - \sum_{i \in S}\lfloor\mass(x,i)\rfloor$.\label{figure:flow}}
\end{figure}

\section{True approximations for fair \texorpdfstring{$k$}{k}-center and  \texorpdfstring{$k$}{k}-supplier}\label{sec:trueapproximations}

We now extend our weakly supervised rounding technique for $k$-center and $k$-supplier in the case of the exact fairness model. We replace the black-box algorithm with a specific approximation algorithm, and then achieve true approximations for the fair clustering problems by informed rounding of the LP solution.

\subsection{\texorpdfstring{$5$}{5}-Approximation Algorithm for \texorpdfstring{$k$}{k}-center}\label{sec:five_approx}
In this section, we consider the fair $k$-center problem with exact preservation of ratios and without any additive fairness violation.

We give a $5$-approximation for this variant.
The algorithm begins by choosing a set of centers.
In contrast to Section~\ref{sec:additive_error} we do not use an arbitrary algorithm for the standard $k$-center problem but specifically look for nodes in the threshold graph $G_{\tau} = (P, E_{\tau})$ where $E_{\tau} = \{(i,j) \mid i\neq j \in P, d(i,j) \leq \tau \}$ that form a maximal independent set $S$ in $G_{\tau}^2$.  Here $G_{\tau}^t$ denotes the graph on $P$ that connects all pairs of nodes with a distance at most $t$ in $G$ and we denote the edge set of $G_{\tau}^t$ by $E_{\tau}^t$.
As we use the following procedure independent for each connected component of $G_\tau$, we will in the description and the following proofs of the procedure assume that $G_\tau$ is a connected graph.
The procedure uses the approach by Khuller and Sussmann ~\cite{KS00} (procedure \textsc{AssignMonarchs}) to find $S$ which ensures the following property: There exists a tree $T$ spanning all the nodes in $S$ and two adjacent nodes in $T$ are exactly distance $3$ apart in $G_\tau$. The procedure begins by choosing an arbitrary vertex $r \in P$, called \emph{root}, into $S$ and marking every node within distance $2$ of $r$ (including itself). Until all the nodes in $P$ are marked, it chooses an unmarked node $u$ that is adjacent to a marked node $v$ and marks all nodes in the distance two neighborhood of $u$. Observe that $u$ is exactly at distance $3$ from a node $u' \in S$ chosen earlier that caused $v$ to get marked. Thus the run of the procedure implicitly defines the tree $T$ over the nodes of $S$.
In case $G_\tau$ is not a connected graph this procedure is run on each connected component and the set $S$ has the following property: There exists a forest $F$ such that $F$ reduced to a connected component of $G_\tau$ is a tree $T$ spanning all the nodes of $S$ inside of that connected component and two adjacent nodes in $T$ are exactly distance $3$ apart in $G_\tau$.

In the next phase, we make use of some structure that feasible solutions with exact preservation of the ratios must have.
\begin{observation}
\label{obs:fairletmultiples}
Let $m \in \mathbb{N}$ be the smallest integer such that for each color $h \in Col$ we have $r_h(P) = \frac{q_h}{m}$ for some $q_h \in \mathbb{N}$. Then for each cluster $P(i)$ in a fair clustering $\mathcal{C}$ of $P$ with exact preservation of ratios, there exists a positive integer $i' \in \mathbb{N}_{\ge 1}$ such that $P(i)$ contains exactly $i' \cdot q_h$ points with color $h$ for each color $h \in Col$ and $i' \cdot m$ total points. Thus every cluster must have at least $q_h$ points of color $h$ for each color $h \in Col$.
\end{observation}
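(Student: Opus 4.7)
The plan is to directly exploit exact fairness (giving a linear equation for each color) together with the minimality of $m$ (forcing a coprimality condition on the $q_h$), and then invoke Bezout's identity to conclude that the cluster size must be a multiple of $m$.

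First I would set up the algebra. Fix a cluster $P(i)$ in $\mathcal{C}$ and let $s := |P(i)|$. Since $\mathcal{C}$ exactly preserves ratios, $r_h(P(i)) = r_h(P) = q_h/m$ for every $h \in Col$, so cross-multiplying yields
\[
m \cdot |col_h(P(i))| = q_h \cdot s \qquad \text{for all } h \in Col.
\]
In particular, $m$ divides $q_h \cdot s$ for every color $h$. The goal will be to show that in fact $m$ divides $s$; once this is established, setting $i' := s/m$ immediately gives $|col_h(P(i))| = i' \cdot q_h$, and $i' \ge 1$ because clusters are nonempty.

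Next I would extract a number-theoretic consequence of the minimality of $m$. Summing $q_h = m \cdot r_h(P)$ over all colors gives $\sum_h q_h = m$. Let $d := \gcd(q_1,\ldots,q_g)$; then $d$ also divides $m$, so each ratio $r_h(P)$ can be rewritten as $(q_h/d)/(m/d)$ with integer numerator and a strictly smaller positive integer denominator whenever $d > 1$. That contradicts the minimality of $m$, so we must have $\gcd(q_1,\ldots,q_g) = 1$.

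The key step, and the only part that needs a bit of care, is to upgrade ``$m \mid q_h s$ for every $h$'' into ``$m \mid s$''. By the coprimality just established, Bezout's identity supplies integers $a_1,\ldots,a_g$ with $\sum_h a_h q_h = 1$. Multiplying by $s$ and reducing modulo $m$ gives
\[
s \equiv \sum_{h} a_h \,(q_h s) \equiv 0 \pmod{m},
\]
using that $m \mid q_h s$ for every $h$. Hence $s = i' m$ for some $i' \in \mathbb{N}_{\ge 1}$, and then $|col_h(P(i))| = q_h s / m = i' \cdot q_h$ as required. The final ``at least $q_h$'' claim is then immediate from $i' \ge 1$. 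I do not foresee any genuine obstacle; the only place one has to be slightly careful is recognizing that the minimality of $m$ is exactly what makes the Bezout step go through.
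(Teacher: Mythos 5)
The paper states this as an \emph{Observation} with no accompanying proof, so there is no argument of the paper's own to compare against; your proof is correct and supplies the missing justification. The one step that is genuinely not immediate---passing from ``$m \mid q_h s$ for every $h$'' to ``$m \mid s$''---is exactly where you use the minimality of $m$ to derive $\gcd(q_1,\ldots,q_g)=1$ and then apply Bezout, and this is handled correctly (note that $\sum_h q_h = m$, which you use to get $\gcd(q_1,\ldots,q_g)\mid m$, follows from $\sum_h r_h(P)=1$). Setting $i' := s/m$ then gives $|col_h(P(i))| = i'q_h$ with $i'\ge 1$ since a cluster is nonempty, so all claims of the observation follow.
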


We use Observation~\ref{obs:fairletmultiples} and the fixed set of centers $S$ to obtain the following adjusted LP for the fractional fair $k$-center problem. 
\begin{align}
& \quad & \sum_{i \in S} x_{ij} &= 1, &\quad \forall j \in P\\
\label{eq:fair} & \quad &  \sum_{j \in col_h(P)} x_{ij} &= r_h(P)\sum_{j \in P} x_{ij} &\quad \forall i \in S\\
\label{eq:minreq} & \quad & \sum_{\substack{j \in col_h(P)\\ (i,j) \in E_{\tau}^2}} x_{ij} &\geq q_h &\quad \forall i \in S, \forall h \in Col\\
\label{eq:max_assign_dist}& \quad & x_{ij}  &= 0  &\quad \forall i \in S, j \in P \text{ with } (i,j) \notin E_{\tau}^3\\
& \quad & 0 \leq x_{ij}  & \leq 1  &\quad \forall i\in S, j \in P
\end{align}
Here inequality~\eqref{eq:minreq} ensures that each cluster contains at least $q_h$ points of color $h$. Let $S_{opt}$ be the set of centers in the optimal solution and let $\phi_{opt}: P \rightarrow S_{opt}$ be the optimal fair assignment. For the correct guess $\tau$, every center $i \in S$ has a distinct center in $S_{opt}$ which is at most distance one away from $i$ in $G_{\tau}$. Therefore, there exists $q_h$ points of each color $h$ within distance two of $i$. This ensures that inequality~\eqref{eq:minreq} is satisfiable for the right guess $\tau$. And since, every center in $S_{opt}$ is within distance two of some $i \in S$, there exists a fair assignment of points in $P$ to centers in $S$ within distance three. Thus the above LP is feasible for the right $\tau$.

Now for the final phase, the algorithm rounds a fractional solution for the above assignment LP to an integral solution of cost at most $5\tau$ in a procedure motivated by the LP rounding approach used by Cygan et al. in~\cite{CHK12} for the capacitated $k$-center problem. Let $\beta(i)$ denote the children of node $i \in S$ in the tree $T$. Define quantities $\Gamma(i)$ and $\delta(i)$, $\forall i \in S$ as follows:
\begin{align*}
\Gamma(i) &= \left\lfloor \frac{\sum_{j\in col_1(P)} x_{ij} + \sum_{i' \in \beta(i)} \delta(i')}{q_1} \right\rfloor q_1\\ 
\delta(i) &= \sum_{j\in col_1(P)} x_{ij} + \sum_{i' \in \beta(i)} \delta(i') - \Gamma(i)
\end{align*}
For a leaf node $i$ in the tree $T$ we have $\beta(i) = \emptyset$, then $\Gamma(i)$ denotes the amount of color $1$ points assigned to $i$ rounded down to the nearest multiple of $q_1$, while $\delta(i)$ denotes the remaining amount. The idea is to reassign the remainder to the parent of $i$. Then for a non leaf $i'$ $\Gamma(i')$ denotes the amount of color $1$ points assigned to $i'$ plus the remainder that all children of $i'$ want to reassign to $i'$ rounded down to the nearest multiple of $q_1$, while $\delta(i')$ again denotes the remainder.
Since by definition of $q_1$ the total number of points in $col_1(P)$ must be an integer multiple of $q_1$, $\Gamma(r)$ also denotes the the amount of color $1$ points assigned to $r$ plus the remainder that all children of $r$ want to reassign to $r$ and $\delta(r) = 0$.

Also note that $\Gamma(i)$ is always a positive integer multiple of $q_1$ for any $i$, and $\delta(i)$ is always non-negative and less than $q_1$. 

One can think of the $x_{ij}$ variables as encoding flow from a vertex $j$ to a node $i \in S$.
We call it a color $h$ flow if $j$ has color $h$. We will re-route these flows (maintaining the
ratio constraints) such that $\forall i \in S$, $j \in col_1(P)$ $x_{ij}$ is equal to $\Gamma(i)$ which is an integral multiple of $q_1$. 
\begin{lemma}
\label{lemma:integral_assignment_1}
There exists an integral assignment of all vertices with color $1$ to centers in $S$ in $G_{\tau}^5$ that assigns $\Gamma(i)$ vertices with color $1$ to each center $i \in S$.
\end{lemma}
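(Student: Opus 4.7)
My plan is to prove the lemma in two stages. First, I would construct a \emph{fractional} assignment $\tilde x$ of the color-$1$ points to the centers in $S$ with the property that each $i \in S$ receives exactly $\Gamma(i)$ units of color-$1$ mass and every pair $(i,j)$ with $\tilde x_{ij} > 0$ lies in $E_\tau^5$. Second, I would appeal to the integrality of bipartite $b$-matching (or equivalently, integral min-cost flow on a bipartite network) to extract an integral assignment with the same per-center totals that is supported only on edges where $\tilde x$ is positive. This reduces the whole lemma to constructing $\tilde x$ with the claimed distance bound.

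To build $\tilde x$, I would initialize it to the restriction of the LP solution $x$ to color-$1$ points, then process the vertices of the tree $T$ (or the forest, in the disconnected case) in a bottom-up order. When a center $i$ is processed, the total color-$1$ mass currently at $i$ equals $f_1(i)+\sum_{c\in\beta(i)}\delta(c) = \Gamma(i)+\delta(i)$ by the defining recursion of $\Gamma$ and $\delta$. I would then pick exactly $\delta(i)$ units of color-$1$ mass out of the assignments to $i$ coming from points within distance $2$ of $i$, and shift them to the parent $p(i)$ of $i$ in $T$. Constraint~\eqref{eq:minreq} guarantees that in the original LP solution $i$ already has at least $q_1$ units of color-$1$ mass from distance-$2$ points, and no earlier step in the sweep removes mass from $i$ (only $i$'s own processing does), so the pool of ``close'' color-$1$ mass at $i$ is always at least $q_1 > \delta(i)$, and the transfer is well-defined. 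Since $\delta(r)=0$ at every tree root, the process terminates leaving precisely $\Gamma(i)$ units of color-$1$ mass at each $i\in S$.

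The key distance check is the following. Any positive entry $\tilde x_{ij}$ either comes from (a) the original LP assignment that was never moved, in which case $(i,j)\in E_\tau^3 \subseteq E_\tau^5$; or (b) mass transferred from a child $c$ of $i$ in $T$, in which case we carefully sourced the transfer from points $j$ with $(c,j)\in E_\tau^2$, and since $c$ and $i$ are exactly $3$ apart in $G_\tau$ we get $(i,j)\in E_\tau^{2+3}=E_\tau^5$ by the triangle inequality on $G_\tau$. The reason the bound does \emph{not} compound as we climb the tree is that when we process $i$ we always draw the outgoing $\delta(i)$ units from $i$'s own distance-$2$ pool, never from mass that was inherited from below — so inherited mass stays put at depth-at-most-one-step from its original center.

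Finally, to obtain an integral assignment I would set up the bipartite $b$-matching instance with $col_1(P)$ on the left (unit supply each) and $S$ on the right (demand $\Gamma(i)$), restricted to the edges in $E_\tau^5$ on which $\tilde x$ is positive; the constraint matrix is the incidence matrix of a bipartite graph and hence totally unimodular, so the feasible fractional $\tilde x$ certifies the existence of an integral feasible solution, which is exactly the assignment required. The part I expect to be most delicate in the write-up is the bookkeeping that justifies the invariant ``at the moment $i$ is processed, $i$ still has at least $\delta(i)$ units of color-$1$ mass from $E_\tau^2$-neighbors assigned to it''; once this invariant is in place, the rest of the argument is routine flow/matching reasoning.
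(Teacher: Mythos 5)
Your proposal follows essentially the same approach as the paper: you construct the same fractional flow by a bottom-up sweep of $T$ in which each center retains $\Gamma(i)$ units and pushes the residual $\delta(i)$ to its parent, sourcing the pushed mass from the distance-$2$ pool guaranteed by constraint~\eqref{eq:minreq}, and you conclude with the same integrality argument on the resulting bipartite flow network. If anything, you are slightly more explicit than the paper about the crucial invariant that prevents distances from compounding up the tree (always sourcing the outgoing $\delta(i)$ from $i$'s own distance-$2$ mass, never from inherited mass), but the underlying construction, the use of $\delta(i) < q_1 \le$ (distance-$2$ color-$1$ mass at $i$), and the $2+3 = 5$ distance bound are identical to the paper's proof.
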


\begin{proof}
Construct the following flow network: Take sets $col_1(P)$ and $S$ to form a bipartite graph with an edge of capacity one between a vertex $j \in col_1(P)$ and a center $i \in S$ if and only if $(i, j) \in E_{\tau}^5$. Connect a source $s$ with unit capacity edges to all vertices in $col_1(P)$ and each center $i \in S$ with capacity $\Gamma(i)$ to a sink $t$. We now show a feasible fractional flow of value $|col_1(P)|$ in this network. For each leaf node $i$ in $T$ which is not the root, assign $\Gamma(i)$ amount of color $1$ flow from the total incoming color $1$ flow $\sum_{j \in col_1(P)} x_{ij}$ from vertices that are at most distance three away from $i$ in $G_\tau$ and propagate the remaining $\delta(i)$ amount of color $1$ flow, coming from distance two vertices, upwards to be assigned to the parent of node $i$. This is always possible because by definition $\delta(i) < q_1$ and constraint~\eqref{eq:minreq} ensures that every center has at least $q_1$ amount of color $1$ flow coming from distance two vertices. For every non-leaf node $i$, assign $\Gamma(i)$ amount of incoming color $1$ flow from distance five vertices (including the color $1$ flows propagated upwards by its children) and propagate $\delta(i)$ amount of color $1$ flow from distance two vertices (possible due to constraint~\eqref{eq:minreq}). Thus every center has $\Gamma(i)$ amount of color $1$ flow passing through it and it is easy to verify that the value of the total flow in the network is $|col_1(P)|$. Since the network only has integral capacities, there exists an integral max-flow of value $|col_1(P)|$. 
\end{proof}

\begin{lemma}\label{lemma:other_color_reassignment}
For any reassignment of a color $1$ flow, there exists a reassignment of color $h$-flow between the same centers for all $h \in Col\setminus \{1\}$, such that the resulting fractional assignment of the vertices satisfies the fairness constraints at each center.
\end{lemma}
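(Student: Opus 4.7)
}

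The plan is to view the color~$1$ reassignment not as an absolute assignment but as a \emph{delta} on top of the original fractional LP solution~$x$, and then replicate this delta proportionally for each other color. Concretely, let $\alpha_i := \sum_{j \in P} x_{ij}$ be the original total mass at center $i$. By fairness of $x$, the original color~$1$ mass at $i$ is exactly $r_1\alpha_i$; after the reassignment of Lemma~\ref{lemma:integral_assignment_1} it becomes $\Gamma(i)$. Define $\alpha'_i := \Gamma(i)/r_1 = m\Gamma(i)/q_1$ to be the \emph{new} total mass that we would like at $i$, and set a target color~$h$ mass of $r_h\alpha'_i = q_h\Gamma(i)/q_1$. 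First I would verify that these targets are globally consistent: summing over $i$ gives $\sum_i r_h\alpha'_i = (r_h/r_1)\sum_i \Gamma(i) = (r_h/r_1)|col_1(P)| = r_h|P| = |col_h(P)|$, using Observation~\ref{obs:fairletmultiples}. If each center attains its target, then by construction $\sum_{j \in col_h(P)}x'_{ij}/\sum_j x'_{ij} = r_h\alpha'_i/\alpha'_i = r_h$, so constraint~\eqref{eq:fair} holds.

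Next I would describe the reassignment of color~$h$ flow so that these targets are met. The color~$1$ reassignment changes the mass at center $i$ by $\Delta^{(1)}_i := \Gamma(i) - r_1\alpha_i$, a vector with $\sum_i \Delta^{(1)}_i = 0$. By standard flow decomposition applied to the difference between the old color~$1$ assignment and the new one, this change can be realized as a sum of elementary transfers of the form ``move $\delta$ units of color~$1$ from center $i$ to center $i'$''. For each such elementary transfer I would simultaneously move $(r_h/r_1)\delta$ units of color~$h$ mass from $i$ to $i'$, for every $h \in Col\setminus\{1\}$. Because each elementary step scales the mass change of every color by the same factor $1/r_1$ relative to the color~$1$ change, the ratio at every center is preserved as an invariant throughout, and at the end the mass at $i$ is $r_h\alpha'_i$ as desired. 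Moreover, the color~$1$ reassignment lives on the centers in $S$ and only redistributes mass among them, so the color~$h$ flow is ``between the same centers'' as required by the statement.

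The step that requires the most care is \emph{feasibility} of the coordinated color~$h$ transfer: when a color~$1$ elementary step wishes to remove $\delta$ units of color~$1$ from $i$, there must be at least $(r_h/r_1)\delta$ units of color~$h$ currently at $i$ to ship along with it. This is where the fairness of the original LP solution $x$ comes in. Since $x$ satisfies~\eqref{eq:fair}, at every intermediate step the mass at each center is split among colors in the exact global ratio $r_1:\ldots:r_g$. In particular, whenever there is $\delta$ color~$1$ mass available to be shipped out of $i$, there is automatically $(r_h/r_1)\delta$ color~$h$ mass available. I would make this precise by introducing the color~$h$ transfers in lockstep with the color~$1$ transfers, so that the fairness invariant is maintained after every single elementary step rather than only at the end.

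The main obstacle I anticipate is bookkeeping between the center-level transfer picture and the actual point-level LP variables $x_{ij}$. The flow decomposition above speaks in aggregated mass, but ultimately we need a fractional assignment in the $x$-variables. I would handle this by realizing each center-to-center transfer of color~$h$ through a point $j \in col_h(P)$ that is fractionally assigned to both $i$ and $i'$ in the current state, or more generally by routing through a small chain of such points obtained by canceling a flow cycle --- exactly analogous to the routing performed in the proof of Lemma~\ref{lemma:integral_assignment_1}. Since we only move mass between centers in $S$ and only through points already assigned to those centers by $x$ (which is supported on $G_\tau^3$ by~\eqref{eq:max_assign_dist}), no new assignment edges are needed and the resulting $x'_{ij}$ remains a valid fractional assignment.
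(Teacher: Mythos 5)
Your proposal is correct and uses essentially the same key idea as the paper: for each transfer of $f_1$ units of color-$1$ mass from a center $i_1$ to a center $i_2$, transfer $(r_h/r_1)\cdot f_1$ units of color-$h$ mass along the same edge, with feasibility guaranteed because the fairness constraint~\eqref{eq:fair} keeps the per-center color masses in exact ratio $r_1:\dots:r_g$ at all times. Your write-up is more careful than the paper's (explicitly decomposing into elementary transfers and addressing the point-level bookkeeping that the paper leaves implicit), but this is elaboration rather than a different route.
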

\begin{proof}
Say $f_1$ amount of color $1$ flow is reassigned from center $i_1$ to another center $i_2$. 
Reassign $f_h = r_h \cdot f_1/ r_1$ amount of color $h$ flow from $i_1$ to $i_2$ for each color $h \in Col \setminus\{1\}$. This is possible as constraint~\eqref{eq:fair} implies that the amount of color $h$ points assigned to $i_1$ must be equal to $\frac{r_h}{r_1}$ times the amount of color $1$ points assigned to $i_1$ and $f_1$ must be less than the amount of color $1$ points assigned to $i_1$.
It is easy to verify that the ratios at $i_1$ and $i_2$ remain unchanged as by construction the ratio of the reassigned flows is equal to the original ratio.
\end{proof}
From Lemmas~\ref{lemma:integral_assignment_1} and~\ref{lemma:other_color_reassignment} we can say that there is a fair fractional assignment within distance $5\tau$ such that all the color $1$ assignments are integral and every center $i$ has $\Gamma(i)$ color $1$ vertices assigned to it. 
Since this assignment is fair the total incoming color $h$ flow at each center must be $\Gamma(i) \frac{q_h}{q_1}$ which are integers for every center $i \in S$ and every color $h \in Col$.

\begin{lemma}
\label{lemma:integral_assignment}
There exists an integral fair assignment in $G_{\tau}^5$.
\end{lemma}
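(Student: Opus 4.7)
The plan is to bootstrap Lemmas~\ref{lemma:integral_assignment_1} and~\ref{lemma:other_color_reassignment} to obtain an integral assignment one color at a time. First, by Lemma~\ref{lemma:integral_assignment_1}, every center $i \in S$ receives exactly $\Gamma(i)$ color~$1$ points integrally, all within distance at most $5\tau$. Then, by Lemma~\ref{lemma:other_color_reassignment}, we can produce a (possibly fractional) fair assignment of the remaining colors on the \emph{same} distance-$5$ edges in which each center $i$ now receives exactly $\Gamma(i)\cdot q_h/q_1$ units of color~$h$ for every $h \in Col\setminus\{1\}$. The key arithmetic observation is that because $\Gamma(i)$ is a positive integer multiple of $q_1$, the quantity $\Gamma(i)\cdot q_h/q_1$ is itself a positive integer, so the target demand at each center is integral.

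To make each remaining color integral, I would rerun the flow argument used in Lemma~\ref{lemma:integral_assignment_1} one color at a time. Fix a color $h \neq 1$ and construct a bipartite flow network whose left side is $col_h(P)$ and whose right side is $S$, with a source $s$ connected to every $j \in col_h(P)$ by a unit-capacity arc, a unit-capacity arc from $j$ to $i$ whenever $(i,j) \in E_\tau^5$, and an arc from $i$ to a sink $t$ of capacity $\Gamma(i)\cdot q_h/q_1$. The fractional color-$h$ assignment furnished by Lemma~\ref{lemma:other_color_reassignment} is a feasible fractional $s$--$t$ flow of value $|col_h(P)|$ in this network, because every such fractional assignment only uses edges of $G_\tau^5$ and saturates the demand $\Gamma(i)\cdot q_h/q_1$ at each $i$. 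Since all capacities are integers, there is an integral maximum flow of the same value, which immediately induces an integral assignment of all color~$h$ points within distance $5\tau$ hitting the exact demands.

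Doing this independently for every color $h \in Col\setminus\{1\}$ and combining with the integral color-$1$ assignment from Lemma~\ref{lemma:integral_assignment_1}, we obtain, for every $i \in S$, exactly $\Gamma(i)$ color-$1$ points and $\Gamma(i)\cdot q_h/q_1$ color-$h$ points. The per-center ratios therefore match $r_h(P) = q_h/m$ for all colors, so the overall assignment is fair in the exact sense, and every assignment arc lies in $E_\tau^5$, giving maximum distance $5\tau$.

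The main obstacle is ensuring that treating each remaining color by an independent max-flow computation is consistent with the fractional solution from Lemma~\ref{lemma:other_color_reassignment}: concretely, that the fractional color-$h$ flow obtained from re-routing a color-$1$ reassignment along pairs $(i_1,i_2) \in S \times S$ can only use edges still present in $E_\tau^5$. This follows because a reassignment from $i_1$ to $i_2$ only requires that some color-$h$ points currently assigned to $i_1$ be reassigned to $i_2$, and by construction both centers were reachable by color-$h$ mass in the original LP solution (hence within $G_\tau^3 \subseteq G_\tau^5$) and by triangle-inequality arguments through $i_1$ and the child pointers in $T$ no new edge outside $G_\tau^5$ is needed. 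Once that consistency is verified, the integrality of capacities in each single-color network closes the argument.
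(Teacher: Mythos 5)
Your proof takes essentially the same route as the paper: you construct, for each color $h \neq 1$, a per-color bipartite flow network on $col_h(P)$ and $S$ with $E_\tau^5$-edges of unit capacity, sink capacities $\Gamma(i)q_h/q_1$ (the paper writes this equivalently as $\Gamma(i)\frac{r_h}{r_1}$), show feasibility from the fractional fair assignment guaranteed by Lemmas~\ref{lemma:integral_assignment_1} and~\ref{lemma:other_color_reassignment}, and invoke integrality of max-flow with integral capacities. This is exactly the paper's argument.

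One small imprecision in your ``main obstacle'' paragraph: when color-$1$ mass is rerouted from a child $i_1$ to its parent $i_2$ in $T$, the corresponding color-$h$ mass reassigned by Lemma~\ref{lemma:other_color_reassignment} must be drawn from points within distance $2\tau$ of $i_1$ (guaranteed to exist in sufficient quantity by constraint~\eqref{eq:minreq}, since the moved amount $r_h\delta(i_1)/r_1 < q_h$), so that distance $2\tau$ plus the $3\tau$ tree-edge gives $5\tau$. Your phrase ``within $G_\tau^3$'' would give a bound of $6\tau$, which is too weak; the correct bookkeeping uses the distance-$2$ reserve from~\eqref{eq:minreq}, mirroring the color-$1$ argument in Lemma~\ref{lemma:integral_assignment_1}. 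That said, the paper itself does not belabor this point, so the observation that it needs checking is a useful addition.
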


\begin{proof}
Construct a flow network for color $h$ vertices similar to the one in lemma~\ref{lemma:integral_assignment_1}: Take sets $col_h(P)$ and $S$ to form a bipartite graph with an edge of capacity one between a vertex $j \in col_h(P)$ and a center $i \in S$ if and only if $(i, j) \in E_{\tau}^5$. Connect a source $s$ with unit capacity edges to all vertices in $col_h(P)$  and each center $i \in S$ with capacity $\Gamma(i)\frac{r_h}{r_1}$ to a sink $t$. 
The above fractional assignment in $G_{\tau}^5$ gives a flow for the above network. Since the network only consists of integral demands and capacities, there is an integral max-flow which gives the assignment for the color $h$ vertices. 
\end{proof}

\begin{theorem}
\label{thm:5_approximation:kcenter}
There exists a $5$-approximation for the fair $k$-center problem with exact preservation of ratios.
\end{theorem}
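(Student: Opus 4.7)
The plan is to assemble the algorithm from the three structural lemmas already established (Lemmas~\ref{lemma:integral_assignment_1}, \ref{lemma:other_color_reassignment}, and~\ref{lemma:integral_assignment}) and verify that, for the correct guess of $\tau$, every step of the pipeline succeeds and produces a fair integral assignment of radius at most $5\tau$.

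First, I would enumerate the polynomially many candidate thresholds $\tau$ (the distinct pairwise distances in $P$). For each $\tau$, I build the threshold graph $G_\tau$, run the Khuller--Sussmann \textsc{AssignMonarchs} procedure to obtain a maximal independent set $S$ of $G_\tau^2$ together with the spanning tree/forest $T$, and try to solve the LP with constraints~\eqref{eq:fair}--\eqref{eq:max_assign_dist}. The output will be the rounded integral assignment from the smallest $\tau$ for which the pipeline succeeds.

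Second, I would verify feasibility of the pipeline for the correct guess $\tau^{\ast} = \OPT$. Let $S_{\mathrm{opt}}$ be the optimal centers and $\phi_{\mathrm{opt}}$ the optimal fair assignment. For each $i \in S$, any point assigned by $\phi_{\mathrm{opt}}$ to some $o \in S_{\mathrm{opt}}$ with $d(i, \phi_{\mathrm{opt}}^{-1}(o)) \le \tau^{\ast}$ forces $d(i,o) \le 2\tau^{\ast}$. Because $S$ is independent in $G_{\tau^{\ast}}^2$, the induced map $i \mapsto o_i$ must be injective, so $|S| \le |S_{\mathrm{opt}}| \le k$. Moreover, by Observation~\ref{obs:fairletmultiples} the cluster around $o_i$ contains at least $q_h$ points of color $h$ within distance $\tau^{\ast}$ of $o_i$, hence within distance~$3$ of $i$, and every point of $P$ is within distance $3$ of some $i \in S$. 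This witnesses that constraints~\eqref{eq:minreq} and~\eqref{eq:max_assign_dist} are simultaneously satisfiable, so the LP is feasible and can be solved optimally in polynomial time.

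Third, starting from any optimal LP solution for $\tau^{\ast}$, I would apply Lemma~\ref{lemma:integral_assignment_1} to obtain an integral assignment of color-$1$ points in which each $i \in S$ receives exactly $\Gamma(i)$ points, all at distance at most $5\tau^{\ast}$. Since $\Gamma(i)$ is a multiple of $q_1$, the induced color-$h$ demand $\Gamma(i)\,q_h/q_1$ at each center is an integer; Lemma~\ref{lemma:other_color_reassignment} guarantees that a fractional fair reassignment at distance $\le 5\tau^{\ast}$ realizing these demands exists, and Lemma~\ref{lemma:integral_assignment} rounds the remaining colors to an integral assignment within distance $5\tau^{\ast}$. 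The resulting combined assignment is exactly fair and has radius at most $5\OPT$, yielding the claimed $5$-approximation.

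The only pieces not already formalized above are the $|S| \le k$ bound and the LP-feasibility witness at $\tau^{\ast}$; these are the main things to check carefully, and both follow from the independence of $S$ in $G_{\tau^{\ast}}^2$ together with Observation~\ref{obs:fairletmultiples}. Everything else is a direct composition of Lemmas~\ref{lemma:integral_assignment_1}--\ref{lemma:integral_assignment}.
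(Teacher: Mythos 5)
Your proposal follows the same pipeline as the paper: enumerate candidate thresholds, obtain $S$ via \textsc{AssignMonarchs}, solve the adapted LP~\eqref{eq:fair}--\eqref{eq:max_assign_dist}, and round via Lemmas~\ref{lemma:integral_assignment_1}, \ref{lemma:other_color_reassignment}, and~\ref{lemma:integral_assignment}. The paper's proof of the theorem is merely the citation ``Follows from Lemmas 8--10''; you correctly identify that the non-obvious pieces are the bound $|S| \le k$ and LP feasibility at $\tau^\ast$, both of which the paper supplies in the preceding text rather than the proof environment.

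However, the feasibility argument as you wrote it loses a hop. Since each $i \in S \subseteq P$ is itself a point of $P$, it is assigned by $\phi_{\mathrm{opt}}$ to some $o_i := \phi_{\mathrm{opt}}(i) \in S_{\mathrm{opt}}$ with $d(i,o_i) \le \tau^\ast$ (distance one in $G_{\tau^\ast}$), not merely $\le 2\tau^\ast$. You need this tighter bound for two reasons. First, injectivity of $i \mapsto o_i$: if $o_i = o_j$ for distinct $i,j \in S$, then $d(i,j) \le d(i,o_i) + d(o_j,j) \le 2\tau^\ast$, contradicting independence of $S$ in $G_{\tau^\ast}^2$ --- but with the $2\tau^\ast$ bound you only get $d(i,j) \le 4\tau^\ast$, which does not contradict anything. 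Second, constraint~\eqref{eq:minreq} requires $q_h$ points of color $h$ within $E_{\tau}^2$ of $i$; with $d(i,o_i)\le\tau^\ast$ and each color-$h$ point of $\phi_{\mathrm{opt}}^{-1}(o_i)$ within $\tau^\ast$ of $o_i$, you get distance two as required. Your ``within distance $3$ of $i$'' does not certify~\eqref{eq:minreq}, only the weaker constraint~\eqref{eq:max_assign_dist}. With this one-hop correction the argument is complete and matches the paper's.
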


\begin{proof}
Follows from Lemmas~\ref{lemma:integral_assignment_1},~\ref{lemma:other_color_reassignment} and~\ref{lemma:integral_assignment}
\end{proof}

\subsection{\texorpdfstring{$7$}{7}-approximation for \texorpdfstring{$k$}{k}-suppliers}
We adapt the algorithm in Section~\ref{sec:five_approx} to work for the $k$-suppliers model to give a 7-approximation for the variant with exact preservation of ratios. In the $k$-suppliers model, we are not allowed to open centers anywhere in $P$. Instead, we are provided a set $L$ of potential locations to open centers. The procedure closely resembles the $k$-center algorithm: construct a bipartite threshold graph $G_{\tau} = (P \cup L, E_{\tau})$ where $E_{\tau} = \{(i,j) \mid i \in L, j \in P, d(i,j) \leq \tau \}$. Choose a \emph{root} vertex $r \in P$ into $S$ and mark all vertices in $P$ that are within distance two. Until all vertices in $P$ are marked, choose an unmarked vertex $u \in P$ that is distance two away from a marked vertex and mark all vertices in the distance two neighborhood of $u$. Note that, since $G_{\tau}$ is bipartite, no two vertices in $P$ are adjacent. The vertex $u$ is exactly at distance four from a vertex $u' \in S$ chosen earlier. This process of selecting vertices in $S$ defines a tree $T$ over them with the property that adjacent vertices in $T$ are exactly at distance four of each other in $G_{\tau}$. Since we apply the procedure separately for each of the connected components of the threshold graph, we may safely assume that $G_{\tau}$ is connected.

Let us now temporarily open one center at each vertex in $S$ and make the following observations for the $k$-suppliers case:
\begin{enumerate}
\item Observation~\ref{obs:fairletmultiples} still holds.
\item The corresponding LP is the same as the $k$-center LP, except it has $E_{\tau}^4$ in place of $E_{\tau}^3$ in constraint~\eqref{eq:max_assign_dist}. This ensures the feasibility of the LP since every location in $L$ is at most distance three away from some vertex in $S$.
(Note that in case $G_\tau$ is not connected, it can happen that some locations in $L$ are not connected to any point and therefore more than distance three away from some vertex in $S$, but since they are not connected to any point we can safely ignore them, as they cannot be part of the optimal solution.)
\item Lemma~\ref{lemma:integral_assignment_1} with $G_{\tau}^6$ instead of $G_{\tau}^5$ holds. The extra distance of one is introduced because the distance between a child vertex and its parent vertex in $T$ is four instead of three.
\item Lemma~\ref{lemma:other_color_reassignment} holds as it is and Lemma~\ref{lemma:integral_assignment} holds when $G_{\tau}^5$ is replaced with $G_{\tau}^6$.
\end{enumerate}

Thus we have a distance six fair assignment to centers in $S$. However, this is not a valid solution for $k$-suppliers as $S \subseteq P$ and we are allowed to open centers only in $L$. So, we move each of these temporary centers to a neighboring location in $L$ to obtain a distance seven assignment.

\section{NP-hardness of the fair assignment problem for \texorpdfstring{$k$}{k}-center}\label{sec:nphardness:assignmentproblem}

In this section, we reduce the Exact Cover by 3-sets to the fair assignment problem for $k$-center. 
The input to the Exact Cover by 3-sets problem is a ground set $\mathcal{U}$ of elements and a family $\mathcal{F}$ of subsets such that each set has exactly three elements from $\mathcal{U}$. The objective is to find a set cover such that each element is included in exactly one set. For example, let $\mathcal{U} = \{a,b,c,d,e,f\}, \mathcal{F} = \{A = \{a,b,c\}, B = \{b,c,d\}, C = \{d,e,f\}\}$ be an instance. The set $\{A,C\}$ is an exact cover.
We call the problem of computing a cost-minimal fair assignment of points to given centers the \emph{fair assignment problem}. It exists once for every objective listed above. Even for $k$-center, the fair assignment problem is NP-hard. This can be shown by a reduction from Exact Cover by 3-sets,
 a variant of set cover. The input is a ground set $\mathcal{U}$ of elements and a family $\mathcal{F}$ of subsets such that each set has exactly three elements from $\mathcal{U}$. The objective is to find a set cover such that each element is included in exactly one set. For example, let $\mathcal{U} = \{a,b,c,d,e,f\}, \mathcal{F} = \{A = \{a,b,c\}, B = \{b,c,d\}, C = \{d,e,f\}\}$ be an instance. The set $\{A,C\}$ is an exact cover.

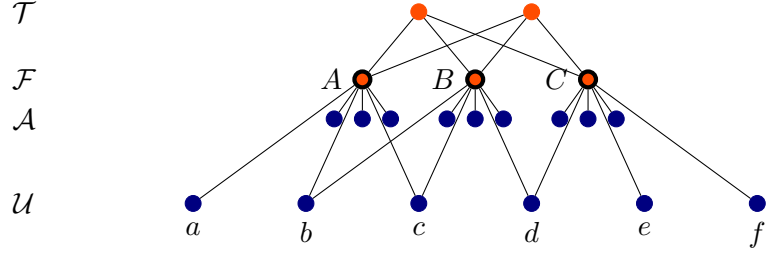
\begin{figure}
\centering
  \begin{tikzpicture}[scale=0.75,
    ]   
			\node at (-5,1.2) {$\mathcal{F}$};
      \draw (1,1.2) node[rp,label=left:{$A$},draw=black,ultra thick] (f1) {};
      \draw (3,1.2) node[rp,label=left:{$B$},draw=black,ultra thick] (f2) {};
      \draw (5,1.2) node[rp,label=left:{$C$},draw=black,ultra thick] (f3) {};
			\node at (-5,-1) {$\mathcal{U}$};
      \draw (-2,-1) node[bp,label=below:{$a$}] (u1) {};
      \draw (0,-1) node[bp,label=below:{$b$}] (u2) {};
      \draw (2,-1) node[bp,label=below:{$c$}] (u3) {};
      \draw (4,-1) node[bp,label=below:{$d$}] (u4) {};
      \draw (6,-1) node[bp,label=below:{$e$}] (u5) {};
      \draw (8,-1) node[bp,label=below:{$f$}] (u6) {};
			\draw (f1) -- (u1); \draw(f1) -- (u2); \draw(f1) -- (u3);
			\draw (f2) -- (u2); \draw(f2) -- (u3); \draw(f2) -- (u4);
			\draw (f3) -- (u4); \draw(f3) -- (u5); \draw(f3) -- (u6);		
			\node at (-5,0.5) {$\mathcal{A}$};
      \draw (0.5,0.5) node[bp] (f1a1) {};
      \draw (1,0.5) node[bp] (f1a2) {};
      \draw (1.5,0.5) node[bp] (f1a3) {};
			\draw (2.5,0.5) node[bp] (f2a1) {};
      \draw (3,0.5) node[bp] (f2a2) {};
      \draw (3.5,0.5) node[bp] (f2a3) {};
			\draw (4.5,0.5) node[bp] (f3a1) {};
      \draw (5,0.5) node[bp] (f3a2) {};
      \draw (5.5,0.5) node[bp] (f3a3) {};
			\draw (f1) -- (f1a1);\draw (f1) -- (f1a2);\draw (f1) -- (f1a3);
			\draw (f2) -- (f2a1);\draw (f2) -- (f2a2);\draw (f2) -- (f2a3);
			\draw (f3) -- (f3a1);\draw (f3) -- (f3a2);\draw (f3) -- (f3a3);
			\node at (-5,2.4) {$\mathcal{T}$};
			\draw (2,2.4) node[rp] (t1) {};
      \draw (4,2.4) node[rp] (t2) {};
			\draw (t1) -- (f1); \draw (t1) -- (f2); \draw (t1) -- (f3);
			\draw (t2) -- (f1); \draw (t2) -- (f2); \draw (t2) -- (f3);
\end{tikzpicture}
\caption{Example for the reduction from Exact Cover with 3-sets to the fair assignment problem for $k$-center, with $\mathcal{U}=\{a,b,c,d,e,f\}$ and $\mathcal{F} = \{A = \{a,b,c\}, B = \{b,c,d\}, C = \{d,e,f\}\}.$\label{fig:reduction}}
\end{figure}
For an instance $\mathcal{U}, \mathcal{F}$ of the exact cover problem, we construct an unweighted graph, which then translates to an input for the fair assignment problem for $k$-center by assigning distance $1$ to each edge and using the resulting graph metric.
The vertices consist of $\mathcal{U}$, $\mathcal{F}$ and two sets defined below, $\mathcal{A}$ and $\mathcal{F}$.
We start by adding an edge between all $e \in \mathcal{U}$ and any $A \in \mathcal{F}$ iff $e \in A$. 
We assign color red to the vertices from $\mathcal{F}$ and blue to those from $\mathcal{U}$. 
Then we construct a set $\mathcal{A}$ which contains three auxiliary blue vertices for each vertex in $\mathcal{F}$.
These are exclusively connected to their corresponding vertex in $\mathcal{F}$. 
Then we construct a set $\mathcal{T}$ of $|\mathcal{U}|/3$ red vertices.\footnote{Note that if $|\mathcal{U}|$ is not a multiple of three, it cannot have an exact cover, so we can assume that $|\mathcal{U}|$ is a multiple of three.} and connect each vertex in $\mathcal{T}$ to every vertex in $\mathcal{F}$. Finally, we open a center at each vertex in $\mathcal{F}$. The construction is shown in Figure~\ref{fig:reduction}.
Observe that the distance between an element $e \in \mathcal{U}$ and an open center at $A\in \mathcal{F}$ in this construction is $1$ iff $e \in A$, and otherwise, it is $3$: If $e \notin A$, then there is no edge between $e$ and $A$, and since there are no direct connections between the centers, the minimum distance between $e$ and another open center is $3$. 

\begin{lemma}\label{lemma:reduction:exactcovertofairassignment}
If there exists an exact cover, there exists a fair assignment of cost $1$ where the red:blue ratio is 1:3 for each cluster.
\end{lemma}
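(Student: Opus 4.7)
The plan is to exhibit an explicit fair assignment of cost $1$ directly from a given exact cover $\mathcal{F}^\star \subseteq \mathcal{F}$. Recall that since each set in $\mathcal{F}^\star$ covers exactly three elements and together they cover $\mathcal{U}$ exactly once, we have $|\mathcal{F}^\star| = |\mathcal{U}|/3 = |\mathcal{T}|$. This size coincidence is what makes the construction work and is the one place where the exact-cover hypothesis is used.

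Concretely, I would fix an arbitrary bijection $\sigma : \mathcal{F}^\star \to \mathcal{T}$ and define the clusters as follows. For each $A \in \mathcal{F}^\star$, put into the cluster centred at $A$ the three elements of $\mathcal{U}$ contained in $A$, the three auxiliary vertices of $\mathcal{A}$ attached to $A$, and the single vertex $\sigma(A) \in \mathcal{T}$. For each $A \in \mathcal{F} \setminus \mathcal{F}^\star$, put into the cluster centred at $A$ only the three auxiliary vertices of $\mathcal{A}$ attached to $A$. I would then check that every point of $P$ is assigned to exactly one centre: each $e \in \mathcal{U}$ belongs to a unique $A \in \mathcal{F}^\star$ (exact cover), each auxiliary vertex has a unique neighbour in $\mathcal{F}$, each vertex of $\mathcal{T}$ is the image of a unique cover set under $\sigma$, and each centre in $\mathcal{F}$ is assigned to itself.

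For the cost bound, every assignment above uses a direct edge of the construction: $e \in A$ is joined to $A$ by an $\mathcal{U}$-$\mathcal{F}$ edge of length $1$; auxiliaries are joined to their $\mathcal{F}$-vertex by an edge of length $1$; vertices of $\mathcal{T}$ are adjacent to every $\mathcal{F}$-vertex, so $\sigma(A)$ is at distance $1$ from $A$; and the centre itself is at distance $0$. Hence the maximum distance is $1$.

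For fairness, I would compute the red:blue ratio in each cluster. A cover cluster (for $A \in \mathcal{F}^\star$) contains $2$ red points (the centre $A$ and $\sigma(A) \in \mathcal{T}$) and $6$ blue points (three covered elements of $\mathcal{U}$ plus three auxiliaries), giving ratio $1{:}3$. A non-cover cluster contains $1$ red point (the centre $A$) and $3$ blue points (three auxiliaries), again $1{:}3$. Both types of clusters are therefore fair, which completes the argument. No step is truly difficult; the only thing to get right is that the counts $|\mathcal{F}^\star| = |\mathcal{T}|$ and the auxiliary-vertex multiplicities were chosen in the reduction precisely so that ratio $1{:}3$ arises in both cluster types simultaneously.
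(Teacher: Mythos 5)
Your proposal is correct and takes essentially the same approach as the paper: assign each $A\in\mathcal{F}$ and its three auxiliaries to the center at $A$, and if $A$ is in the exact cover additionally assign its three covered elements and one vertex of $\mathcal{T}$ (via a bijection, which the paper leaves implicit). Your write-up just makes the bijection $\sigma$ and the 2:6 versus 1:3 counting explicit, whereas the paper states the fairness check is "straightforward to verify."
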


\begin{proof}
Assign each red vertex $A \in \mathcal{F}$ and the three auxiliary blue vertices connected to it to the center at $A$. If $A$ is in the exact cover, assign the three blue vertices representing its elements and one red vertex from $\mathcal{T}$ to the center at $A$. It is straightforward to verify that this assignment is fair and assigns every vertex to some center to which it is connected via a direct edge.
\end{proof}

\begin{lemma}\label{lemma:reduction:fairassignmenttoexactcover}
If there exists a fair assignment where red:blue = 1:3 for all clusters of cost less than $3$, there exists an exact cover.
\end{lemma}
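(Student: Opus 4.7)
The plan is to exploit the tight distance structure: in the constructed graph the only distances that appear between an element vertex and a center are $1$ and $3$ (and $0$ when a center is considered as a point). Thus a fair assignment of cost strictly less than $3$ must assign every point to a center at distance at most $1$, which will force almost all assignments and let the fairness constraint do the combinatorial work.

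More concretely, I would proceed in the following steps. First, catalogue the forced assignments that a cost-$<3$ solution must make: every red center vertex $A \in \mathcal{F}$ is assigned to itself; each auxiliary blue vertex in $\mathcal{A}$ corresponding to $A$ has $A$ as its only neighbor at distance $1$, so it must go to $A$; each element $e \in \mathcal{U}$ must be assigned to some $A \in \mathcal{F}$ with $e \in A$ (its only distance-$1$ centers); and each $T \in \mathcal{T}$ may be assigned to any $A \in \mathcal{F}$ (all at distance $1$). Second, for each $A \in \mathcal{F}$ let $t_A$ denote the number of $\mathcal{T}$-vertices sent to $A$ and $b_A$ the number of elements of $A$ sent to $A$. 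Then the cluster at $A$ contains $1 + t_A$ red points (itself plus $\mathcal{T}$-contributions) and $3 + b_A$ blue points (the three auxiliaries plus the elements).

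Third, apply the fairness condition red:blue $= 1{:}3$ at each center, which gives $3(1 + t_A) = 3 + b_A$, i.e.\ $b_A = 3 t_A$. Since $A$ has only three elements, $b_A \in \{0,1,2,3\}$, so the divisibility forces $b_A \in \{0,3\}$ and $t_A \in \{0,1\}$. Finally, let $\mathcal{C} := \{A \in \mathcal{F} : b_A = 3\}$. Every $e \in \mathcal{U}$ is assigned to some $A$ with $b_A > 0$, hence $A \in \mathcal{C}$, and because each element is assigned to exactly one center, the sets in $\mathcal{C}$ are pairwise disjoint and together cover $\mathcal{U}$ — an exact cover. (As a sanity check, summing $b_A = 3 t_A$ over $A$ yields $|\mathcal{U}| = 3 |\mathcal{T}| = |\mathcal{U}|$, consistent with $|\mathcal{C}| = |\mathcal{U}|/3$.)

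The argument has no real obstacle; the only subtlety worth highlighting explicitly in the write-up is why the distance between an element $e \notin A$ and the center at $A$ equals exactly $3$ (go $e \to A' \to T \to A$ for some $A' \ni e$ and some $T \in \mathcal{T}$, and observe that no shorter path exists because there are no direct edges between centers and $e$'s only neighbors are sets containing it), which is what makes ``cost $< 3$'' collapse the problem to distance-$1$ edges and lets the integrality argument $b_A \in \{0,3\}$ go through.
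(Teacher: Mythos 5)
Your proof arrives at the right conclusion and follows essentially the same route as the paper's own argument: the graph distances force the auxiliaries and the $\mathcal{U}$-elements into their distance-$1$ clusters, the $1{:}3$ ratio forces the number $b_A$ of $\mathcal{U}$-elements in the cluster at $A$ to be a multiple of~$3$, and the sets with $b_A = 3$ form an exact cover.

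One imprecision worth correcting (the paper's own write-up shares it): the red vertex $A \in \mathcal{F}$ is \emph{not} forced to its own center by the cost constraint, because $d(A,B) = 2 < 3$ for any other center $B$ (path $A \to T \to B$ through any $T \in \mathcal{T}$); by the same token other $\mathcal{F}$-vertices could in principle be assigned to $A$. So your count of ``$1 + t_A$ red points'' at $A$ is not actually forced. Fortunately only the divisibility matters, not the exact red count: writing $r_A$ for the (unknown) number of red points landing at $A$, the ratio constraint gives $3 r_A = 3 + b_A$ with $0 \le b_A \le 3$, hence $b_A \equiv 0 \pmod 3$ and $b_A \in \{0,3\}$ regardless of how the $\mathcal{F}$- and $\mathcal{T}$-vertices are distributed. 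Your extraction of the exact cover from $\{A : b_A = 3\}$ then goes through exactly as written, with disjointness and coverage following from the forced placement of the $\mathcal{U}$-elements. Replacing the specific count $1 + t_A$ by an unconstrained $r_A$ makes the argument fully rigorous. (A smaller stylistic point: when you say the auxiliaries and elements ``must go to'' their distance-$1$ centers, the justification should be that \emph{all other} centers are at distance $\ge 3$, not merely that the target is the only neighbor at distance $1$.)
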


\begin{proof}
For $A \in \mathcal{F}$, the red vertex at $A$ and the three auxiliary blue vertices attached to it must be assigned to the center at $A$ as this is the only center within distance less than $3$. Also, no center can have more than two red vertices assigned to it because there are only six blue vertices in distance less than $3$ of any center. Therefore, each red vertex in $\mathcal{T}$ must be assigned to a distinct center and each such center $A$ will have exactly three blue vertices from $\mathcal{U}$ assigned to it which correspond to the elements in the set that $A$ represents. Thus, the sets corresponding to the centers that have two red vertices assigned to them form an exact cover for $\mathcal{U}$.
\end{proof}

\section{Integrality gap of the canonical clustering LP}\label{sec:appendix-integrality-gap}
We show that any integral fair solution needs large clusters to implement awkward ratios of the input points. 
This allows us to derive a non-constant integrality gap for the canonical clustering LP.
\begin{lemma}\label{lem:num-cluster-bound}
  Let $P$ be a point set with $r$ red and $r-1$ blue points and let $k \geq 1$.
  If the ratio of red points $r_{red}(C_i)$ is at most $\frac{r-k+1}{2r-2k+1}$ for each cluster $C_i$, then any fair solution can have at most $k$ clusters.
\end{lemma}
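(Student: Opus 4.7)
The plan is to convert the ratio upper bound into a linear inequality between the red count $a_i$ and blue count $b_i$ of each cluster $C_i$, then use integrality together with the fixed blue budget $\sum_i b_i = r-1$ to pigeonhole the number of clusters. Throughout I assume the interesting regime $r > k$; the cases $k \geq r$ are either vacuous (infeasible constraint) or trivial.

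First I would rewrite the hypothesis. Writing $a_i = |col_{\text{red}}(C_i)|$ and $b_i = |col_{\text{blue}}(C_i)|$, the bound $a_i/(a_i+b_i) \leq (r-k+1)/(2r-2k+1)$ is equivalent, after cross-multiplication, to
\[ (r-k)\, a_i \;\leq\; (r-k+1)\, b_i. \]
Since $r - k \geq 1$, this immediately forces $b_i = 0 \Rightarrow a_i = 0$, so every nonempty cluster satisfies $b_i \geq 1$. Moreover, because $a_i$ is an integer, the inequality sharpens to
\[ a_i \;\leq\; b_i + \left\lfloor \tfrac{b_i}{r-k} \right\rfloor. \]

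Next I would sum this integral inequality over all clusters and use $\sum_i a_i = r$ and $\sum_i b_i = r - 1$, getting
\[ r \;\leq\; (r-1) + \sum_i \left\lfloor \tfrac{b_i}{r-k} \right\rfloor, \qquad \text{i.e.} \qquad \sum_i \left\lfloor \tfrac{b_i}{r-k} \right\rfloor \;\geq\; 1. \]
Since $\lfloor b_i/(r-k)\rfloor \geq 1$ exactly when $b_i \geq r-k$, this says \emph{at least one cluster} -- call it $C_1$ -- must contain $b_1 \geq r-k$ blue points. This is the step where integrality is essential: the purely fractional version of the same sum collapses to the tautology $0 \leq k-1$ and yields no information.

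Finally I would finish by a pigeonhole on the remaining blue budget. If $m$ denotes the number of nonempty clusters, then the other $m-1$ nonempty clusters together contain at most $(r-1) - b_1 \leq (r-1) - (r-k) = k-1$ blue points; since each of them needs $b_i \geq 1$, we get $m - 1 \leq k - 1$, hence $m \leq k$ as claimed. The only real obstacle is the first (integrality) step above, since without the floor the summation is slack; once that is in hand, the blue-budget pigeonhole closes the argument cleanly.
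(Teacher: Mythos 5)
Your proof is correct for the regime $r > k$, which is the only case actually used in the paper (the integrality-gap construction invokes the lemma with $k=2$ and $r$ large), and it takes a genuinely different route from the paper's. The paper argues by contradiction: assuming $k' > k$ nonempty clusters, it asserts that the \emph{minimum possible} value of the maximum red ratio over all $k'$-clusterings is $\frac{r-k'+1}{2r-2k'+1}$, attained by the specific configuration with $k'-1$ clusters of one red and one blue point and one big leftover cluster, and then observes this minimum already exceeds the hypothesized bound; the extremality of that configuration is stated, not proved. Your argument sidesteps the extremal-construction step entirely: cross-multiplying the ratio bound into $(r-k)\,a_i \le (r-k+1)\,b_i$, sharpening by integrality to $a_i \le b_i + \lfloor b_i/(r-k)\rfloor$, and summing over clusters using $\sum a_i = r$ and $\sum b_i = r-1$ directly forces some cluster to hold at least $r-k$ blue points, after which the pigeonhole on the remaining $\le k-1$ blue points (one per nonempty cluster) closes the bound. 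In effect you give a clean direct proof of precisely the claim the paper only gestures at, and yours is the more rigorous of the two. One small caveat: you wave off $k \ge r$ as vacuous or trivial, but for $k = r$ the bound $\frac{r-k+1}{2r-2k+1}$ equals $1$ and constrains nothing, so the lemma as literally stated (with fairness meaning only the upper ratio bound) in fact fails for $r > 1$ -- a wrinkle that afflicts the paper's proof just as much and is harmless for the paper's application, but worth flagging rather than calling trivial.
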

\begin{proof}
  Consider a solution with $k' > k$ clusters.
  Since we have more red points there must be at least one cluster $C_i$ that contains more red points than blue points. 
  The ratio of red points $r_{red}(C_i)$ of this cluster is minimized if the solution contains $k'-1$ clusters with one blue and one red point, and one cluster with the remaining $r-k'$ blue and $r-k' +1$ red points.
  However, 
  \begin{align*}
    \frac{r-k'+1}{2r -2k' + 1} > \frac{r-k+1}{2r -2k + 1} 
  \end{align*}
  Since the highest ratio of red points in any other solution can only be higher, the claim follows.
\end{proof}
We remark that Lemma~\ref{lem:num-cluster-bound} is not true for essentially fair solutions.

The canonical fair clustering ILP consists of~\eqref{eq:general:lppointassigned}--\eqref{eq:obj:kcenter} and~\eqref{eq:general:fair}.
In the $k$-median/facility location case and in the $k$-means case, let write $\OPT_F$ for the optium value of its LP relaxation and and let us call the value of an optimum integral solution $\OPT_I$. 
We then define the integrality gap of the ILP as $\OPT_I / \OPT_F$.
In the $k$-center case, the ILP does not have an objective function, but we can define its integrality gap in the following sense: If $\tau_I$, $\tau_F$ is the smallest $\tau$ such that the LP-relaxation has a feasible \emph{integral} or \emph{fractional} solution, respectively, then we define the integrality gap as $\tau_I / \tau_F$.
\lemintgap*
\begin{proof}
\begin{figure}[ht]
\centering
  \includegraphics[width=13.5cm]{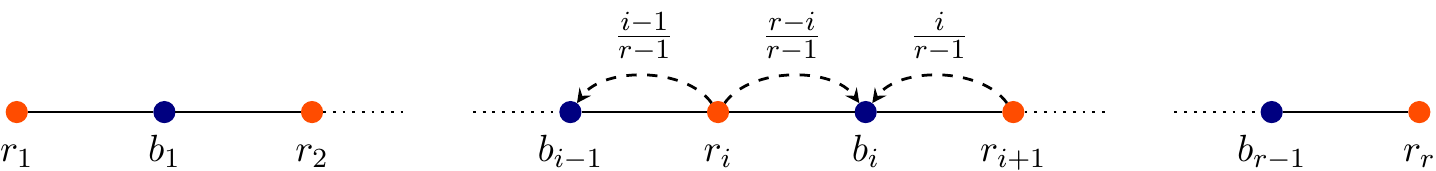}
 \caption{Integrality gap example.} \label{integralityGap}
\end{figure}
Consider the input points $P$ lying on a line, as shown in Figure ~\ref{integralityGap}. Specifically, we have $r$ red points $\{ r_1, r_2, \ldots, r_{r} \}$ that alternate with $r-1$ blue points $\{ b_1, b_2, \ldots, b_{r-1} \}$. The distance between consecutive points is $1$. 

We require that the ratio of the red points of each cluster is between $0$ and $(r-1)/(2r-3)$ and set $k = r-1$.
The input ratio $r/(2r-1)$ of the red points lies in the interior of this interval as
\begin{align*}
      \frac{r}{2r-1} < \frac{r-1}{2r-3} \iff 2r^2 - 3r < 2r^2 - 3r + 1,
\end{align*}
and thus our input is well-defined and the fairness relaxation is non-trivial.
We then ask for a clustering of $P$ with at most $k$ centers that respects the fairness constraints. 

Consider the following feasible solution for the LP-relaxation.
The solution opens a center at each of the $r-1=k$ blue points and assigns the blue point to itself and the red points on each side in the following way: for each $1 \le i \le r-1$, assign $r_i$ to $b_i$ by a fraction of $\frac{r - i} {r-1}$ and for each $2 \leq i \leq r$ assign $r_i$ to $b_{i-1}$ a fraction of $\frac{i-1}{r-1}$. 
Each red point is fully assigned in this way. 
We also get that in a cluster around some fixed $b_i$, the total assignment coming from red points is $\frac{r}{r-1}$  and the assignment coming from blue points is $1$; thus, each cluster has a ratio of red points of 
\begin{align*}
\frac{\frac{r}{r-1}}{1 + \frac{r}{r-1}} = \frac{\frac{r}{r-1}}{\frac{2r - 1}{r-1}} = \frac{r}{2r-1}.
\end{align*}
We therefore respect the balance requirements. 

However, as $(r-1)/(2r -3) = (r- k' + 1)/(2r -2k' + 1)$ for $k' = 2$, by Lemma~\ref{lem:num-cluster-bound} any integral solution satisfying the ratio requirement can at most open two centers.

\begin{itemize}
\item In the $k$-center case, the fractional solution has a radius of $1$ and the integral solution has a radius of at least $\lfloor{(r-1)/2}\rfloor = \Omega(n)$.
The $k$-center problem is a special case of the $k$-supplier problem; thus, the integrality gap for the $k$-supplier problem can only be larger.
\item In the $k$-median case, the fractional solution has a cost of $O(n)$: The blue points incur no cost and each red point $r_i$ contributes $(r-i) / (r-1)\cdot 1 + (i-1) /(r-1) \cdot 1 = 1$ to the objective function. 
Since the optimum integral solution can have at most two centers, it has to contain one cluster spanning at least $\lfloor r/2\rfloor$ consecutive points. This incurs a cost of at least $2\cdot\sum_{j=1}^{\lfloor r/4\rfloor-1} j = \Omega(n^2)$. 
\item In the facility location case, we observe that we can open at most two facilities in a fair integral solution.
Hence, the analysis for the $k$-median case carries over (even if we set all opening costs to zero).
\item In the $k$-means case, each red point $r_i$ incurs a cost of $(r-i) / (r-1)\cdot 1^2 + (i-1) /(r-1) \cdot 1^2 = 1$ in the fractional solution; the blue points again incur no cost as they are chosen as centers.
However, the integral solution now has a cost of at least $2\cdot\sum_{j=1}^{\lfloor r/4\rfloor-1} j^2 = \Omega(n^3)$.
\end{itemize}
\end{proof}
This integrality gap yields a lower bound on the quality guarantee of any LP-rounding approach for this ILP.
Thus, Lemma~\ref{lem:lp-integrality-gap} implies that no fair constant factor approximation for can be achieved by rounding the canoncial fair clustering ILP.
The counterexample in~\ref{lem:lp-integrality-gap} breaks down in the essential fairness model.

\bibliographystyle{plainurl}
\bibliography{references}

\appendix

\newpage
\section{Notation}

\begin{table}[htb]
 \centering
 \begin{tabular}{ll}
  \textbf{Object}  & \textbf{Notation} \\
	\toprule
  the set of points  & $P$ \\
  the set of possible locations & $L$ \\
  an individual point & $j$ or $p_j$\\
	an individual center & $i$ or $c_i$\\
	the number of points & $n$ \\
  the set of chosen locations/ the set of centers & $S$ \\
  an assignment of points to centers & $\phi: P \rightarrow S$ \\
	a clustering & $\mathcal{C} = (S,\phi)$ \\
	a cluster in a clustering & $P(i) = \{p \in P \mid \phi(p) = c_i\}$ \\
  the set of colors & $Col$ \\
  an individual color & $h$ or $col_h$ \\
  color assignment & $col: P \rightarrow Col$ \\
	points with color $col_h$ in $P' \subseteq P$ & $col_h(P') = \{j \in P' \mid col(j) = col_h\}$ \\
	the amount of points with color $col_h$ in $P' \subseteq P$ & $\mass_h(P') = |col_h(P')|$\\
	the amount of color $col_h$ assigned to $i$ in an (I)LP solution $(x,y)$ & $\mass_h(x,i) = \sum_{j\in col_h(P)} x_{ij}$\\
	the amount assigned to $i$ in an (I)LP solution $(x,y)$ & $\mass(x,i) = \sum_{j\in P} x_{ij}$\\
	the ratio of points with color $col_h$ in a set $P' \subseteq P$ & $r_h (P')  = \frac{|col_h(P')|}{|P'|}$\\
	the number of allowed clusters & $k$ \\
  the distance function & $d: (P\cup L) \times (P\cup L) \rightarrow \mathbb{R}_{\ge 0}$ \\
	opening costs & $f_i$ \\
  relaxed boarders for the ratio & $\ell_h= \frac{p_1^h}{q_1^h},u_h = \frac{p_2^h}{q_2^h} \in \mathbb{Q}_{\ge 0}$ \\
  a fairlet decomposition & $\mathcal{F}$ \\
  a fairlet & $F_i$ \\
	threshold & $\tau$ \\
	threshold graph with threshold $\tau$ on $P$ & $G_{\tau} = (P \cup L,E_{\tau})$\\
	edges in the threshold graph & $E_{\tau} = \{(i,j) \mid i \in L, j \in P, \dist(i,j) \le \tau\}$ \\ 
	$k$-center without fairness constraint & standard $k$-center\\
	(similar for other objectives) \\
	
	\bottomrule
 \end{tabular}
 \caption{Objects and their notation\label{table:notation}}
\end{table}

\section{Further related work}\label{sec:furtherrelatedwork}
Using $k$ centers to cluster points while minimizing a certain objective function has a long history in terms of results and applications. 
For the $k$-center problem in general metric spaces, the $2$-approximations developed by Gonzalez~\cite{G85} and Hochbaum and Shmoys~\cite{HS86} were shown to be tight by Hsu and Nemhauser~\cite{HN79}. 
The $k$-supplier problem can be $3$-approximated~\cite{HS86}, which is also tight.
Facility location can be $1.488$-approximated~\cite{L13}, which is very close to the known APX-hardness of $1.463$ for the problem~\cite{GK99}. For $k$-median, a recent breakthrough has led to a $2.765$-approximation~\cite{LS16,BPRST17}, while the best hardness result lies  below two~\cite{JMS02}.
The gap between best upper and lower bound is even larger for $k$-means, where a $6.357$-approximation is the best known~\cite{ANSW17}, and the newest hardness result is marginally above 1~\cite{ACKS15,LSW17}.

The wide applicability of $k$-center has given rise to numerous variants that strive to incorporate useful constraints into the solution, such as capacity constraints~\cite{BKP93,CHK12,KS00}, lower bounds on the size of each cluster~\cite{APFTKKZ10,AS16} or allowing for outliers~\cite{CKMN01,CK14}. 
The facility location problem has also been studied under capacity constraints~\cite{ALBGGGJ13,ASS17,BGG12}, with uniform lower bounds~\cite{AS12,S10}, and with outliers~\cite{CKMN01}.
Much less is known for $k$-median and $k$-means. True constant-factor approximations so far exist only for the outlier constraint~\cite{C08,KLS18}, and are not known for the variants with capacities or lower bounds. A major problem for obtaining constant factor approximations is that the natural LP for the problem has an unbounded integrality gap, which as we will see is also true for the LP with fairness constraints. 
Bicriteria approximations are known that either violate the capacity constraints~\cite{Li14,Li16,Li17} or the cardinality constraint~\cite{ABGL15}.

\section{Facts about the \texorpdfstring{$k$}{k}-means cost function}

We use some well-known facts about the $k$-means function when extending our results for $k$-median to $k$-means. The first one is that squared distances satisfy a relaxed triangle inequality:

\begin{lemma}\label{lemma:kmeans2rti}
It holds for all $x, y, z \in \R^d$ that
\[
||x-z||^2 \le 2 ||x-z||^2 + 2 ||z-y||^2.
\]
\end{lemma}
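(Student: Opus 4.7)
The statement as literally typeset is an inequality between two non-negative quantities in which the right-hand side already contains the left-hand side as a summand. Concretely, the claim is
\[
\|x-z\|^2 \;\le\; 2\|x-z\|^2 + 2\|z-y\|^2.
\]
My plan is therefore to prove it by a one-line non-negativity argument rather than by invoking any non-trivial geometric identity. Rearranging, the inequality is equivalent to $0 \le \|x-z\|^2 + 2\|z-y\|^2$, and both summands on the right are squared Euclidean norms and hence non-negative for every choice of $x,y,z \in \R^d$. Thus the inequality holds term-by-term, with equality iff $x=z$ and $z=y$ (which forces $x=y=z$).

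The only ``step'' is to cite non-negativity of $\|\cdot\|^2$ twice; there is no combinatorial or analytic obstacle. In particular, no appeal to the parallelogram identity, to Cauchy--Schwarz, or to the expansion $\|x-y\|^2 = \|x\|^2 - 2\langle x,y\rangle + \|y\|^2$ is needed for the statement as written.

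I expect that the intended statement was the standard $2$-relaxed triangle inequality for squared Euclidean distances, namely $\|x-y\|^2 \le 2\|x-z\|^2 + 2\|z-y\|^2$, which matches the use of $\beta=2$ in the $k$-means bounds earlier in the paper (see Lemma~\ref{lemma:yrounding}, the branch for $\beta=2$). Should the typo be corrected, my plan would instead be to expand $\|x-y\|^2 = \|(x-z)+(z-y)\|^2$, apply $\|a+b\|^2 \le 2\|a\|^2 + 2\|b\|^2$ (equivalently, $2\langle a,b\rangle \le \|a\|^2+\|b\|^2$ by AM--GM or $(\|a\|-\|b\|)^2 \ge 0$), and conclude. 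But for the literal statement above, the proof is simply the observation that adding non-negative quantities to $\|x-z\|^2$ only makes it larger.
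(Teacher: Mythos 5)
You correctly observe that the lemma as typeset contains a typo: the left-hand side should read $\|x-y\|^2$, not $\|x-z\|^2$, so that the statement is the standard $2$-relaxed triangle inequality used throughout the $k$-means analysis (the paper applies it with $\beta=2$ in Lemma~\ref{lemma:yrounding}). The paper itself offers no proof — it states the lemma as a well-known fact — so there is nothing to compare against; but your handling is sound on both readings. For the literal statement, your one-line non-negativity argument is complete. For the intended statement, your sketch via $\|x-y\|^2 = \|(x-z)+(z-y)\|^2 = \|x-z\|^2 + 2\langle x-z, z-y\rangle + \|z-y\|^2$ followed by $2\langle a,b\rangle \le \|a\|^2 + \|b\|^2$ is exactly the standard argument and is what the paper implicitly relies on. No gap.
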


The next lemma is also a folklore statement which can be extremely useful. It implies that the best $1$-means is always the centroid of a point set, and has further consequences, like Lemma~\ref{lemma:twofactorkmeans} which we state below, a fact which is also commonly used in approximation algorithms for the $k$-means problem. 

\begin{lemma}\label{lemma:magicformula}
For any $P \subset \R^d$, and $z \in \R^d$, 
\[
\sum_{x \in P} ||x-z||^2 = \sum_{x\in P} ||x - \mu(P)||^2 + |P| \cdot ||\mu(P)-z||^2,
\]
where $\mu(P) = \frac{1}{|P|} \sum_{x \in P} x$ is the centroid of $P$.
\end{lemma}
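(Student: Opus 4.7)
The plan is to prove this by the classical add-and-subtract trick: rewrite $x - z$ as $(x - \mu(P)) + (\mu(P) - z)$ and expand. Concretely, for each $x \in P$ I would use
\[
\|x-z\|^2 = \|(x-\mu(P)) + (\mu(P)-z)\|^2 = \|x-\mu(P)\|^2 + 2\langle x-\mu(P),\, \mu(P)-z\rangle + \|\mu(P)-z\|^2,
\]
and then sum this identity over $x \in P$.

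After summing, the first term on the right is exactly $\sum_{x\in P}\|x-\mu(P)\|^2$, and the third term is $|P|\cdot\|\mu(P)-z\|^2$ because it does not depend on $x$. The key step is to show that the cross term vanishes. I would pull the (constant in $x$) vector $\mu(P)-z$ out of the inner product to get $2\bigl\langle \sum_{x\in P}(x-\mu(P)),\, \mu(P)-z\bigr\rangle$, and then invoke the definition of the centroid: $\sum_{x\in P}(x-\mu(P)) = \sum_{x\in P} x - |P|\cdot \mu(P) = |P|\mu(P) - |P|\mu(P) = 0$. This is precisely the ``first moment about the mean is zero'' property and is the only nontrivial observation; there is no real obstacle in the argument.

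Combining the three pieces yields the claimed identity. The proof is self-contained and uses only that the $\mathbb{R}^d$ inner product is bilinear and that $\mu(P)$ is the arithmetic mean of the points of $P$; it does not rely on any previously established lemma in the paper.
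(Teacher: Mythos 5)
Your proof is correct and complete: the add-and-subtract decomposition followed by observing that the cross term $\sum_{x\in P}\langle x-\mu(P),\mu(P)-z\rangle$ vanishes because $\sum_{x\in P}(x-\mu(P))=0$ is exactly the standard argument for this identity. The paper itself labels this a ``folklore statement'' and omits a proof, so there is nothing to compare against; your argument is the canonical one and fills that gap cleanly.
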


One corollary of Lemma~\ref{lemma:magicformula} is that the optimum cost of the best discrete solution is not much more expensive than the best choice of centers from $\R^d$.

\begin{lemma}\label{lemma:twofactorkmeans}
Let $P\subset \R^d$ be a set of point in the Euclidean space, and let $C^\ast \subset \R^d$ be a set of $k$ points that minimizes the $k$-means objective, i.e., it minimizes
\[
\sum_{x \in P} \min_{c \in C} ||x-c||^2
\]
over all choices of $C \subset \R^d$ with $|C|=k$.
Furthermore, let $\hat{C}$ be the set of centers that minimizes the $k$-means objective over all choices of $C \subset P$ with $|C|=k$, i.e., the best choice of centers from $P$ itself.
Then it holds that
\[
\sum_{x \in P} \min_{c \in \hat{C}} ||x-c||^2 \le 
\sum_{x \in P} \min_{c \in C^\ast} ||x-c||^2.
\]
Thus, restricting the set of centers to the input point set increases the cost of an optimal solution by a factor of at most $2$.
\end{lemma}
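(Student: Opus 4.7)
The plan is to construct an explicit discrete center set $C \subseteq P$ of size $k$ whose cost is at most twice the cost of $C^*$, which suffices since $\hat{C}$ is optimal among all such discrete choices. Let $P^*_1, \dots, P^*_k$ denote the partition of $P$ into clusters induced by assigning each point to its nearest center in $C^*$, and let $c^*_j \in C^*$ be the center of cluster $P^*_j$. For each $j$, I will pick $p_j \in P^*_j$ to be the point closest to the centroid $\mu(P^*_j)$, and set $C := \{p_1, \dots, p_k\}$.

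The core computation uses Lemma~\ref{lemma:magicformula} twice. Applied with $z = p_j$, it gives
\[
\sum_{x \in P^*_j} \|x - p_j\|^2 \;=\; \sum_{x \in P^*_j} \|x - \mu(P^*_j)\|^2 \;+\; |P^*_j| \cdot \|p_j - \mu(P^*_j)\|^2.
\]
For the second term I use the averaging observation that, since $p_j$ minimizes $\|x - \mu(P^*_j)\|^2$ over $x \in P^*_j$, its value is at most the average, i.e.\ $|P^*_j| \cdot \|p_j - \mu(P^*_j)\|^2 \leq \sum_{x \in P^*_j} \|x - \mu(P^*_j)\|^2$. Substituting yields
\[
\sum_{x \in P^*_j} \|x - p_j\|^2 \;\leq\; 2 \sum_{x \in P^*_j} \|x - \mu(P^*_j)\|^2.
\]

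The second application of Lemma~\ref{lemma:magicformula}, now with $z = c^*_j$, reads $\sum_{x \in P^*_j} \|x - c^*_j\|^2 = \sum_{x \in P^*_j} \|x - \mu(P^*_j)\|^2 + |P^*_j| \cdot \|\mu(P^*_j) - c^*_j\|^2$, so in particular $\sum_{x \in P^*_j} \|x - \mu(P^*_j)\|^2 \leq \sum_{x \in P^*_j} \|x - c^*_j\|^2$. Combining with the previous inequality and summing over $j = 1, \dots, k$ gives $\sum_{j} \sum_{x \in P^*_j} \|x - p_j\|^2 \leq 2 \sum_j \sum_{x \in P^*_j} \|x - c^*_j\|^2$, which is exactly twice the cost of $C^*$. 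Since $\sum_{x \in P} \min_{c \in \hat C} \|x - c\|^2 \leq \sum_{x \in P} \min_{c \in C} \|x - c\|^2 \leq \sum_j \sum_{x \in P^*_j} \|x - p_j\|^2$ (the latter by assigning each point to the center of its own optimal cluster, which may not be nearest in $C$), the claim follows.

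I do not anticipate a serious obstacle: the argument is a short pigeonhole-style averaging on top of the Pythagorean-like identity stated in Lemma~\ref{lemma:magicformula}. The only subtle point is making sure the averaging step (replacing a continuous minimizer $\mu(P^*_j)$ by the best in-cluster point $p_j$) is charged cleanly against the cluster's own variance, which is why the magic formula is applied twice in opposite directions.
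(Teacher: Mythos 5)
Your proof is correct and is the standard pigeonhole-plus-Pythagoras argument: pick $p_j\in P^*_j$ closest to the centroid $\mu(P^*_j)$, apply Lemma~\ref{lemma:magicformula} once with $z=p_j$, use the averaging bound $|P^*_j|\cdot\|p_j-\mu(P^*_j)\|^2\le\sum_{x\in P^*_j}\|x-\mu(P^*_j)\|^2$, and apply the formula once more with $z=c^*_j$ to land on twice the cost of $C^*$. The paper states this lemma as a folklore fact and supplies no proof, so there is nothing to compare against; your argument fills the gap cleanly and matches the way the factor is actually used later (e.g.\ the $12\,c^{LP}+8\,\bar c$ bound for $k$-means in Lemma~\ref{lemma:yrounding}). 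One remark: as printed, the displayed inequality in Lemma~\ref{lemma:twofactorkmeans} is missing the factor of~$2$ on its right-hand side; the closing sentence and its downstream use make clear that the intended statement is $\sum_{x\in P}\min_{c\in\hat C}\|x-c\|^2\le 2\sum_{x\in P}\min_{c\in C^*}\|x-c\|^2$, which is exactly what you prove.
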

\section{Computing fair clusterings via fairlets}\label{sec:fairletapproach}

\subsection{Fairlets and fairlet decompositions} 
Chierichetti et al.~\cite{CKLV17} make an important observation: 
Any cluster of a fair clustering is composed of \emph{atomic} subclusters. 
For instance, if we are to achieve a ratio of 2/3 of red to blue points, then a cluster with two red and three blue cannot be divided into fair subclusters.
Moreover, any fair cluster with an overall balance ratio of 2/3 consists of an integer number of atomic subclusters -- each having two red and three blue points.
Figure~\ref{fig:fairlet-decomp-example} visualizes this concept. 
We follow~\cite{CKLV17} and call these atomic microclusters \emph{fairlets}.
The concept naturally generalizes to clusterings under relaxed fairness.
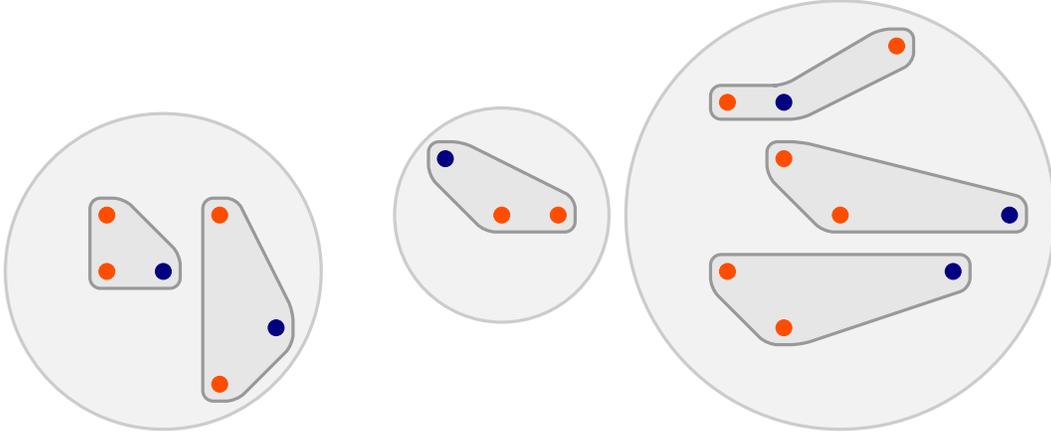
\begin{figure}
\centering
  \begin{tikzpicture}[scale=0.75,
    ]   
      \coordinate (r11) at (0,2);
      \coordinate (r12) at (0,3);
      \coordinate (r13) at (2,0);
      \coordinate (r14) at (2,3);
      \coordinate (b11) at (1,2);
      \coordinate (b12) at (3,1);

      \draw[cluster] (b11) circle[radius=2.8cm];    
      
      \draw[fairlet]      
            ($(r11)-(0.3, 0.3)$) -- ($(r12)+(-0.3,0.3)$) 
        -- ($(r12)+(0.3,0.3)$)  -- ($(b11)+(0.3,0.3)$) 
        -- ($(b11)+(0.3,-0.3)$) -- cycle
      ;
      \draw[fairlet] 
            ($(r13)-(0.3, 0.3)$) -- ($(r14)+(-0.3,0.3)$)
        --  ($(r14)+(0.3, 0.3)$) -- ($(b12)+(0.3,0.3)$)
        --  ($(b12)+(0.3, -0.3)$) -- ($(r13)+(0.3, -0.3)$)
        -- cycle
      ;

      \coordinate (r21) at (7,3);
      \coordinate (r22) at (8,3);
      \coordinate (b21) at (6,4);

      \draw[cluster] (r21) circle[radius=1.9cm];
      
      \draw[fairlet]
            ($(r21)-(0.3, 0.3)$)  -- ($(b21)-(0.3,0.3)$) 
         -- ($(b21)+(-0.3, 0.3)$) -- ($(b21)+(0.3,0.3)$)
         -- ($(r22)+(0.3, 0.3)$) -- ($(r22)+(0.3, -0.3)$)
         -- cycle
      ;

      \coordinate (r31) at (11,2);
      \coordinate (r32) at (12,1);
      \coordinate (r33) at (12,4);
      \coordinate (r34) at (13,3);
      \coordinate (r35) at (11,5);
      \coordinate (r36) at (14,6);
      \coordinate (b31) at (15,2);
      \coordinate (b32) at (16,3);
      \coordinate (b33) at (12,5);      
      
      \draw[cluster] (r34) circle[radius=3.8cm];      
      
      \draw[fairlet] 
           ($(r31) -(0.3, 0.3)$)  -- ($(r31) +(-0.3, 0.3)$)
        -- ($(b31) +(0.3,0.3)$)   -- ($(b31) +(0.3, -0.3)$)
        -- ($(r32) +(0.3,-0.3)$)   -- ($(r32) -(0.3, 0.3)$)
        -- cycle
      ;
      \draw[fairlet]
           ($(r33) +(-0.3, 0.3)$) -- ($(r33) +(0.3, 0.3)$)
        -- ($(b32) +(0.3, 0.3)$)  -- ($(b32) +(0.3, -0.3)$)
        -- ($(r34) -(0.3, 0.3)$)  -- ($(r33) -(0.3, 0.3)$)
        -- cycle
      ;
      \draw[fairlet]
           ($(r35) -(0.3,0.3)$)  -- ($(b33) +(0.3, -0.3)$)
        -- ($(r36) +(0.3,-0.3)$) -- ($(r36) +(0.3, 0.3)$)
        -- ($(r36) +(-0.3,0.3)$) -- ($(b33) +(0.0, 0.3)$)
        -- ($(b33) +(-0.3,0.3)$) -- ($(r35) +(-0.3, 0.3)$)
        -- cycle
      ;

      \draw (r11) node[rp] {};
      \draw (r12) node[rp] {};
      \draw (r13) node[rp] {};
      \draw (r14) node[rp] {};
      \draw (b11) node[bp] {};
      \draw (b12) node[bp] {};
      
      \draw (r21) node[rp] {};
      \draw (r22) node[rp] {};
      \draw (b21) node[bp] {};

      \draw (r31) node[rp] {};
      \draw (r32) node[rp] {};
      \draw (r33) node[rp] {};
      \draw (r34) node[rp] {};
      \draw (r35) node[rp] {};
      \draw (r36) node[rp] {};
      \draw (b31) node[bp] {};
      \draw (b32) node[bp] {};
      \draw (b33) node[bp] {};            
      
\end{tikzpicture}
\caption{\label{fig:fairlet-decomp-example}A decomposition of a fair clustering into fairlets with two red points~(\RedNode) and one blue point~(\BlueNode).}
\end{figure}
\begin{definition}
A set of points $F \subseteq P$ is called a \emph{fairlet}, if $F$ is balanced and cannot be partitioned into multiple, non-empty balanced sets.
Given $l = (l_0,\ldots,l_g)$ and $u = (u_0,\ldots,u_g)$ a set $F \subseteq P$ is called an $l,u$-fairlet, if $F$ is $l,u$-balanced and can not be partitioned into multiple, non-empty $l,u$-balanced sets.
\end{definition}
When we refer to fair clusters and fairlets we also mean $l,u$-fair and $l,u$-fairlets (unless stated otherwise) as most of the proof are analogous.
As observed above, every fair cluster can be partitioned into fairlets.
The converse is also true: The union of disjoint fairlets is a balanced clustering.

\begin{definition}
Let $P = \bigcup_{i\in I} F_i$ be a partition of $P$ such that for each $i \in I$ the set $F_i \subseteq P$ is a fairlet (an $l,u$-fairlet). Then we call $\mathcal{F} = \{F_i \mid i \in I\}$ a \emph{fairlet decomposition} (an $l,u$-fairlet decomposition).
\end{definition}
An intuitive way to look at a fairlet decomposition $\mathcal{F}$ is to view it as a fair clustering of $P$ with $k'=|I|$ clusters (where some fairlets might share a center).\footnote{Obtaining a fairlet decomposition from a $k'$-clustering requires additional work, however: If the clustering has strictly less than $k'$ centers, we need to subdivide some clusters into fairlets with a shared center.}   
In the following, we denote the center of fairlet $F_i \in \mathcal{F}$ in this clustering by $c_i$.
We then assign a cost to $\mathcal{F}$ in the following way
\begin{align*}
cost(\mathcal{F}) &=  
\begin{cases}
\max_{i \in I} \max_{p \in F_i} d(p,c_i)   & k-\text{center/supplier problem} \\
\sum_{i \in I} \sum_{p \in F_i} d(p,c_i)   & k-\text{median problem}/\text{facility location problem} \\
\sum_{i \in I} \sum_{p \in F_i} d(p,c_i)^2 & k-\text{means problem,} 
\end{cases}\\
&=
\begin{cases}
\max_{i \in I} \min_{c \in L} \max_{p \in F_i} d(p,c) & k-\text{center/supplier problem} \\
\sum_{i \in I} \min_{c \in L} \sum_{p \in F_i} d(p,c)   & k-\text{median problem}/\text{facility location problem} \\
\sum_{i \in I} \min_{c \in L} \sum_{p \in F_i} d(p,c)^2 & k-\text{means problem,} 
\end{cases}
\end{align*}
and call the problem to find a fairlet decomposition of minimal cost the \emph{fairlet decomposition problem}.
This assignment of costs is natural: The cost of $\mathcal{F}$ is exactly the cost of the corresponding $k'$-clustering.
It follows that the cost $\OPTfl$ of an optimum fairlet decomposition -- i.e., the cost of an optimum fair clustering of $P$ with $k' \geq k$ centers -- cannot be larger than the cost $\OPT$ of an optimum fair clustering with $k$ centers. 
In other words, we have $\OPTfl \leq \OPT$ for the fair $k$-center/supplier problem, for the fair $k$-median problem, for the fair $k$-means problem, and for the fair facility location problem. 

\subsection{The general fairlet approach for different problems}
To see why fairlets help us to compute fair clusterings let us first consider a simple example.
Suppose we are looking for an optimum fair $k$-center clustering of a point set $P$ and denote the cost of this clustering by $\OPT$.
Also suppose that we know an optimum fairlet decomposition $\mathcal{F}:=\{F_i \mid i \in I\}$ of $P$ with cost $\OPTfl$ that assigns a center $c_i$ to each $F_i\in \mathcal{F}$.
How can we use this knowledge to find a good clustering of $P$?
Since the union of disjoint fairlets yields a fair clustering, it seems a good idea to contract each fairlet to a single point and to then compute a colorblind clustering. 
This idea works indeed and was first proposed in~\cite{CKLV17}:
We interpret $c_i$ as a representative of $F$ and compute an $\gamma$-approximate colorblind $k$-center clustering $\mathcal{C}'$ of the representatives $P'=\{c_i \mid i \in I\}$. 
Let us say that $\mathcal{C}'$ assigns $c_i$ to a center $c(c_i) \in L$.
Then, we merge the fairlets that were assigned to the same center into a fair cluster: 
We obtain a fair clustering $\mathcal{C}$ by assigning the points $p \in F_i$ of each fairlet $F_i \in \mathcal{F}$ to $c(c_i)$.
As a union of disjoint fairlets each cluster of $\mathcal{C}$ must be fair. 
Also, by our previous observation that $\OPTfl \leq \OPT$, each fairlet must have a radius of at most $\OPT$.
Likewise, we can bound costs of the colorblind clustering $\mathcal{C}'$ by $\OPT$: 
An optimum colorblind clustering cannot be more expensive than an optimum fair clustering. 
Thus, the radius of each cluster in $\mathcal{C}$ can at most be $(1+\gamma)\cdot \OPT$.
We have found a $(1+\gamma)$-approximation!

In the remainder of this section, we explore what kind of approximation ratios we can expect when we cluster fairlets with a colorblind algorithm. 
To do so, we formalize the approach and generalize it to the $k$-supplier, the $k$-median and the $k$-means problem where we will need more advanced techniques for the analysis. 

\subsection{Using fairlets to obtain clusterings}
As before, we look for clusterings of a pointset $P$ and denote the cost of an optimum (exact or relaxed) fair $k$-clustering by $\OPT$ (we rely on the context to make clear whether this means $k$-center, $k$-median or $k$-means).
We let $\OPTfl$ denote the cost of an optimum decomposition of $P$ into $k'$ fairlets and we denote the cost of an optimum colorblind $k$-clustering of $P$ by $\OPTcb$.
Since any fair $k$-clustering is a valid $k'$-clustering we have $\OPTfl \leq \OPT$. 
Likewise, any fair $k$-clustering is a valid colorblind $k$-clustering and $\OPTcb \leq \OPT$ follows as well; just as in the above example.

In the sequel, we suppose that we can compute an $\alpha$-approximate fairlet decomposition $\mathcal{F} = \{F_i \mid i \in I\}$ of $P$, i.e. we suppose that $\cost(\mathcal{F}) \leq \alpha\OPTfl \leq \alpha \OPT$. 
We also suppose that we can compute a $\gamma$-approximate colorblind $C'$ approximation, i.e. we ask that $\cost(C') \leq \gamma \OPTcb \leq \gamma \OPT$. 
The current state-of-the-art does not provide non-trivial approximations in all cases, but we postpone this issue until the next section. 
For now, let us assume that we have the necessary blackbox algorithms and explore the resulting guarantees of the fairlet approach.

\paragraph{\kk-center.}
We repeat the argument from the introductory example: Given an $\alpha$-approximate fairlet decomposition of $P$ that assigns a center $c_i$ to each $F_i$ in $\mathcal{F}$, we compute a $\gamma$-approximate colorblind $k$-center solution $\mathcal{C}'$ of $P':=\{c_i \mid i \in I\}$.
If $c(c_i) \in L$ is the center assigned to $c_i$ in $\mathcal{C}'$, we obtain a fair clustering $\mathcal{C}$ of $P$ by assigning all $p \in F_i$ to $c(c_i)$.
We can now bound the distance of any point $p \in F_i$ to its center $c(c_i)$ in $\mathcal{C}$ as $d(p, c(c_i)) \leq d(p, c_i) + d(c_i, c(c_i))$ since $d$ is a metric. 
Yet, we have $d(p,c_i) \leq \alpha\OPTfl \leq \alpha \OPT$ and $d(c_i, c(c_i)) \leq \gamma \OPTcb \leq \gamma \OPT$. 
It follows that $\cost(C) \leq (\alpha+\gamma)\OPT$ and our algorithm is a $(\alpha+\gamma)$-approximation.

\paragraph{\kk-supplier.}
Given an $\alpha$-approximate fairlet decomposition of $P$ that assigns a center $c_i$ to each $F_i$ in $\mathcal{F}$, we cannot just compute a $\gamma$-approximate colorblind $k$-center solution $\mathcal{C}'$ of $P':=\{c_i \mid i \in I\}$ as we do not necessarily have $P' \subseteq P$. Instead of $c_i$ we choose an arbitrary point $p_i \in F_i$ as a representative for $F_i$ and compute a $\gamma$-approximate colorblind $k$-center solution $\mathcal{C}'$ of $P':=\{p_i \mid i \in I\}$. 
If $c(p_i) \in L$ is the center assigned to $p_i$ in $\mathcal{C}'$, we obtain a fair clustering $\mathcal{C}$ of $P$ by assigning all $p \in F_i$ to $c(p_i)$.
We can now bound the distance of any point $p \in F_i$ to its center $c(p_i)$ in $\mathcal{C}$ as $d(p, c(p_i)) \leq d(p, p_i) + d(p_i, c(p_i))$ since $d$ is a metric. 
Yet, we have $d(p,p_i) \leq 2\alpha\OPTfl \leq 2\alpha \OPT$ and $d(p_i, c(p_i)) \leq \gamma \OPTcb \leq \gamma \OPT$. 
It follows that $\cost(C) \leq (2\alpha+\gamma)\OPT$ and our algorithm is a $(2\alpha+\gamma)$-approximation. Some of the algorithms for the fairlet decomposition will already assume choose a center as one of the points. In that case it follows that $\cost(C) \leq (\alpha+\gamma)\OPT$ and our algorithm is a $(\alpha+\gamma)$-approximation.

\paragraph{\kk-median.}
The \kk-median case is more difficult: Computing representatives and repeating the above analysis yields an additive error of $\alpha\OPT$ \emph{for each point}, and thus at best a $2\cdot\alpha\cdot|P|$-approximation. 
A different technique is needed here.

We start by computing a $\gamma$-approximate colorblind clustering $\mathcal{C}'$ on $P$ and denote the center assigned to $p \in P$ in $\mathcal{C}'$ by $c(p)$. 
For all $i \in I$ and purely for the analysis we choose the center $c_i$ of $F_i$ in $\mathcal{F}$ as a representative of $F_i$.
We then look for the point 
\begin{align*}
  p_i := \arg\min_{p \in F_i} \bigl[ d(c_i,p) + d(p, c(p))\bigr]
\end{align*}
that among all $p \in F_i$ minimizes the distance from $c_i$ to $c(p)$ via $p$.
This point $p_i \in P$ must have a center $c(p_i)$ in the colorblind clustering $\mathcal{C}'$ and we assign all points $p \in F_i$ to this center $c(p_i)$.
In this way, we obtain a fair clustering $\mathcal{C}$.
Let us first bound the cost of assigning a single point $p \in F_i$ to its center $c(p_i)$ in $\mathcal{C}$.
We have
\begin{align*}
  d(p, c(p_i)) \leq d(p, c_i) + d(c_i, p_i) + d(p_i, c(p_i))
\end{align*}
By our choice of $p_i$, we have $d(c_i, p_i) + d(p_i, c(p_i)) \leq d(c_i,p) + d(p, c(p))$ and it follows that 
\begin{align*}
  \cost(\mathcal{C}) &= \sum_{i \in I} \sum_{p \in F_i} d(p, c(p_i))\\
                     &\leq \sum_{i \in I} \sum_{p \in F_i} \bigl(2d(c_i, p) + d(p, c(p)) \bigr)\\
                     &= 2\cdot \sum_{i \in I} \sum_{p \in F_i} \bigl(d(p, c_i) + \sum_{p \in P} d(p, c(p))\\
                     &\leq 2\cdot \alpha \OPTfl + \gamma\OPTcb.
\end{align*}
Thus, our algorithm yields a $(2\alpha + \gamma)$-approximation.

\paragraph{Facility location.}
Observe that in the above algorithm, we only open the centers from the colorblind solution; the fairlet decomposition does not incur facility opening costs. 
By the same reasoning as before we get
\begin{align*}
  \cost(\mathcal{C}) &= \sum_{i \in I} \sum_{p \in F_i} d(p, c(p_i)) + \sum_{l \in L_{\text{cb}}} f_l\\
                     &\leq 2\cdot \sum_{i \in I} \sum_{p \in F_i} \bigl(d(p, c_i) + \sum_{p \in P} d(p, c(p)) + \sum_{l \in L_{\text{cb}}} f_l\\
                     &\leq 2\cdot \alpha \OPTfl + \gamma\OPTcb.
\end{align*}
where $L_{\text{cb}}$ denotes the set of facilities opened by the colorblind facility location solution.
We obtain a $(2\alpha + \gamma)$-approximation in this case as well.

\paragraph{\kk-means.}
For the $k$-means problem, we assume that $P \subset \R^n$ and that $d$ is the squared Euclidean distance measure.
For each fairlet $F_i$ let $\mu(F_i)$ be its centroid, i.e. $\mu(F_i) = \frac{1}{|F_i|}\sum_{p \in F_i} p$ and let us look at a colorblind $\gamma$-approximative clustering $\mathcal{C}'$ on the instance which contains $|F_i|$ copies of $\mu(F_i)$ for each $i \in I$. 
Without loss of generality, we assume that all of the copies of $\mu(F_i)$ will be assigned to the same center for all $i \in I$ as otherwise assigning all copies of $\mu(F_i)$ to the nearest opened center would only decrease the cost.
We then obtain $\mathcal{C}$ from $\mathcal{C}'$ by assigning all points in $F_i$ to the center $c_{F_i}$ to which the copies of $\mu(F_i)$ were assigned in $\mathcal{C}'$.
The following identity helps us to determine the cost $cost(\mathcal{C})$ of this assignment.

\begin{proposition}
\label{prop:centroid}
Given a point set $P \subseteq \mathbb{R}^d$ and a point $c \in \mathbb{R}^d$, then the $1$-means cost of clustering $P$ with $c$ can be decomposed into
\[\sum_{p \in P} \norm{p - c}^2 = \sum_{p \in P} \norm{p- \mu}^2 + |P| \cdot \norm{\mu - c}^2\]
where $\mu = \frac{1}{|P|} \sum_{p\in P} p$ is known as the centroid of $P$.
\end{proposition}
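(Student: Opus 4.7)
The plan is to exploit the definition of the centroid $\mu$ as the unique point making $\sum_{p \in P}(p - \mu) = 0$, which is the well-known first-order optimality condition for the $1$-means problem. The identity we want is essentially a Pythagorean-style decomposition: it says that clustering with an arbitrary point $c$ costs exactly as much as clustering with the optimum $\mu$, plus a correction term measuring how far $c$ is from $\mu$, scaled by the number of points.

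Concretely, I would rewrite each term $\norm{p - c}^2$ by inserting $\pm \mu$ and expanding:
\begin{equation*}
\norm{p - c}^2 = \norm{(p - \mu) + (\mu - c)}^2 = \norm{p - \mu}^2 + 2\langle p - \mu,\, \mu - c\rangle + \norm{\mu - c}^2.
\end{equation*}
Summing this equality over all $p \in P$ gives three terms. The first sum is exactly $\sum_{p \in P} \norm{p - \mu}^2$. The third sum is $|P|\cdot\norm{\mu - c}^2$ since the summand does not depend on $p$. This matches the two terms on the right-hand side of the claimed identity.

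The key step, and the only place where the specific choice of $\mu$ enters, is showing that the cross term vanishes. Pulling the fixed vector $\mu - c$ out of the inner product gives
\begin{equation*}
\sum_{p \in P} 2\langle p - \mu,\, \mu - c\rangle = 2\Bigl\langle \sum_{p \in P}(p - \mu),\, \mu - c\Bigr\rangle,
\end{equation*}
and by the definition $\mu = \frac{1}{|P|}\sum_{p \in P} p$, we have $\sum_{p \in P}(p - \mu) = \sum_{p \in P} p - |P|\mu = 0$, so the inner product is zero. Combining the three contributions yields the stated identity.

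There is no serious obstacle here; the proof is essentially a one-line expansion combined with the defining property of the centroid. The only thing to be careful about is ensuring that the inner product is bilinear and that $\mu - c$ can legitimately be factored out of the sum, which holds in any inner product space and in particular in $\mathbb{R}^d$.
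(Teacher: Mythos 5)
Your proof is correct and complete. The paper itself does not prove Proposition~\ref{prop:centroid} (nor its duplicate, Lemma~\ref{lemma:magicformula}); it explicitly describes the identity as a folklore statement and takes it as given. Your argument---expanding $\norm{p-c}^2 = \norm{(p-\mu)+(\mu-c)}^2$, summing, and observing that the cross term vanishes because $\sum_{p\in P}(p-\mu)=0$ by the definition of the centroid---is precisely the standard derivation of this Pythagorean decomposition, and it fills the gap cleanly.
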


Proposition~\ref{prop:centroid} implies that the cost to assign all points $p \in F_i$ to $c_{F_i}$ is equal to
\begin{align*}
\cost(\mathcal{C}) &= \sum_{i \in I} \sum_{p \in F_i} \norm{p - c_{F_i}}^2\\
&= \sum_{i \in I} \bigl[ \sum_{p\in F_i} \norm{p-\mu(F_i)]}^2 + |F_i|\cdot \norm{\mu(F_i) - c_{F_i}}^2\bigr]\\
&= \underbracket{\sum_{i \in I} \sum_{p \in F_i} \norm{p - \mu(F_i)}^2}_{\leq \cost(\mathcal{F})} + \underbracket{\sum_{i \in I} |F_i|\cdot \norm{\mu(F_i) - c_{F_i}}^2}_{\leq \cost(C')}
\end{align*}
In other words, the cost of $\mathcal{C}$ is at most the joint cost of $\mathcal{F}$ and $\mathcal{C}'$.
Since we know that $cost(\mathcal{F})$ is bounded by $\alpha \OPT$ it remains to bound $cost(\mathcal{C}')$, the clustering on the centroids.
To that aim, let us consider an optimum colorblind clustering of $P$ as a reference; denote its set of centers by $C$. 
For each point $p \in P$ let $c(p)$ denote the center in $C$ closest to $p$.\\
For any fixed $p \in F_i$ assigning all copies of $\mu(F_i)$ to $c(p)$ yields a cost of 
\begin{align*}
|F_i|\cdot \norm{\mu(F_i)-c(p)}^2 \le |F_i|\cdot( 2\norm{\mu(F_i)-p}^2 + 2\norm{p-c(p)}^2)
\end{align*}
by the 2-relaxed triangle inequality.
In each fairlet $F_i$, we now look at those points whose distance to $\mu(F_i)$ is at least the median of these distances accross $F_i$.
Formally, we define for each $F_i \in \mathcal{F}$:
\begin{align*}
  N_i := \Bigset{ p \in F_i }{ \exists\ \Bigl\lfloor\frac{|F_i|}{2}\Bigr\rfloor \text{ points $q \not=p$ with } \norm{p - \mu(F_i)}^2 \le \norm{q - \mu(F_i))}^2}
\end{align*}
and let $\bar{p}_i := \arg\min_{p \in N_i} \norm{p - c(p)}^2$ be a point with minimum distance to its center in $C'$, among all points in $N_i$.
With that we obtain $|F_i|\cdot \norm{\mu(F_i) - \bar{p}_i)}^2 \le 2\sum_{p \in F_i} \norm{\mu(F_i) - p}^2$ and $|F_i|\cdot \norm{\bar{p}_i - c(\bar{p}_i)}^2 \le 2\sum_{p \in F_i} \norm{p - c(p)}^2$ for all $i\in I$. 
This implies that there exists a clustering on the centroids with a cost of at most 
\[\sum_{i\in I} \Bigl(4\cdot\sum_{p\in F_i} \norm{\mu(F_i) - p}^2\Bigr) + \sum_{i \in I}\Bigl( 4\cdot\sum_{p \in F_i} \norm{p - c(q)}^2\Bigr) .\]
The first sum is at most $4$ times the cost of $\mathcal{F}$ and therefore bounded by $4 \alpha\OPT$.
The second sum is at most $4$ times the cost of an optimal (colorblind) clustering and is therefore bounded by $4\OPT$.
This shows that there exists a clustering on the centroids with a cost of at most $4(\alpha +1)\OPT$. 
The $\gamma$-approximation therefore has a total cost of at most $\gamma\cdot 4(\alpha +1)\OPT$ and the computed clustering on $P$ has cost of at most $(4\gamma(\alpha + 1) + \alpha)\OPT$.
Therefore, it is a $(4\gamma(\alpha + 1) + \alpha)$-approximation.

\subsection{Computing fairlet decompositions}

Having reviewed fairlet based black box approximations for the different clustering objectives in the previous section, we now turn to the implementation. 
What concrete approximation ratios do we obtain once we fill in the black boxes with the state-of-the-art?

\subsubsection{The happy world of \kk-center\label{appendix:sec:knownresults}}
The colorblind variant of the \kk-center problem has a 2-approximation algorithm~\cite{G85,HS86} and thus, it remains to find an $\alpha$-approximative fairlet decomposition: Together with the results from the previous section, we then obtain a $(2+\alpha)$-approximation for the fair $k$-center problem.
All of this section is about known results, or at least algorithms that achieve the same guarantee as the known results. The prominent feature of the $k$-center problem is the \emph{threshold graph}. Observe that for $k$-center, the optimum \emph{cost} is a pairwise distance: It is the maximum distance between a center (which is an input point itself) and the furthest away point assigned to it. Thus, we can guess the optimum cost by iterating through all $\Theta(n^2)$ pairwise distances. \lq Guessing\rq\ means that we try the $\Theta(n^2)$ values, assume in each run that our guess is the optimum and compute a solution based on this assumption. Depending on the actual algorithm, some runs may fail (indicating that we had the wrong value), and the other runs will give solutions of different quality; we can then pick the best solution, which can only be better than the solution of the run where we indeed had guessed the value correctly. This trick is used in many papers on $k$-center. The usual way to make use of the guessed optimum value -- which we name $\tau$ -- is to build a threshold graph $G_\tau$ where points are connected iff they are at distance $\le \tau$ and then observe that two points can only be in the same optimum cluster if they are connected by a path of length two in $G_\tau$, or, equivalently, if they are connected in $G_\tau^2$.

\paragraph{Two colors.} Now let's start with the case that is studied in~\cite{CKLV17}, i.e., we assume that $|Col|=2$, say $Col=\{R,B\}$. We can then easily define the \emph{balance} of a set $Q$ as
\[
\balance(Q) = \min \left\{ \frac{|R(Q)|}{|B(Q)|},\frac{|B(Q)|}{|R(Q)|}\right\}\in [0,1],
\]
and the balance of a clustering as the smallest balance of any cluster in it. 
Now the main question is what balance we want to achieve. In~\cite{CKLV17}, two cases are distinguished: a) the input $P$ has balance $1$, and we want to compute a clustering which also has exactly balance $1$ and b) we want a clustering with balance $1/t$, but we do not know anything about the balance of $P$. If $\balance(P) < 1/t$, there is no feasible solution, and if $\balance(P) > 1/t$, then we allow additional imbalance.

When considering the difficulty of the approximation problem, it makes sense to distinguish three different cases. For the purpose of later referencing, we state them in the following definition.
\begin{definition}\label{def:twocolor:cases}
 For the two color case, we distinguish the following cases of a fair clustering. These are the cases where we want a clustering
\begin{enumerate}
\item with balance $\balance(P)$, and $\balance(P)=1$,\label{case-one}
\item with balance $\balance(P)$, and $\balance(P)=1/t$ for an integer $t$, \label{case-two}
\item with balance $\balance(P)$, and $\balance(P)=s/t$ for integers $s \le t$, \label{case-two-b}
\item with balance $b \le \balance(P)$, and $b=1/t$ for an integer $t$, or\label{case-three}
\item with balance $b \le \balance(P)$. \label{case-four}
\end{enumerate}   
\end{definition}

In cases~\ref{case-three} and~\ref{case-four}, the target balance $b$ may in particular be smaller than $\balance(P)$. 
Notice that for any finite point set, $\balance(P)=s/t$ for some integers $s,t$, so the real restriction in case~\ref{case-two-b} is that we want to match $\balance(P)$ exactly.
The cases~\ref{case-one},~\ref{case-two},~\ref{case-two-b} involve \emph{exact} fair clustering. Notice that these models force us to exactly match the proportion of red and blue points of $P$. Under this condition, it might not even be possible to find $k$ clusters; thus, the optimal solution may well just consist of $P$. This trivial clustering, however, is always a feasible solution, so the problem is well-defined.

\textbf{{A first algorithm for case~\ref{case-one}.}} 
In case~\ref{case-one}, a good fairlet decomposition is particularly easy to obtain: It consists of a matching between the red and the blue points. 
Assume that $C_1,\ldots,C_k$ is the optimal fair clustering of $P$ with radius $r$. Then in particular, each $C_i$ consists of an equal number of red and blue points. Thus, there exists  a bipartite matching between the red and blue points where both endpoints of a matching edge are always in the same optimum cluster. 

The most intuitive idea is to recover this matching (or a better one) by considering the pairwise distances. Since we consider $k$-center in this paragraph, we guess the optimum cost $\tau$ as described above. Under the assumption that we guessed $\tau$ correctly, we know that any two points in the same optimum cluster are at distance at most $2\tau$. Thus, we just look for a perfect matching in $G_\tau^2$. If none exists, then $\tau$ was wrong. So it suffices to find the smallest $\tau$ for which we succeed. 
Now define a fairlet $F=\{r_i, b_i\}$ with center $r_i$ for each edge $e=\{r_i, b_i\}$ in the perfect matching $M$.
Since $e$ is part of $G_\tau^2$, the distance between $r_i$ and $b_i$ -- and thus the radius of $F$ -- is at most $2\tau = 2\OPT$, where $\OPT$ denotes the cost of an optimum fair clustering as before.
This means that our fairlet decomposition is 2-approximate since its cost is bounded by $2\OPT$.
In total, we obtain a $4$-approximation for the fair $k$-center problem. 

\textbf{An improved algorithm for case~\ref{case-one}.}
Chierichetti et al.~\cite{CKLV17} claim a $3$-approximation for $k$-center, yet the above approach only yields a $4$-approximation guarantee. 
We now develop an improvement. 
Again, partition the points into pairs of one red and one blue point from the same optimal cluster. 
Let $r, b$ be such a pair. 
Observe that then there is a point $c$ (a center in the optimum solution) such that $\dist(r,c)\le OPT$ and $\dist(b,c)\le OPT$. 
Thus, instead of looking at the pairwise distances between all red and blue points, we do something else. 
For each pair $r,b$, we compute the point $x= x(r,b) = \arg \min_{x \in P} \max\{\dist(r,x),\dist(b,x)\}$. 
Also, we set $c(r,b)=\max\{\dist(r,x),\dist(b,x)\}$. 
Then we know that if $r$ and $b$ are in the same optimum clustering, then $c(r,b) \le OPT$.
We build the threshold graph $(G_\tau')^2$ slightly differently now by including an edge for every pair $r,b$ with $c(r,b) \le \tau$. 
A perfect matching in this graph exists iff $\tau \ge OPT$. 
Thus, we search for the smallest $\tau$ such that we find a perfect matching in $(G_\tau')^2$ and then know that in this matching, every point is at distance $\le \tau$ from his partner. 
We again create a fairlet $F=\{r,b\}$ for each edge $e=\{r,b\}$ in the matching; this time, however, we assign $c(r,b)$ as the center of $F$ in the decomposition. 
Since now every fairlet has a radius of at most $\tau=\OPT$, the decomposition is exact and we obtain a $3$-approximation overall. We can apply the same approach to the fair $k$-supplier problem when for each pair $r,b$, we compute the point $x= x(r,b) = \arg \min_{x \in L} \max\{\dist(r,x),\dist(b,x)\}$ and set $c(r,b)=\max\{\dist(r,x),\dist(b,x)\}$. Since the computed decomposition is then exact and we there exist $3$-approximation algorithms for the standard $k$-supplier problem~\cite{HS86}, we obtain a $5$-approximation.

\textbf{Cases \ref{case-two} and~\ref{case-three}.}
In the happy $k$-center world, cases \ref{case-two} and \ref{case-three} of Definition~\ref{def:twocolor:cases} can both be $4$-approximated: Chierichetti et al.~\cite{CKLV17} give a $4$-approximation by using a minimum cost flow. 
We describe a slightly simpler algorithm which assumes that we know the majority color (or compute it as a first step). Without loss of generality, assume that blue is the majority color, i.e., at least half of the points are blue.

The difference to case \ref{case-one} is that the input can no longer be broken into \emph{pairs} of red and blue points. However, we still know that the points can be partitioned into groups of one red and $\le t$ blue points. More precisely, we know: In any clustering satisfying the fairness constraint, a cluster $C$ with $b(C)$ blue points must have at least $\lceil b(C)/t\rceil$ red points. Thus, we can take the optimum solution and partition each optimum cluster in subgroups of one red point together with $\le t$ blue points. Now these subgroups are exactly the fairlets we want to find.

To obtain the partitioning, we set up a flow network built upon the threshold graph. We start by adding all red points as vertices and all blue points as vertices. We add threshold edges between red and blue points in the same manner as in the easy algorithm: For every pair $r,b$, we add an edge if $\dist(r,b) \le 2\tau$ (i.e., if they are connected in $G_\tau^2$). This edge gets a capacity of one. Next, we add a source and a sink. The source is connected to all red points with an edge of capacity $t$: This is the maximum number of points it can facilitate. The blue points are connected to the sink by an edge of capacity $1$. We then compute a maximum flow in this graph. Let $n_b$ be the number of blue points. If the flow has a value less than $n_b$, we know that $\tau$ is wrong: For $\tau \ge OPT$, we know by the above argumentation that we can group the points accordingly and define a flow where every blue point is facilitated by some red point. So we compute the smallest $\tau$ for which the maximum flow has value $n_b$ (it cannot be higher since that is the capacity of the edges going into the sink). Using the red points as centers, this step gives us fairlets with radius $\le 2 \OPT$ and we obtain a $4$-approximation. For $k$-supplier we can again use the same approach. As we already use some of the points as centers we obtain a $5$-approximation.

\textbf{Case~\ref{case-two-b}.}
If $\balance(P)=s/t$, then the above maximum flow idea fails since we lose the anchor node that collects the points of each fairlet. 
However, in the happy world of $k$-center, we can use capacitated clustering to fill the gap. 
The general idea is simple (use capacitated clustering to find fairlets), however, the execution is trickier than one might expect.
The following is a summary of the approximation algorithm given in Section~4.2 in~\cite{RS18}.
The algorithm in ~\cite{RS18} works for multiple colors, but in this section, we only discuss two color variants.

Again, we assume that blue is the majority color, which means that $|R(P)|=n_f \cdot s$ and $|B(P)|=n_f \cdot t$ for some integers $n_f,s,t$ with $gcd(s,t) = 1$. 
We consider an optimal solution. 
Then each cluster in this solution has $j\cdot s$ red and $j \cdot t$ blue points for $j \in \mathbb{N}$, and we can group it into fairlets with $s$ red and $t$ blue points. 
In particular, this grouping induces a uniform capacitated clustering of the red points into $n_f$ clusters with capacity $s$, and a uniform capacitated clustering of the blue points into $n_f$ clusters with capacity $t$, and the radius of both these clusterings is at most $\OPT$.

There is one subtlety, though: the centers of these capacitated clusterings are not necessarily of the same color. 
In the optimum fair clustering, the center of each cluster may be different, sometimes red, sometimes blue. 
This means that the capacitated clustering problems that are induced by the optimum solution are not uniform capacitated \emph{$k$-center} solutions, but they are uniform capacitated \emph{$k$-supplier} solutions: 
For the $k$-supplier problem, the centers do not have to be input points, instead, the input comes with a set of points $P$ and a set of possible center locations $L$ as we indicated at the beginning of the preliminaries. 
The capacitated $k$-supplier problem now asks to minimize $\max_{x \in P} d(x,\phi(x))$ by choosing a set $C \subset L$, $|C|\le k$ and an assignment of the points in $P$ to the centers in $C$ that respects the capacities of the centers in $C$.

Thus, we know that the optimum solution to the following uniform capacitated $k$-supplier problem costs at most $OPT$: Let $P$ be the red points, let $L$ be the union of red and blue points, and set the capacity of every center to $s$. Thus, we use a subroutine for this problem. In the following, we will be okay if the clustering uses soft capacities, i.e., it may open centers multiple times. A solution for this problem can only be cheaper, and we won't be using the centers as centers in the final solution anyway, so the relaxation does not hurt us.

Khuller and Sussmann~\cite{KS00} give a $5$-approximation for the uniform soft capacitated $k$-center problem. Unfortunately, we need the $k$-supplier version. Taking a $k$-center solution for a $k$-supplier version can at most double the approximation factor; thus, we use Khuller and Sussmann's algorithm, but it gives a $10$-approximation for us. The smarter way would be to \emph{adapt} Khuller and Sussmann's algorithm to work for the $k$-supplier variant. We conjecture that this might give a $7$-approximation instead of the $10$.

We apply the algorithm to the red points and obtain clusters of size $s$. 
Now we need to assign $t$ blue points to each red cluster. 
This, however, can be done by computing a matching. 
We construct a bipartite graph. 
On the red side, it contains $t$ copies of each center, i.e., in total, it contains $t \cdot n_f = b(P)$ red points. 
The blue side just consists of the blue points. 
For the edges, we again use thresholding. 
For threshold $\tau$, we add an edge between a red point and a blue point if their distance is at most $\tau$. 
Now we compute the smallest $\tau$ for which the resulting graph has a perfect matching. 
This matching assigns one blue point to every center-copy, resulting in $t$ blue points assigned to every red cluster. 

What's the quality of the resulting solution? 
We know something like this:
We know that in the optimum solution, there is some other clustering of the red points which is coupled with a clustering of the blue points. 
Now if we had the correct red clustering, the distance between a blue point and all red points in the same optimum cluster would be at most $2\cdot\OPT$, leading to an overall approximation ratio of $12$. 
We can still get a similar statement by using Hall's theorem. 
The details of this and a overall formally more careful variant of this proof is in Theorem 22.

This finally gives us fairlets with $s+t$ points and a corresponding center. 
The radius of each fairlet is at most $12\cdot\OPT$ and we obtain a $14$-approximation in total.
For the $k$-supplier version we obtain a $15$-approximation.

\textbf{Case~\ref{case-four}.}
Even in the happy $k$-center world no constant factor approximation algorithm is known for case~\ref{case-four}; in particular, we do not know how to find a good fairlet decomposition.
The problem here is that we do not even know that the optimal clustering can be partitioned into small balanced subsets. 

\paragraph{Multiple colors}
We assume without loss of generality that $|col_0(P)| \le |col_i(P)|$ for all $1 \le i \le g$. Rösner and Schmidt~\cite{RS18} showed that case~\ref{case-two} with two colors and $balance(P) = 1/t$ for an integer $t$ can be generalized to an arbitrary number of colors with $\frac{|col_i(P)|}{|col_0(P)|} \in \mathbb{N}$ for all $1 \le i \le g$. 
We still know that the points in each optimal cluster can be partitioned into groups of one point with color $col_0$ and $\frac{|col_i(P)|}{|col_0(P)|}$ points with color $col_i$. 
The idea is to assign points with color $col_h \in Col \setminus\{col_0\}$, independently of points with a different color, to the points with color $col_0$. In the end all points connected to the same point in $col_0(P)$ build a fairlet.
To do so Rösner and Schmidt~\cite{RS18}  set up a flow network analogously to case~\ref{case-two}: 
It is built upon the threshold graph for each color $col_i \in Col \setminus\{col_0\}$ which contains all points with the colors $col_0$ or $col_i$. 
They then chose the smallest threshold for which the corresponding networks for all colors $col_i \in Col \setminus\{col_0\}$ are successful.
This again results in a fairlet decomposition with radius $\le 2 OPT$ and a $4$-approximation for the fair $k$-center problem as well as a $5$-approximation for the fair $k$-supplier variant.

The case~\ref{case-two-b} with $r > 1, b > 1$ can similarly be generalized to instances with arbitrary many colors~\cite{RS18}.
Let $r = \frac{|col_0(P)|}{gcd(|col_0(P)|,\ldots,|c_{g}(P)|)}$, then we know that the points in each optimal cluster can be partitioned into groups of $r$ points with color $col_0$ and $\frac{|col_i(P)|}{gcd(|col_0(P)|,\ldots,|col_{g}(P)|)}$ points with color $col_i$. The approach is then again to compute a clustering on the points with color $col_0$ in which every cluster contains exactly $r$ points. Dealing with the other colors independently of each other the same approach as in case~\ref{case-two-b} can be used to match $\frac{|col_i(P)|}{gcd(|col_0(P)|,\ldots,|col_{g}(P)|)}$ points with color $col_i$ to each of these sets of $r$ points with color $col_0$.
Again, with the $5$-approximation algorithm for the capacitated \kk-center problem \cite{KS00} this creates a fairlet decomposition with cost of at most $12 \OPT$ and $14$ and $15$-approximations for fair $k$-center and fair $k$-supplier.

Theoretically, case~\ref{case-three} can also be generalized to a case where we require that $\frac{|col_0(P')|}{|col_i(P')|} \ge 1/t_i$ for some integer $t_i$ for all $1 \le i \le g$. This way every cluster in the optimal solution can be partitioned into fair subsets out of which each contains exactly one point with color $col_0$ and at most $t_i$ points of color $col_i$. The generalization of the approach is then identical with the generalization of case~\ref{case-two}. However, this generalization can not be described through our ratio based problem definition with $l,u$-balanced clusters, as in the ratio based problem definition the points with different colors can not be treated independently of each other.

\subsubsection{Fairlet decompositions for \kk-median, facility location and the \kk-means problem}
We now turn to computing fairlet decompositions for clustering objectives beside $k$-center and assume that we want to compute exact fairlets (i.e., not $l,u$-fairlets) in this section.
\paragraph{Instances with two colors}
Let us first consider the simple case where we only have two colors, i.e. $Col=\{red,blue\}$.
In that case we have $\balance(P)=r/b$ for two integers $r,b \in \mathbb{N}$ and $\gcd(r,b)=1$.

\textbf{The case \texorpdfstring{$r=b=1$}{r=b=1}.}
If the balance is $1$, then each fairlet consists of exactly one blue and one red point.
We can therefore model the fairlet decomposition problem as a matching problem as before:
We construct a complete bipartite graph in which the red and blue points make up one partition, repectively.
The cost of an edge between a red point $p$ and a blue point $q$ is the distance between $p$ and $q$. 
For $k$-median/facility location, this cost is equal to assigning both $p$ and $q$ to $p$.
Assigning $p$ and $q$ to any other center cannot yield lower costs by the triangle inequality.
In the $k$-means case, we compute the cost of assigning both $p$ and $q$ to each possible center $c$ and then define the cost of $\{p,q\}$ as the minimum, i.e. $\cost(\{p,q\}) = \min_{c \in L} \norm{p-c}^2 + \norm{q-c}^2$.
We then compute a minimum cost perfect matching to obtain a fairlet decomposition; the cost of the matching is the cost of the decomposition. 
Unfortunately, this is the only case leading to a constant factor approximation algorithms so far.

\textbf{The case \texorpdfstring{$r=1,b > 1$}{r=1, b > 1}.}
If the balance is $1/b$ for an integer $b$, we can solve the fairlet decomposition problem similarly: 
Analogously to the $k$-center case, we start with the $n_b$ many blue points and cluster them into sets with exactly $b$ points each. 
This, again, could be done with an approximation algorithm for the uniform capacitated problem (i.e., \kk-median and \kk-means).
Unfortunately, we are not aware of any such algorithm.
Still, if we set $k$ to $n_b/b$ and the uniform capacity to $b$, then any $\gamma$-approximation to the capacitated problem has cost of at most $\gamma \OPT$, as observed before. 
We can then collect the center of each of the clusters of $b$ blue points, and similarly to the case with $r=b=1$ compute a perfect matching between the red points and these centers. 
Hall's theorem then shows that it is possible to assign one red point $r$ to every blue cluster while guaranteeing that at least one of the blue points is in the same cluster with $r$ in the optimal fair solution. 
For $k$-median/facility location the cost incurred by $r$ is then at most its cost in the optimal solution, plus the cost incurred by its assigned blue point in the optimal solution, plus the cost incurred by the assigned blue point in the computed approximation. 
This gives us a total cost of at most $(2\alpha +1)\OPT$. 
For $k$-means the cost of the red point is then at most four times its cost in the optimal solution, plus four times the cost of its assigned blue point in the optimal solution, plus four times the cost of the assigned blue point in the computed approximation. 
This yields a total cost of at most $4(2\alpha +1) \OPT$.

\textbf{The case \texorpdfstring{$r > 1, b >1$}{r>1,b>1}.}
We further extend the previous case. 
Without loss of generality we assume $b \ge r$, i.e. that blue is the majority color.
Using an approximation for capacitated colorblind clustering, we again cluster the blue points into sets of size $b$ and then create $r$ copies of each center. We can then compute a minimum cost perfect matching between the red points and the centers.

Similar to before Hall's theorem shows that there exists an injective assignment of red points to a blue point in the same optimal fair cluster such that exactly $r$ red points are assigned to the blue points in every blue cluster. 
For $k$-median/facility location the cost of the red point is then at most their cost in the optimal solution plus the cost of their assigned blue point in the optimal solution plus the cost of the assigned blue point in the computed approximation. This gives us a total cost of at most $(2\alpha +1)\OPT$. 
For $k$-means the cost of the red point is then again at most four times this term, which makes the total cost at most $4(2\alpha +1)\OPT$.

\textbf{Instances with arbitrary many colors}
Let $r_i := \frac{|col_i(P)|}{gcd(|col_0(P)|,\ldots,|col_{|Col|}(P)|)}$ be the number of points with color $col_i$ each fairlet should have.
We choose one of the colors $col_i$ and again start by clustering the points into sets of size $r_i$. For each other color $col_j$ ($i \neq j$) we then create $r_j$ copies of each center and compute a minimum cost perfect matching between the points with color $c_j$ and the centers.
We do this for every possible choice of color $col_i$ and then take the best solution that we obtained.
Again Hall's theorem shows that there exists an assignment of points with a color other than $col_i$ to a point with color $col_i$ in the same optimal fair cluster with the following property: Exactly $r_j$ points with color $col_j$ are assigned to the points with color $col_i$ in every cluster, and the number of points assigned to a point with color $col_i$ is either $\left\lfloor \frac{n - |col_i(P)|}{|col_i(P)|}\right\rfloor$ or $\left\lceil \frac{n - |col_i(P)|}{|col_i(P)|} \right\rceil$. For $k$-median/facility location the cost of a point with a color other than $col_i$ is then at most its cost in the optimal solution, plus the cost of its assigned point with color $col_i$ in the optimal solution, plus the cost of the assigned point in the computed approximation.
Assuming that $col_i$ is the color whose points have the smallest average cost in the optimal fair clustering this implies that the total cost is at most $(3\alpha +3) \OPT$. 
For $k$-means the cost of a point with a color other than $col_i$ is then at most four times its cost in the optimal solution, plus four times the cost of its assigned point with color $col_i$ in the optimal solution, plus four times the cost of the assigned point in the computed approximation.
We again assume that $col_i$ is the color whose points have the smallest average cost in the optimal fair clustering which implies that the total cost is at most $4(3\alpha +3) \OPT$.

\end{document}